\newtcolorbox{exbox1}{%colback=peach!50!white, 
leftupper=1mm
boxsep=0.0pt, left=1mm, right=1mm, top=0mm, bottom=1mm,
fontupper={\footnotesize\textbf{Example 1.}},
halign lower=center,  colback=white, arc=0pt, middle=0mm,
boxrule=0.5pt}
\newtcolorbox{exbox2}{%colback=peach!50!white, 
detach title, 
boxsep=0.5pt, left=1mm, right=1mm, top=1mm, bottom=1mm,
fontupper={\footnotesize\textbf{Example 2.}},
halign lower=center,  colback=white, arc=0pt,
boxrule=0.5pt}
\tiny\color{gray},
\def\BibTeX{{\rm B\kern-.05em{\sc i\kern-.025em b}\kern-.08em
    T\kern-.1667em\lower.7ex\hbox{E}\kern-.125emX}}
\newcommand{\todo}[1]{{\color{red}{[TODO:}{#1]}}}
\newcommand{\ex}{{eq}}
\newcommand{\cl}{{cl}}
\newcommand{\EqUsr}{\mathcal{E}_{Usr}}
\newcommand{\EqDH}{\mathcal{E}_{DH}}
\newcommand{\Eqsimp}{\mathcal{E}_{simp}}
\newcommand{\Ind}{Ind}
\newcommand{\unify}{\operatorname{Unify}}
\newcommand{\Gen}{\operatorname{Gen}}
\newcommand{\NoCanc}{\operatorname{NoCanc}}
\newcommand{\toPoly}{\operatorname{toPoly}}
\newcommand{\dhInv}{\operatorname{inv}}
\newcommand{\dhExp}{\operatorname{exp}}
\newcommand{\roots}{\operatorname{rt}}
\newcommand{\rt}{\mathrm{rt}}
\newcommand{\dheq}{\circeq}
\newcommand{\R}{\mathcal{R}}
\newcommand{\E}{\mathcal{E}}
\newcommand\restrict[1]{\raisebox{-.5ex}{$|$}_{#1}}
\newcommand{\crule}[3]{\frac{\ #1\ }{\ #2\ }\ #3}
\newtheorem{definition}{Definition}
\newtheorem{theorem}{Theorem}
\newtheorem*{theorem*}{Theorem}
\newtheorem{lemma}{Lemma}
\newtheorem{defi/}{Definition}
    \newtheoremstyle{TheoremNum}
        {\topsep}{\topsep}              %%% space between body and thm
        {\itshape}                      %%% Thm body font
        {}                              %%% Indent amount (empty = no indent)
        {\bfseries}                     %%% Thm head font
        {.}                             %%% Punctuation after thm head
        { }                             %%% Space after thm head
        {\thmname{#1}\thmnote{ \bfseries #3}}%%% Thm head spec
    \theoremstyle{TheoremNum}
\theoremstyle{definition}
\newtheorem{remark}{Remark}
\def\thmhead@plain#1#2#3{%
  \thmname{#1}\thmnumber{\@ifnotempty{#1}{ }\@upn{#2}}%
  \thmnote{ {\the\thm@notefont#3}}}
\let\thmhead\thmhead@plain
\newtheorem*{rul}{Rule}
\let\@old@begintheorem=\@begintheorem
\def\@begintheorem#1#2[#3]{%
  \gdef\@thm@name{#3}%
  \@old@begintheorem{#1}{#2}[#3]%
}
\def\namedthmlabel#1{\begingroup
   \edef\@currentlabel{\@thm@name}%
   \label{#1}\endgroup
}
\theoremstyle{definition}
\begin{document}

\title{Beyond the Finite Variant Property:\\
Extending Symbolic Diffie-Hellman Group Models}

\author{\IEEEauthorblockN{Sofia Giampietro}
\IEEEauthorblockA{\textit{ETH Zurich} \\
%\textit{name of organization (of Aff.)}\\
Zurich, Switzerland \\
sofia.giampietro@inf.ethz.ch}
\and
\IEEEauthorblockN{Ralf Sasse}
\IEEEauthorblockA{\textit{ETH Zurich} \\
%\textit{name of organization (of Aff.)}\\
Zurich, Switzerland \\
ralf.sasse@inf.ethz.ch}
\and
\IEEEauthorblockN{David Basin}
\IEEEauthorblockA{\textit{ETH Zurich} \\
%\textit{name of organization (of Aff.)}\\
Zurich, Switzerland \\
basin@inf.ethz.ch}
}
\maketitle
%\begin{abstract}
%This paper extends the Tamarin prover with support for the full Diffie-Hellman theory (including group element multiplication - and hence addition of exponents). The approach taken is however generally applicable in the symbolic model. We illustrate the effectiveness of our approach with different case studies, notably protocols using the ElGamal encryption scheme or HMQV, previously out of scope of state-of-the-art protocol verifiers.  
%\end{abstract}

\begin{abstract}
  Diffie-Hellman groups are commonly used in cryptographic  protocols.  While most state-of-the-art, symbolic protocol verifiers support 
them to some degree, they do not support all mathematical operations possible in these groups. In particular, they lack support for exponent addition, as these tools reason about terms using unification, which is undecidable in the theory describing all Diffie-Hellman operators.

In this paper we approximate such a theory and propose a semi-decision procedure to determine whether a protocol, which may use \emph{all} operations in such groups, satisfies user-defined properties. We implement this approach by extending the Tamarin prover to support  the full Diffie-Hellman theory, including group element multiplication and hence addition of exponents. This is the first time a state-of-the-art tool can model and reason about such protocols.
We illustrate our approach's effectiveness with different case studies:  ElGamal encryption and MQV. 
Using Tamarin, we prove security properties of ElGamal, and we rediscover known attacks on MQV.  \end{abstract}

%\begin{IEEEkeywords}
%component, formatting, style, styling, insert.
%\end{IEEEkeywords}
\section{Introduction}
\label{sec:intro}

Diffie-Hellman groups and their algebraic properties are the basis of many key exchange protocols, %(classic Diffie-Hellman, MQV, password-authenticated key-exchanges, etc.),
signature schemes, and symmetric encryption schemes. %(e.g. ElGamal encryption and signature schemes). 
Diffie-Hellman groups are cyclic groups,  defined by a group operation $\cdot:G\times G \rightarrow G$ and a generator $g$.  For a natural number $n$, we define $g^n$ to denote $g\cdot ...\cdot g$ applied $n$ times.\footnote{For some groups (notably elliptic curves) it is common to denote the group operation with the addition sign $+:G\times G \rightarrow G$ and to denote the $n$-fold addition $g+..+g$ by $ng$. In this paper, we will use the \emph{multiplicative} notation and hence call this operation `multiplication'. Note that this operation is generally different from the usual multiplication (of natural numbers) between exponents. }
By the definition of the group operators, $(g^a)^b=g^{ab}$ and $g^a\cdot g^b=g^{a+b}$. Diffie-Hellman groups must furthermore satisfy the following property: finding $a$ from $g^a$ must be computationally hard. Examples of such groups are some modular arithmetic groups and some elliptic curves.   

Given the widespread use of Diffie-Hellman groups in security protocols, there has been considerable effort to support these group operations in automatic protocol verification tools. Most state-of-the-art tools (Tamarin, ProVerif, Maude-NPA, etc.) can model Diffie-Hellman groups in some way. However, their models are restricted: they support exponentiation and multiplication of exponents, but none can model the multiplication of group elements and hence the addition of exponents. %That is, no verification tool currently supports the full structure of exponent addition and multiplication, including distributivity.

%Automatic protocol verifiers are tools that, given as input a model of a protocol and a certain security problem, output whether the protocol satisfies that property. Considering an unbounded number of protocol participants and protocol sessions, this problem becomes undecidable and these tools provide a semi-decision procedure. 

Several challenges make modelling Diffie-Hellman group multiplication with these tools difficult. 
These tools operate in the symbolic model of cryptography, using equational theories to represent cryptographic primitives. Moreover, they rely on unification in their proof search. 
When the multiplication of group elements is considered, the exponents in Diffie-Hellman groups form a field. %Field axioms include a conditional equation involving a negation: all \emph{non-}zero elements have a multiplicative inverse. 
First, it is not straightforward to describe a field by an equational theory %(it requires resorting to conditional equational theories)
 and hence any approach taken will use an equational theory that \emph{approximates} fields.
Second, to be practically useful, such an approximation must include associativity and commutativity of both the addition and the multiplication operator, and distributivity of multiplication over addition.
Unfortunately, this interaction between distributivity and associativity makes unification both undecidable and infinitary \cite{diophantine}. Namely, the unification problem may have as a solution an infinite set of most general unifiers. So even if we had an algorithm enumerating all unifiers, this would be impractical for automated tools that use unification.  

In this paper, we provide a semi-decision procedure for determining whether a protocol model that uses the full Diffie-Hellman group structure satisfies a given property, for a class of user-defined properties. Our approach does \emph{not} rely on direct unification.  Our starting point takes as inspiration results from Dougherty and Guttman \cite{dougherty, Dougherty14Decidability}, who proposed a rewrite system containing both distributivity and associativity that closely approximates fields in the following sense: any theorem that can be proven using this rewrite system (henceforth called $\mathit{DH}$) holds in all Diffie-Hellman group structures. Conversely, any equation that holds in \emph{infinitely many} Diffie-Hellman structures is a theorem of the $\mathit{DH}$ rewrite system. %(beautiful proof in Dougherty Guttman's paper). 
Dougherty and Guttman showed how this system can be used to prove the security of a restricted class of protocols. In our work, we target a wider class of protocols, and show how to integrate this approach into the Tamarin prover \cite{Tamarin}. % In particular, we propose a new set of constraint solving rules that extend Tamarin's existing deductive system, enabling the verification of protocols that make direct use of the group operation. 
%To the best of our knowledge, this is the first time an automated verification tool supports such a mathematical structure.  
To the best of our knowledge, this is the first time an automated verification tool succeeds in analyzing protocols based on this mathematical structure. 

We restrict our results to protocols and properties that satisfy the following two conditions. 
\begin{description}
\item[C1] We assume that all terms that represent group elements belong to the same group. In other words, we assume there is a fixed group generator $g$, and all group elements are of the form $g^{e}$ for some exponent $e$ (possibly containing the addition operator).
\item[C2] Our approach is applicable to security properties that contain terms whose subterms do not cancel each other out. For example, this holds for any term that does not contain the $+$ or $\cdot$ operator. %, or terms that contain such operators but also hashes.
\end{description}

Since most protocols require parties to use a common fixed group, \textbf{C1} is a modest restriction. The assumption \textbf{C2} concerns the properties covered. In particular, it excludes making security verdicts about a term that can cancel itself out, i.e., which can be reduced to $1$ or $0$. It does not exclude the \emph{other} terms in the protocol canceling out. %This covers for example all properties regarding non-composite terms (is a message secret in a protocol that uses group multiplication?) or terms whose particular structure guarantees this condition. 
If a term can reduce to $1$ or $0$, then it is likely that it will not satisfy many basic security properties (e.g. secrecy). Hence we expect that in most cases, either we find an attack for the property, or the protocol's design and the property specification provide sufficient conditions to guarantee the non-cancellation assumption where necessary. %This restriction is a trade-off we accept in order to tackle an otherwise undecidable problem.

The main idea of our approach is to first work with sub-terms in a simplified equational theory that does \emph{not} contain the group multiplication, which allows us to use Tamarin's current unification algorithm. 
In particular, to determine whether a target term is constructible, we first determine (via unification) whether its subterms are deducible. Since we assume they do not cancel each other out, if there is one such subterm that cannot be deduced, then we can already conclude that the target term cannot be constructed. If all subterms can be deduced, we then determine whether they can be combined, possibly using the group multiplication, to construct the target term. To do so, following \cite{Dougherty14Decidability}, we interpret the symbolic terms as algebraic objects and use algebraic methods, e.g., Gauss elimination. Effectively, we replace the unification problem for terms containing the group operation with a system of algebraic equations. 

We implement this approach by modifying Tamarin's constraint solving algorithm, and we show that the resulting constraint solving relation obtained is still sound and (under the non-cancellation property) also complete, in the sense that if Tamarin returns an attack with respect to a given property, then there is actually a protocol execution that violates the property and if Tamarin declares the protocol secure, then there is no execution violating the property. This holds within our symbolic model, i.e., our extension of Tamarin will find attacks that exploit particular properties of particular groups, rather only properties that can be described by the equational theory. %By Dougherty and Guttman's theorem, this covers all attacks possible in general DH groups, that is, exploiting any property satisfied by infinitely many DH group structures.

To illustrate the effectiveness of our approach, we perform case studies on common protocols that use such Diffie-Hellman group operations, notably ElGamal and MQV. %. We prove ElGamal safe, and we rediscover known attacks on MQV. 

\smallskip
\textbf{Contributions.}
We see our work as making three contributions. First and foremost, we present a symbolic approach for reasoning about security protocols satisfying condition \textbf{C1} that use the full structure of Diffie-Hellman groups, and prove its soundness and completeness under the assumption \textbf{C2}. If the condition \textbf{C2} is not satisfied, Tamarin will still try to prove the protocol's security properties. If a property is disproven, then the attack is a valid attack, unconditionally. If a property is proven, Tamarin will mark which terms fail to be non-cancellable for the security proof to hold. Showing that they do not cancel out requires manual inspection. Alternatively, if the condition \textbf{C2} does \emph{not} hold, Tamarin has then shown that if an attack trace exists, it must leverage those particular terms canceling each other out.

Second, we implement this approach in the Tamarin prover, resulting in the first automated tool that can handle such an equational theory. This required adding new rules to its constraint solving algorithm and combining algebraic equation solving with Tamarin's proof search. 
 
Finally, we carry out case studies to show that this approach and its implementation in Tamarin is effective in practice. Namely, we prove ElGamal secure and we rediscover known attacks on MQV. Both proofs and attacks are found within minutes. These examples illustrate the feasibility, scalability, and efficiency of our approach when applied to complex protocols utilizing not only Diffie-Hellman, but also other primitives such as symmetric encryption.

An executable of our Tamarin extension, its source code, and the case studies are provided at \cite{zenodoDH}.

\section{Previous work}
\subsection{General reasoning for equational theories}
Prior work \cite{Alwen} extended ProVerif with an interface to an external first-order automated theorem prover to support properties modulo equational theories for which a complete unification algorithm is unavailable, by soundly translating these theories into universal axioms in first-order logic. However, these automated provers rely on resolution and will diverge if the axioms generate an infinite chain of inferences. The paper provides no results or experiments for our class of problems. Furthermore, the translation in \cite{Alwen} from the property to the automated theorem prover only covers secrecy properties. 

\subsection{Dougherty and Guttman's rewrite system}
State-of-the-art symbolic protocol verifiers represent messages as terms in a term algebra and operations as function symbols applied to these terms. The term algebra is equipped with an equational theory, representing the algebraic properties of the cryptographic primitives modeled. The equational theories are specified by sets of equations that describe how terms can be transformed or rewritten (by orienting equations left to right) according to these properties. To reason with such theories, the equations are often required to be convergent, namely that every term has a unique normal form (that cannot be reduced further using the equations) according to that theory.

We first introduce the rewrite theory we will use to represents fields, which is a slight modification of the theory presented by Dougherty and Guttman in \cite{dougherty}. We verify this theory is convergent %terminating and confluent 
modulo the associativity and commutativity of the theory's operators (abbreviated by AC) using the Aprove termination tool \cite{aprove} and the Maude Church-Rosser Checker \cite{church-rosser}.
We follow standard notions and conventions from term-rewriting, see \cite{Baader_Nipkow}. In particular $t:S$ indicates a term $t$ of sort $S$. 
%\subsection{Definitions}
\begin{definition}
\label{def:dheq}
Let $\Sigma_{DH}$ be an order-sorted signature with sorts $G$ and $E$, with the following function symbols and constants:
\begin{align*}
& \cdot : G\times G\rightarrow G & & e_G: \rightarrow G \\
&^{-1}: G\rightarrow G & & +,* : E\times E \rightarrow E \\
& 0: \rightarrow E  & & \dhExp: G\times E \rightarrow G \\
& - : E \rightarrow E & &  \mu : G \rightarrow \mathit{E}  \\
& \dhInv: \mathit{E} \rightarrow \mathit{E} & & 1: \rightarrow \mathit{E} 
\end{align*}
The operators obey the following equations:
\begin{itemize}
\item %For all terms $x,y,z$ of sort $G$:
%\begin{align*}
%&x \cdot e_G = x\\
%&x \cdot y = y\cdot x \quad \text{ (comm.)}\\
%&(x\cdot y)\cdot z = x \cdot (y\cdot z) \quad \text{ (assoc.)} \\
%&x \cdot x^{-1} = e_G
%\end{align*}
$(G, e_G, \cdot, ^{-1})$ is an abelian group.
\item %For all terms $x,y,z$ of sort $E$:
%\begin{align*}
%&x + 0 = x && x * 1 = x \\
%&x + y = y + x  \quad \text{ (comm.)} && x * y = y * x \quad \text{ (comm.)}\\
%&(x + y) + z = x + (y + z) \quad \text{ (assoc.)}&& x* (y*z) = (x * y) * z \quad \text{ (assoc.)}\\
%& x - x = 0 && x* (y + z) = x * y + x * z \quad \text{ (distr.)}
%\end{align*}
$(E, +, 0, -, *, 1)$ is a commutative ring with identity.
\item %For all terms $a,b$ of sort $G$ and all terms $x, y$ of sort E:
%\begin{align*}
%&(a^x)^y = a ^{(x*y)} && a ^1 = a && e_G^x=e_G \\
%& (a\cdot b) ^x = a^x \cdot b^x && a^{(x+y)} = a^x \cdot a^y
%\end{align*}
$G$ is a right $E$-module for the operator $\dhExp$:
\begin{align*}
& (g^x)^y = g^{x*y} & &(g_1\cdot g_2)^x = g_1^x\cdot g_2^y & &   \\
& g^1 = g  & &g^{x+y} = g^x\cdot g^y  & & e_G^x = e_G
\end{align*}
\item For all terms $u$ and $v$ of sort $\mathit{E}$:
\begin{align*}
&u * \dhInv(u) = 1 & &  \dhInv(1) = 1  \\
& \dhInv(-u) = -\dhInv(u)  & & \dhInv(\dhInv(u))=u \\
& \dhInv(u*v) = \dhInv(u)*\dhInv(v) & & 
\end{align*}
\end{itemize}
\end{definition}
The operator $\mu$ models a non-invertible function that coerces group elements into exponents. We will refer to this function as a hash function as we symbolically model it as one. We will write $g^x$ for $\dhExp(g,x)$. 
Dougherty's and Guttman's initial theory introduced a sub-sort $\mathit{NZE}$ of $E$ representing non-zero exponents and defined the operator $\dhInv$ only on this sort. We remove this sub-sort in our rewrite theory, and instead ensure that we don't invert the $0$ symbol when solving algebraic equations.  

From the above equations, we can extract rewrite rules by orienting the non-AC equations left-to-right.
Finally, we added further equations derivable from those above to join critical pairs, see  \cite{dougherty}.
% \todo{clarify}
\iffalse
\begin{enumerate}
\item For terms of sort $G$:
\begin{align*}
&(a\cdot b)^{-1}\rightarrow (a^{-1})\cdot(b^{-1}) && (b^{-1})^{-1} \rightarrow b \\
& (a^{-1})^x \rightarrow (a^x)^{-1} & & a^0 \rightarrow e_G \\
& a^{-x} \rightarrow (a^x)^{-1} && (e_G)^{-1}\rightarrow e_G 
\end{align*}
\item For terms of sort $E$:
\begin{align*}
& -(-x) \rightarrow x  && 0 * x \rightarrow x  \\
& -0\rightarrow 0 && (-x)*y\rightarrow -(x*y)\\
& -(x+y) \rightarrow (-x)+(-y) 
\end{align*}
\end{enumerate}
\fi
We denote this rewrite system by $\rightarrow_{DH}$. Terms that are irreducible with respect to $\rightarrow_{DH}$ are said to be in normal form. 

\begin{definition}
An \emph{irreducible monomial} is a term $m_i$ of the form $\pm (e_1*\ldots*e_l)$, where each $e_i$ is either $x$ or $\dhInv(x)$, for $x$ a variable of sort $E$.
\end{definition}

The same proof as in \cite{dougherty} shows that:

\begin{theorem}
The reduction $\rightarrow_{DH}$ is terminating and confluent modulo AC and normal form terms have the following form.  
\begin{itemize}
\item If $e:E$ is in normal form, then $e=m_1+\ldots +m_k$, where each $m_i$ is an irreducible monomial.
\item If $t:G$ is in normal form, then $t=t_1\cdot \ldots \cdot t_k$, where each $t_i$ is either $v$, $v^{-1}$, $v^m$, or $v^{-m}$, for $v$ a variable of sort $G$ and $m:E$ an irreducible monomial.
\end{itemize}
\end{theorem}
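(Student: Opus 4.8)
The plan is to follow the structure of Dougherty and Guttman's argument in \cite{dougherty}, checking at each step that removing the $\mathit{NZE}$ subsort does not invalidate it. The first observation is that dropping the subsort does not change the \emph{set of rules}: it only enlarges the set of well-sorted instances to which the $\dhInv$-rules (and the rules mentioning $\dhInv$) may be applied, so that now $\dhInv$ may be applied to any term of sort $E$, including $0$. Hence any AC-compatible reduction order that orients all the oriented equations over the original three-sorted signature still orients them over the two-sorted one, and termination modulo AC is preserved; this is what the AProVE run \cite{aprove} establishes. In a self-contained hand proof one would instead exhibit such an order explicitly — for instance an RPO modulo AC with a precedence making $\dhInv$, $^{-1}$, $-$ and $\dhExp$ large, or a polynomial interpretation — and verify that each oriented equation is strictly decreasing.

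For confluence I would invoke the standard criterion that a system which is terminating modulo AC and locally confluent and locally coherent modulo AC is confluent modulo AC (a Newman-type lemma for rewriting modulo AC; see \cite{Baader_Nipkow}). Local confluence and coherence reduce to the joinability of the finitely many critical pairs obtained by overlapping the rules with each other and with their AC-extensions (extended rules). These were checked with the Maude Church-Rosser Checker \cite{church-rosser}; the auxiliary equations the authors added "to join critical pairs" are exactly the ones needed to close the overlaps not already joinable by the oriented ring, module, and group equations — typically those between $\dhInv(u*v)\to\dhInv(u)*\dhInv(v)$ and $u*\dhInv(u)\to 1$, and between $\dhInv(-u)\to -\dhInv(u)$ and the sign rules. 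The only overlaps genuinely new relative to \cite{dougherty} are those through $\dhInv(0)$ (e.g. $0*\dhInv(0)$), and one checks these are already joinable, or yield an irreducible term, under the existing rules, so no further auxiliary equation is required.

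For the shape of normal forms I would argue by structural induction on an irreducible term, simultaneously for the two sorts, determining for each outermost symbol which arguments an irreducible term may have. For sort $E$: an irreducible term is $0$, $1$, a variable, or has head $+$, $-$, $*$, or $\dhInv$. The rules orienting distributivity of $-$ over $+$, $-(-x)\to x$, and $-0\to 0$ force every occurrence of $-$ to sit directly in front of a product of variables and $\dhInv$-terms; the rules orienting distributivity of $*$ over $+$, together with $\dhInv(u*v)\to\dhInv(u)*\dhInv(v)$, $\dhInv(\dhInv u)\to u$, $\dhInv(-u)\to -\dhInv u$, and $\dhInv 1\to 1$, force every $*$ to have only variables or $\dhInv(\text{variable})$ as factors and every $\dhInv$ to be applied only to a variable. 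Together these say precisely that an irreducible $e$ is a sum $m_1+\cdots+m_k$ of irreducible monomials. For sort $G$ the analogous bookkeeping on the rules for $^{-1}$, $\dhExp$ and $e_G$ — push $^{-1}$ inward through $\cdot$ and through $\dhExp$, $g^{x+y}\to g^x\cdot g^y$, $(g^x)^y\to g^{x*y}$, $g^1\to g$, $g^0\to e_G$, $e_G^x\to e_G$ — forces an irreducible $t$ to be a product $t_1\cdots t_k$ with each $t_i$ of the form $v$, $v^{-1}$, $v^m$, or $v^{-m}$ for a variable $v$ and an irreducible monomial $m$, using the already-established characterization of sort-$E$ normal forms and the fact that $g^{m_1+\cdots+m_k}$ is reducible when $k\ge 2$.

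The main obstacle is the confluence step: it requires a complete and correct enumeration of critical pairs, including those with AC-extended rules, and a check that each is joinable, which is delicate to carry out by hand — hence the reliance on the Church-Rosser checker — and one must be careful that relaxing the sorts does not introduce a non-joinable overlap through $\dhInv$ applied to $0$. Termination and the normal-form characterization are comparatively routine once the rule set is fixed.
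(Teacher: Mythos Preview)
Your proposal is correct and takes essentially the same approach as the paper: the paper's own ``proof'' consists of the single sentence ``The same proof as in \cite{dougherty} shows that,'' together with the earlier remark that termination and confluence modulo AC were machine-checked with AProVE and the Maude Church-Rosser Checker. Your outline is in fact more detailed than what the paper provides, spelling out the structural induction for the normal-form shapes and flagging the only genuinely new overlaps (those through $\dhInv(0)$) that arise from dropping the $\mathit{NZE}$ sort.
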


%\begin{definition}
%We call the equational theory that consists of all the above equations by $\EqDH$.
%We call the $m_i$ in the previous theorem \emph{irreducible monomials}.
%\end{definition}

%Dougherty and Guttman also prove that this rewriting theory actually corresponds to a description of a field, so this is accurately describes the group operation with Diffie-Hellman terms.  

Finally, we introduce the notion of a \emph{root} term. 

\begin{definition}
If $t:G$ is in normal form, with $t=t_1\cdot \ldots \cdot t_n$, then the \emph{root terms} of $t$ are $\roots(t)=\{t_1,\ldots,t_n\}$, and we call $t$ a \emph{root term} if $|\roots(t)|=1$. 
Similarly, if $t:E$ is in normal form, with $t=t_1+\ldots+t_n$, then the \emph{root terms} of $t$ are $\roots(t)=\{t_1,\ldots,t_n\}$, and $t$ is a \emph{root term} if $|\roots(t)|=1$. 
\end{definition}

\subsection{Follow-up rewriting work}
\label{sec:prevwork}
In \cite{dougherty}, Dougherty and Guttman present the term rewrite system just discussed, and prove it provides a decision procedure for equality for a structure representing generic finite fields. Furthermore, they use this rewrite system to derive a symbolic analogue of the discrete log assumption, a computational assumption stating that deducing $e$ from $g^e$ is hard. We rephrase this result in our setting in Section \ref{sec:overview}.  Dougherty and Guttman then sketch how to apply these results in a strand space formalism, which is an alternative operational semantics for the symbolic model. In particular, they show how this discrete log assumption can be used to prove adversarial reachability properties for certain key-exchange protocols. However, their proofs are not mechanized and are based on pen-and-paper arguments. Moreover, they consider only a very restricted class of protocols and only properties stating whether a \emph{passive} adversary can deduce a term.

In follow-up work \cite{Dougherty14Decidability}, Dougherty and Guttman replace their rewrite system with algebraic methods, by interpreting $\mathit{DH}$ messages not as terms in a rewrite theory but as elements in the traditional algebraic structure of fields. 

 % that can encode properties such as key secrecy or forward secrecy. 
The authors again work in the strand space formalism. They define a class of protocols, a new adversary who can employ the full algebraic structure of fields, and a language for expressing security properties as protocol goals. They then prove that the resulting combination of protocols, adversary, and goals has the small witness property: if there exists a protocol run that violates a security property, then some run smaller than a computable bound also violates it. The authors provide an algorithm to decide security goals for this class of protocols by testing every protocol run within the bound and verifying whether it satisfies the security goal. This process combines symbolic and algebraic methods: for each run, they build a constraint system, which in turn reduces to a system of linear equations that can be solved by Gaussian elimination.

In practice however, this procedure requires enumerating all possible protocol executions (up to a given bound) and hence is not suitable for automated tools. Instead, we incorporate the idea of combining symbolic and algebraic methods in a backwards search, constraint solving algorithm. 

Furthermore, we remove some restrictions on the class of protocols being modeled. In particular, Dougherty and Guttman require all sent or received Diffie-Hellman group elements to be simple elements in the form $g^x$ for a variable $x$. Although the protocols may compute composite group elements in their local state, e.g. keys, they cannot transmit these elements to the adversary. This is a strong limitation as many protocols that use the full field structure transmit composite elements (e.g., El Gamal). Furthermore, their Diffie-Hellman rewrite theory cannot be combined with arbitrary user-defined equational theories. Finally, they additionally require that whether or not a specific exponent is deemed to be compromised is encoded as an assumption rather than a security goal to be proven.

In \cite{dennisThesis}, Jackson proposes to adapt work from \cite{dougherty} to the framework of an automatic tool, the Tamarin prover. While his approach removes the restrictions from \cite{dougherty} and \cite{Dougherty14Decidability} on the protocols that can be modeled, it only targets passive adversarial reachability properties. Neither correctness proofs nor an implementation were provided. 

In this work, we overcome above shortcomings. Namely, we implement support for full Diffie-Hellman structures in the Tamarin prover, taking inspiration from \cite{dennisThesis}, and using the idea from \cite{Dougherty14Decidability} to combine symbolic and algebraic methods. Our approach hence supports protocols using arbitrary user-defined equational theories with \emph{no} restrictions on the format of messages that are sent or received, as long as all Diffie-Hellman messages belong to the same group. %Since most protocols require parties to use a common fixed group, we believe this is a modest restriction to make. 
We also allow for an active Dolev-Yao adversary and arbitrary security properties expressible in the Tamarin prover, provided the terms appearing in those properties satisfy a certain condition stating that their subterms cannot be inverses of each other (Definition \ref{def:NoCanc}). Finally, we provide an implementation, resulting in the first automated tool that can support protocols using full Diffie-Hellman structures. 

%\subsection{Related work}

%\todo{Add a related work section here to discuss the citation requested by 3rd reviewer. }

\section{The Tamarin prover}
In Sections \ref{sec:tamarinintro} and \ref{sec:tamarin-int} we provide background on the Tamarin prover: its syntax and its contraint solving algorithm respectively. In Sections \ref{sec:newsorts} and \ref{sec:ext-dep-graph} we describe how we integrate the new equational theory into Tamarin.
\subsection{Models and properties}
\label{sec:tamarinintro}
The Tamarin prover \cite{Tamarin} models protocols as labelled transition systems: 
protocol rules and adversarial capabilities are specified as multiset rewrite rules that define state transitions. The system's state is modeled as a multiset of \emph{facts} (symbols taking terms as arguments) representing the protocol state, adversary knowledge, and network messages.

Terms are elements of an order-sorted term algebra $\mathcal{T}$ that has a unique sort $\mathit{Msg}$ and two sub-sorts $\mathit{Fresh}$ (representing unique messages) and $\mathit{Public}$ (representing publicly known terms), incomparable to each another. This assumes an infinite set of fresh and public names $\mathit{FN}$ and $\mathit{PN}$, and an infinite set of variables $\mathcal{V}$ with an infinite subset of variables for each sub-sort: $\mathcal{V}_f$ (fresh) and $\mathcal{V}_p$ (public). In Tamarin, public variables are annotated with a $\$$\ symbol.  A user-defined signature $\Sigma_{Usr}$ (a set of function symbols) is used to model cryptographic operators, and the function symbols' semantics is defined by an equational theory $\mathcal{E}_{Usr}$. Formally we define the term algebra $\mathcal{T}$ to be the algebra over $\Sigma_{Usr}$, $\mathit{PN}$, $\mathit{FN}$, and $\mathcal{V}$. 

Not all equational theories are supported by Tamarin. Supported theories must satisfy the so-called \emph{finite variant property}, which ensures that there is a general %(variant-based \todo{most reviewers won't know what this means}) 
unification algorithm modulo $\mathcal{E}_{Usr}$ that returns a \emph{finite} set of most general unifiers \cite{Lundh, ESCOBAR}. 
Note that neither the theory representing Diffie-Hellman operators (Definition \ref{def:dheq}) satisfies this property nor does any theory including distributivity over an $AC$ operator \cite{ESCOBAR,diophantine}.

%In Tamarin, fresh variables are indicated by preceding them with a \fr\ symbol, while public ones are annotated with a $\$$\ symbol.  

A multiset rewrite rule is written as follows:
\begin{lstlisting}[escapeinside={*}{*}]
rule rulename:
[Premises(...)]--[Actions(...)]->[Conclusions(...)]
\end{lstlisting}
We refer to the zero or more facts appearing in the premises, labels, or conclusions of a rule as \emph{premise facts}, \emph{action facts}, and \emph{conclusion facts} respectively. 

As is usual in the symbolic setting with a Dolev-Yao adversary, the adversary controls the entire network and sees every message sent.
%For this paper it is also useful to know that 
Tamarin has a special fact \texttt{K(m)}, where \texttt{m} is any term, that encodes the adversary's knowledge (i.e. the adversary knows term \texttt{m}) and the special facts \texttt{In(m)} and \texttt{Out(m)} that encode network messages (i.e. \texttt{m} is received from the network, and \texttt{m} is sent out to the network). 
Adversary capabilities are also modeled by rewrite rules. For example, Tamarin models the adversary reading and sending messages on the network as follows:
\begin{lstlisting}
rule send: [K(m)]--[K(m)]->[In(m)]
rule recv: [Out(m)]-->[K(m)]
\end{lstlisting}
Similarly, Tamarin gives the adversary the same computational capabilities as the other parties running the protocol. In particular, the adversary has access to all the function symbols defined in the protocol. It can apply each function \texttt{f/n} of arity $n$ via rules of the form:
\begin{lstlisting}
rule f: [K(t1),...,K(tn)]-->[K(f(t1,...,tn)]
\end{lstlisting}

All other message deduction rules and network rules can be found in \cite{Tamarin}.
Given a protocol, we denote by $\mathcal{P}$ the set of all protocol rules and adversary capability rules.  The transition system these rules define operates on ground terms: a rule is applied by removing the appropriately instantiated premises from the state and adding the instantiated conclusions. A protocol execution is an alternating sequence of states and instantiated rules, where each state in the sequence is obtained from the previous one by applying the interposed rule. The instantiated action facts in this sequence are the labels of the labelled transitions and they define traces, corresponding to protocol executions. We define security properties over these traces.

More precisely, security properties, called \emph{lemmas}, are expressed as first-order temporal logic formulas over \emph{trace atoms}, where quantification is allowed both on message terms and on timepoints (which in Tamarin are preceded by a \texttt{\#} symbol). Trace atoms are: equality of terms, equality and ordering of timepoints, a label \texttt{ActionName} at a timepoint \texttt{\#i}, written \texttt{ActionName(arguments)@\#i}, or the $\perp$ symbol, together with the usual logical connectives and quantifiers. 
There are some restrictions on the use of quantifiers within formulas. %In particular, lemmas either start with an existential quantifier (hence called \emph{exists-trace} properties) or with a universal quantifier (called \emph{forall-trace} properties). 

Before giving an overview of how Tamarin operates, and proves security properties, we explain how we integrate the full Diffie-Hellman theory into the Tamarin prover. In doing so, we provide a running example of modelling theories in Tamarin. 

\subsection{Introducing new sorts}
\label{sec:newsorts}
%As mentioned, to determine which constraint solving rules are applicable, 
We extend Tamarin's term algebra with the two incomparable sorts $G$ and $E$. These two sorts are subsorts of the existing $\mathit{Msg}$ sort, but are incomparable to the existing $\mathit{Public}$ and $\mathit{Fresh}$ sorts. We also introduce a public subsort $\mathit{PubG}$ for $G$ and a fresh subsort $\mathit{FrE}$ for $\mathit{E}$. For technical reasons described in Section \ref{sec:newrules}, we also introduce a subsort $\mathit{varE}$ of the sort $E$. %($Fresh_{\mathit{NZE}}$ and $Public_{\mathit{NZE}}$)
We again assume we have infinite sets of names $\mathit{PN}_G$ and $\mathit{FN}_{\mathit{E}}$ and an infinite set of variables $\mathcal{V}_G$ and $\mathcal{V}_E$ with respective infinite subsets $\mathcal{V}_{PubG}$, $\mathcal{V}_{varE}$, and $\mathcal{V}_{Fr_{\mathit{E}}}$. By a $\mathit{DH}$ term, we mean a term of sort $G$ or $E$. We extend the signature $\Sigma_{Usr}$ with the function symbols in $\Sigma_{DH}$% (recall Definition \ref{def:dheq})
, whose semantics is described by the equational theory $\mathcal{E}_{DH}$. %in Definition \ref{def:dheq}.
We assume that $\mathit{DH}$ terms are always reduced to their normal form after each rule application.  
%We also consider the following two function symbols, $$box_G: G\rightarrow Msg,\quad box_E:E\rightarrow Msg,$$ that `box', or coerce, terms of sort $G$ and $E$ to terms of sort $\mathit{Msg}$ and satify no additional equations; they correspond to an adversary who can convert group elements to bitstrings and are just used to distinguish the type of terms. When clear from the context, we omit the $G$ or $E$ subscripts. We assume that in the protocol model all DH terms appear boxed. Finally, let 
\begin{align*}
\Sigma & :=\Sigma_{Usr}\cup \Sigma_{DH},\\
\mathcal{E} &:=  \mathcal{E}_{Usr} \cup \mathcal{E}_{DH}.
\end{align*}
We will henceforth work in the term algebra $\mathcal{T}'$, the term algebra over $\Sigma \cup \mathit{PN} \cup \mathit{FN} \cup FN_G\cup \mathit{PN}_G\cup \mathit{FN}_{\mathit{E}} \cup PN_{\mathit{E}}\cup \mathcal{V}\cup \mathcal{V}_G \cup \mathcal{V}_E.$ Similar to terms of sort $\mathit{Fresh}$, distinct terms of sort $\mathit{FrE}$ cannot be equal and, like the terms of sort $\mathit{Public}$, terms of sort $\mathit{PubG}$ are known by the adversary. 

Furthermore, for each operator, we add a rule allowing the adversary to apply it, and we add the following rules allowing the adversary to learn the constant terms and to generate fresh values. 
\begin{lstlisting}
rule FrE: [Fr(f:FrE)]--[K(f:FrE)]->[Out(f:FrE)]
rule const: []--[K(1), K(0)]->[Out(1), Out(0)] 
\end{lstlisting}

As a running example, consider the following simple protocol where Bob sends Alice an encrypted message $m$ using ElGamal.

\begin{center}
\begin{msc}[small values, instance distance=4cm]{ElGamal}
\label{proto1}
\drawframe{no}
\declinst{A}{\footnotesize knows $ska$}{Alice}
\declinst{B}{\footnotesize knows $pka=g^{ska}$}{Bob}
\vspace{-5mm}
\action*{\footnotesize %
fresh $m$, fresh $y$}{B}
%\nextlevel[1.5]
%\action*{\footnotesize %
%Sent($m$)}{B}
\nextlevel[2.5]
\mess{$g^{c_1}=g^y, g^{c_2}=m\cdot pka^y $}{B}{A}
\nextlevel[0.2]
\action*{\footnotesize %
compute $m=g^{c_1*(-ska)}\cdot g^{c_2}$}{A}
%\nextlevel[2]
%\action*{\footnotesize %
%Received($m$)}{A}
%\action*{\parbox{2.5cm}{\footnotesize %
%compute $f(m_A, m_B)$}}{B}
\nextlevel[1]
\end{msc}
\end{center}

Note that prior to our work, Tamarin could not model such a protocol, since it did not support group multiplication and the associated equations. Note too that all the previous works mentioned in Section \ref{sec:prevwork} also cannot model such a protocol, nor any other work that we are aware of. 
With our proposed extension for Tamarin, we can model the above protocol as follows, where \texttt{g:PubG} is a constant. 
%e write \texttt{g} for \texttt{'g'} and \texttt{.} for the operator $\cdot$. 
\begin{lstlisting}[escapeinside={*}{*}]
rule KeyGen:
[Fr(ska:FrE)] -->
  [!PubKey($A, g^ska:FrE), !SKey($A, ska:FrE),  
   Out(g^ska:FrE)]

rule CompromiseKey:
[!SKey($A, ska:FrE)] 
--[Compromised($A)]->
  [Out(ska:FrE)]

rule BobEncrypts:
let pka = g^{ka:E} 
in
[!PubKey($A, pka), Fr(m:FrE), Fr(y:FrE)] 
--[BSent(g^{m:FrE}), SecretB($B, $A, g^m:FrE)]-> 
[Out(<g^y:FrE, g^m:FrE.(pka^y:FrE)>)]

rule AliceReceives:
let m = (g^{c1:E})^(-ska:FrE).g^{c2:E}
in 
[In(<g^{c1:E}, g^{c2:E}>), !SKey($A, ska:FrE) ]
--[AReceived(m), SecretA($A, m) ]->
[]
\end{lstlisting}
\noindent For our running example, we will consider two properties:
\begin{itemize}
\setlength\itemsep{0.8pt}
\item \textit{Executability:} Tamarin finds a trace where both Alice and Bob execute their rule, and Alice receives the intended message.
\item \textit{Secrecy:} If Bob encrypts a message intended to be a secret for Alice, only Alice and Bob should learn it.
%\item \todo{Maybe add Malleability - which actually exploits the multiplication?}
\end{itemize}

\noindent We formulate these properties in Tamarin as follows: 

\begin{lstlisting}
lemma executable: 
exists-trace
"Ex msg #i #j. BSent(msg)@i 
               & AReceived(msg)@j
               & not (Ex #l X. Compromised(X)@l)"

lemma secrecy:
"All msg #i B A. SecretB(B, A, msg)@i 
              & not (Ex #l. Compromised(A)@l)==> 
       not (Ex #j. K(msg)@j )"
\end{lstlisting}

Note that this model indeed assumes that all elements are of the form $g^e$ for the same group generator $g$. For example, when Alice receives messages from Bob in \texttt{AliceReceives}, the model explicitly enforces them (by pattern matching) to be of the form $g^{c_1}$ and $g^{c_2}$ for some arbitrary $E$-variables $c_1$ and $c_2$. Observe that an $E$-variable can be instantiated with a term containing the $+$ symbol, so this does \emph{not} exclude $g^{c_1}$ from being a product of different terms. The same holds for Alice's public key $g^{\mathit{ka}}$ in \texttt{BobEncrypts}. It is the user's responsibility to ensure that Bob never uses the $\mathit{ka}$ variable in isolation. A syntactical translation on the compiler's side (translating a variable $\mathit{v_G:G}$ to the term $g^{v_E:E}$) would easily remove this burden. This constraint only misses attacks that use elements from different groups. However this should be an acceptable restriction, since validating group elements in a Diffie–Hellman key exchange is a standard practice. 

%As mentioned, our approach only adds some particular constraint solving rules that handle terms of sort $\mathit{DH}$. Otherwise Tamarin's original constraint solving algorithm applies and remains unchanged. In other words, we extend \emph{all} the current constraint solving rules to be applicatble for all terms of sort $\mathit{Msg}$ or $\mathit{DH}$, except for the rule $S_{=}$, which applies only to $\mathit{Msg}$ and is replace by a rule $C_{=}$ for $\mathit{DH}$ terms.

\subsection{Dependency graphs modulo full Diffie-Hellman}
\label{sec:ext-dep-graph}

Tamarin has its own specialized data structure for efficiently representing the application of multiset rewriting rules, called 
%In reality, Tamarin does not work with transition systems, rather
 \emph{dependency graphs}. Intuitively, these graphs represent sequences of rewrite rule instances that correspond to protocol executions and capture which facts originate from which rules. %(%keeping track of the causal dependencies between states 
%(this facilitates the incremental construction of attacks). 
In particular, nodes are labelled with rule instances and two nodes share an edge if a conclusion of the first node matches, modulo the protocol's specified equational theory, a premise of the other. We formally define these graphs in Appendix \ref{sec:appendixdefs}. 

A protocol satisfies an \emph{exists-trace property} $\exists x. \phi$ if there exists an instance of a dependency graph (e.g. a trace or protocol execution) that is a model of the formula $\phi$. It satisfies a \emph{forall-trace property} $\forall x. \phi$ if there does \emph{not} exist a graph that is a model of the formula $\neg\phi$. 

We refer again to \cite{Tamarin} for Tamarin's formal semantics, but intuitively an instance of a graph is a model of a formula if the action facts appearing in the formula appear in the graph and satisfy all timepoint equalities and inequalities expressed by the formula. 

%Recall that the nodes of \emph{dependency graphs} are labelled with rule instances (i.e. the terms occurring in the rule are ground) and two nodes share an edge if a conclusion of the first node matches, modulo the protocol's specified equational theory, a premise of the second. 

\begin{figure}[h]
\includegraphics[width=8.5cm]{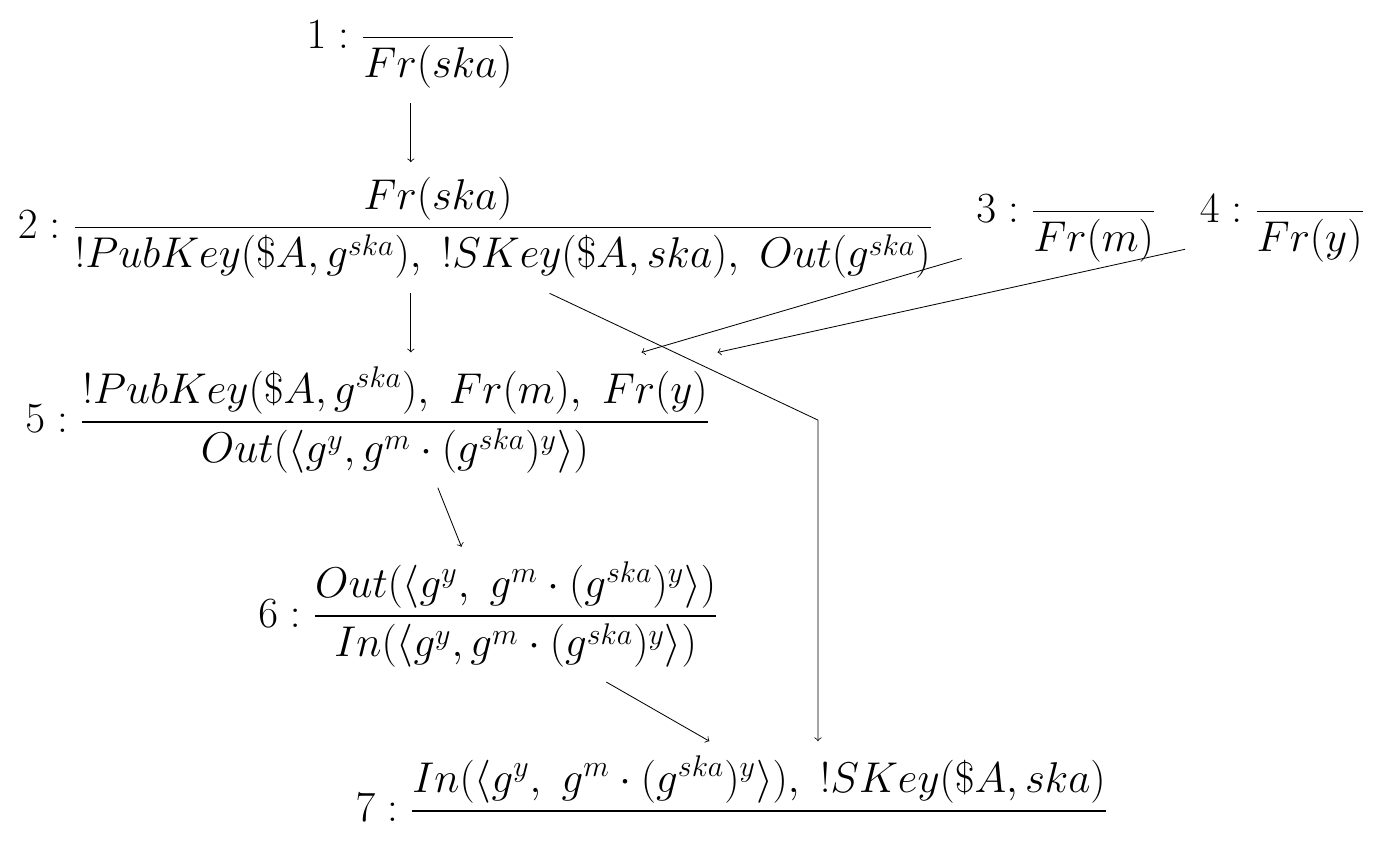}
\caption{A dependency graph of an execution of the ElGamal protocol, where we have omitted the rules' labels.}
\label{fig:depgraph}
\end{figure}

We define $\mathit{DH}$-extended dependency graphs to consider rules instantiated over the term algebra $\mathcal{T}'$. Figure \ref{fig:depgraph} illustrates a DH-extended dependency graph corresponding to an execution of the ElGamal protocol described above.
\begin{definition}
The nodes of \emph{DH-extended dependency graphs} are labelled with rule instances and an edge links two nodes if the conclusion of one matches the premise of the other, now modulo $\mathcal{E}_{Usr}\cup\mathcal{E}_{DH}$.  
\end{definition}

This is well defined because we can check whether two \emph{ground} terms in $\Sigma$ are equal, since $\Sigma_{Usr}$ and $\Sigma_{DH}$ have disjoint signatures, and we can check the equality modulo AC of ground terms within each equational theory by considering their normal forms.

\subsection{Constraints and constraint solving}
\label{sec:tamarin-int}

As mentioned, by negating the forall-trace formulas, it suffices to determine the existence of protocol traces that are models for a given formula. This problem is undecidable, so Tamarin's search might not terminate. 

Tamarin searches backwards, extracting a set of \emph{constraints} on graphs from the targeted formula. Similar to atoms, constraints are action facts that must appear in traces (e.g. $Fact@i$), term equalities and timepoint orderings, and also rules that must be applied at certain timepoints ($i:rule$). To construct a solution to a constraint system (a finite set of constraint sets), 
Tamarin uses a constraint solving relation $\rightsquigarrow$, where $C_1 \rightsquigarrow C_2$ reduces the constraint system $C_1$ to the constraint system $C_2$ by inferring new constraints and performing case splits. 
We write $c_1,\ldots,c_n,\Gamma$ to denote $\{c_1,\ldots,c_n\}\cup \Gamma$. %and we often omit brackets for singleton sets, i.e. $c_1,\ldots,c_n$ denotes $\{c_1\},...\{c_n\}$. 
The rule $$\begin{crule}
{c, \Gamma } { \{ \Delta_i, f_i(\Gamma) \}_{1\leq i\leq n} } {RuleName} 
\end{crule}$$
denotes that
$$\{c\}\cup \Gamma \rightsquigarrow \{ \Delta_i \cup f_i(\Gamma),\ldots, \Delta_n \cup f_n(\Gamma)  \}.$$

For example, denoting with \textit{acts(ri)} all the action facts (i.e. neither including the premises nor the conclusions) of rule \textit{ri}, Tamarin's constraint solving rule 
$$
\begin{crule}
{\mathit{Fact}@i, \Gamma}{\{\mathit{Fact}=g, i:\mathit{ri}, \Gamma\}_{\mathit{ri}\in \mathcal{P}, g \in \mathit{acts}(ri)}}{S_@} 
\end{crule}
$$
solves a constraint $\mathit{Fact}@i$ by performing a case split over all rules $ri$ and their actions $g$ that might be equal to $Fact$. In each case, a constraint $\mathit{i:ri}$ (indicating that an instance $\mathit{ri}$ of a protocol rule must be applied at timepoint $i$) is added. Additionally,  for each of the actions $g$ of $ri$, an equality constraint between $g$ and $Fact$ is added.

Another example is the constraint solving rule that introduces an equality constraint, this time between the premises and conclusions of rules. More precisely, denoting by $prems(ri)$ and $concs(ri)$ the premise facts and conclusion facts of rule $ri$, the rule
$$
\begin{crule}
{i:ri, \Gamma}{\{i:ri,\ j:ru,\ pi=cj, \Gamma\}_{\substack{ru\in \mathcal{P}, pi \in prems(ri),\\ cj\in concs(ru)}}}{S_{Prem}} 
\end{crule}
$$
solves a constraint $i:ri$ by performing a case split over each rule $ru$ checking whether its conclusions are equal to some premise of $ri$. In each case, an equality constraint between rule conclusions and premises is added.

Equality constraints are then solved by another rule $S_{=}$ that performs unification.
$$
\begin{crule}
{f=f', \Gamma}{\{\sigma(\Gamma)\}_{\sigma\in \unify_{\mathcal{E}_{Usr}}(f,f')}}{S_{=}}.
\end{crule}
$$
Note that %since $\mathcal{E}_{Usr}$-unification is not unitary, this might result in case splits, and that 
this unification rule can also be used to solve timepoint equalities.  
% It is already clear from the example that these constraint solving rules require the ability to unify terms. 

The full list of Tamarin's constraint solving rules can be found in \cite{Tamarin} along with a proof of the following theorem.

\begin{theorem}
\label{thm:sound}
The relation $\rightsquigarrow$ transforming $\Gamma \rightsquigarrow \{\Gamma_1,\ldots,\Gamma_k\}$ is sound
and complete in the following sense: 
\begin{itemize}
\item \emph{Soundness}: The union of the models (i.e. dependency graphs) of the $\Gamma_i$ is a subset of the models of $\Gamma$. %If there is a model for one of the $\Gamma_i$, then it is also a model for $\Gamma$.
\item \emph{Completeness}: the set of models of $\Gamma$ is a subset of the union of the models of the $\Gamma_i$. %If $\Gamma$ has a model, then there is a $\Gamma_i$ who also admits it as a model.
\end{itemize}
\end{theorem}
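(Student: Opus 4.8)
This is the standard soundness and completeness result for Tamarin's constraint-solving calculus, and since $\mathcal{E}_{Usr}$ is assumed to have the finite variant property, the argument is the one from \cite{Tamarin}; I would reproduce its structure as follows. The relation $\rightsquigarrow$ is generated by a finite collection of constraint-reduction rules (the rules $S_@$, $S_{Prem}$, $S_=$ displayed above together with the remaining rules listed in \cite{Tamarin}), and a single step of $\rightsquigarrow$ replaces one constraint set $\{c\}\cup\Gamma$ in the current (finite) constraint system by the family $\{\Delta_i\cup f_i(\Gamma)\}_{1\le i\le n}$ produced by one such rule. Hence it suffices to prove, for each rule in isolation, that the set of dependency graphs modeling $\{c\}\cup\Gamma$ is \emph{exactly} the union over $i$ of the dependency graphs modeling $\Delta_i\cup f_i(\Gamma)$; the two inclusions are precisely soundness and completeness.

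For soundness one checks rule by rule that a model $dg$ of some $\Delta_i\cup f_i(\Gamma)$ is a model of $\{c\}\cup\Gamma$. Each $f_i$ is instantiation by a substitution $\sigma_i$ (or the identity), and a model of $\sigma_i(\Gamma)$ is a model of $\Gamma$ since it only constrains the same ground terms further; so the remaining work is to see that $dg$ satisfies the reintroduced constraint $c$. For $S_@$ this holds because $dg$ realizes the rule instance $ri$ at timepoint $i$ and the added equality forces one of its action facts to equal $\mathit{Fact}$, so $\mathit{Fact}@i$ holds in $dg$; for $S_{Prem}$ the added edge together with the added equality witness exactly the dependency-graph requirement that premise $pi$ of $ri$ be fed by conclusion $cj$ of $ru$; for $S_=$, $dg$ is a model of $\sigma(\Gamma)$ for a unifier $\sigma$ of $f,f'$, so $f\sigma =_{\mathcal{E}_{Usr}} f'\sigma$ and hence the equality $f=f'$ holds under the grounding induced by $dg$. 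The remaining rules of \cite{Tamarin} are checked the same way.

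For completeness one shows that in each rule the case split is exhaustive, which is where the structural properties of (normal) dependency graphs established in the appendix are used. Assume $dg$ models $\{c\}\cup\Gamma$. For $S_@$, since $\mathit{Fact}@i$ holds, node $i$ of $dg$ must carry an instance of some rule $ri\in\mathcal{P}$ one of whose action facts equals $\mathit{Fact}$ modulo $\mathcal{E}_{Usr}$, so one of the enumerated cases applies and $dg$ is a model of it; for $S_{Prem}$, the dependency-graph condition that every premise fact of a node is the target of an incoming edge from a matching conclusion of another node supplies the required pair $(ru,cj)$; for $S_=$, the grounding substitution $\tau$ induced by $dg$ satisfies $f\tau =_{\mathcal{E}_{Usr}} f'\tau$, and because $\unify_{\mathcal{E}_{Usr}}(f,f')$ is a \emph{complete} set of unifiers, $\tau$ factors through some $\sigma$ in that set, whence $dg$ is a model of $\sigma(\Gamma)$.

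The main obstacle, and the only place the finite variant property is really needed, is this last point: completeness of $S_=$ hinges on $\unify_{\mathcal{E}_{Usr}}$ returning a complete (and here finite) set of $\mathcal{E}_{Usr}$-unifiers. This is exactly the property that fails once $\mathcal{E}_{DH}$ is added, and recovering a sound and complete calculus without it is the subject of the remainder of the paper. A secondary, purely bookkeeping obstacle is making the notions of ``model of a constraint system'' and of ``dependency graph'' (the conditions defining normal dependency graphs in the appendix) precise enough that the exhaustiveness claims above hold literally; granting those definitions, each individual rule check is routine.
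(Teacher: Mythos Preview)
Your proposal is correct and in fact goes further than the paper: the paper does not prove this theorem at all but simply states it as background and refers to \cite{Tamarin} for the proof (``The full list of Tamarin's constraint solving rules can be found in \cite{Tamarin} along with a proof of the following theorem''). Your rule-by-rule sketch is a faithful summary of that cited argument, and your identification of the finite variant property as the crux for completeness of $S_=$ is exactly the point the paper leverages to motivate its new rules.
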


Finally, to find a model of the constraint system, Tamarin searches for a constraint reduction $\{\phi\} \rightsquigarrow ^* \emptyset$ or a constraint reduction $\{\phi\}\rightsquigarrow ^* \Delta$ such that there is a constraint system $\Gamma\in\Delta$ for which we can directly confirm that it has a model. In the first case, Tamarin proves that $\phi$ is not satisfiable (if $\phi$ represents the negation of a security property, this proves the property). In the second case, Tamarin extracts a model (hence a dependency graph) $dg$ of $\phi$ from $\Gamma$ (if $\phi$ represents the negation of a security property, this represents a counterexample to the property). In both cases, such a constraint system is defined to be solved.

In Section \ref{sec:newrules}, we show how we extend Tamarin's current constraint solving algorithm by defining new constraint solving rules for terms of sort $\mathit{DH}$. 
We leave Tamarin's constraint solving rules unchanged on terms of other sorts. In Section \ref{sec:combination}, we show how we can merge approaches for terms that contain both sorts. Observe that in our protocol model of ElGamal, we have already used the pair function over terms of sort $\mathit{DH}$, so this integration is crucial for usability.

\textbf{Examples.} Returning to our example, when trying to prove the lemma \texttt{executable}, Tamarin (as it currently stands) produces the constraint set\footnote{In this and the following example, we present simplified constraint sets. For clarity of exposition, we have omitted checking that there is no $Compromised$ action fact in the trace. }:
\begin{exbox1}
\label{ex:exec}
$$\{BSent(msg)@i,\ AReceived(msg)@j\}.$$
\end{exbox1}

Let us see why Tamarin currently cannot solve such a constraint. Since $BSent$ only appears in the \texttt{BobEncrypts} rule, Tamarin (using $S_{@}$ and $S_{=}$) will replace the variable $\mathit{msg}$ by the term $g^{\sim m}$ in the entire constraint system, and will then try to solve the $AReceived(g^{\sim m})$ constraint. This fact only appears in the \texttt{AliceReceives} rule, and, again via the $S_{@}$ and $S_{=}$ rules, Tamarin would try to unify $g^{\sim m}$ and $(g^{c_1})^{-ska}\cdot g^{c_2}$. However, as mentioned, we do not have a unification algorithm for this!  

When trying to prove the lemma \texttt{secrecy}, Tamarin tries to find a trace for its negation. After replacing the variable $\mathit{msg}$ by $g^{\sim m}$ (since $SecretB(B,A,g^{\sim m})$ only appears in one rule), it will try to solve the remaining constraint system:
\begin{exbox2}
\label{example:secrecy}
$$\{K(g^{\sim m})@j\}.$$
\end{exbox2}

Solving adversarial facts is more complicated. Essentially, Tamarin would try to see if $g^{\sim m}$ can be unified with a term coming from an $Out$ fact or with the output of a function symbol applied on other terms in the adversary's knowledge, again learnt from $Out$ facts or via other function symbol (see rules \texttt{send} or \texttt{f}). 
%Tamarin has a smart way to avoid infinite loops (e.g. re-encrypting and decrypting continuously). 
Again, this approach is not possible with the full DH theory, as we cannot check if two terms are unifiable. 

\section{Indicators and non-cancellation}
\label{sec:overview}

%Assume that the exponents appearing in two given terms are categorized according to whether they are known or unknown to a particular agent. Instead of directly unifying these two terms, we unify, in a simplified theory for which a unification algorithm exists, their `secret' parts, which we define below. Then, using algebraic methods, we check whether these parts can be combined, using the known variables, to produce the same two terms. In this section, we introduce the building blocks we use to formalize this approach. 
  
As illustrated in the previous examples, the two views we need to cover are: (i) honest protocol execution, and (ii) adversarial behaviour. %For agents in an honest protocol role, determining which exponents are `known' to that agent is straightforward: the agent has access to every exponent appearing in the premises of its rules. 

%For honest agents, unification is constrained by protocol rules. Under the non-cancellation assumption, every premise root term must occur in the conclusion. We therefore unify premise and conclusion root terms in a simplified theory, and then use algebraic reasoning to determine whether the instantiated conclusion terms can be combined to reconstruct the premise.

For honest agents, behavior is strictly described by the protocol rules. To unify the premise of one rule with the conclusions of others, we exploit the non-cancellation assumption: every root term in the premise must also appear in the conclusion. Rather than unifying full terms, we first unify premise root terms with conclusion root terms in a simplified theory with a dedicated unification algorithm. We then use algebraic reasoning to check whether the instantiated conclusion terms can be combined, possibly canceling surplus terms, to reconstruct the premise.

Adversarial behaviour is more complex as the adversary may freely apply all $DH$ operators. %, we determine deducibility by analyzing whether a target term can be derived from initial knowledge and network messages. 
To decide whether a target term can be deduced, we classify the exponents in the target term as known or unknown, isolating the secret components that must be obtained from the network. We unify these secret components with network messages in the simplified theory, and again use algebraic reasoning with the known exponents to reconstruct the target term.

%For the adversary, this is more challenging. To determine whether an adversary knows an exponent we must perform a backwards search for each exponent. Hence a priori, we cannot know the entire adversarial knowledge set. Instead, 

A key step of the approach is determining which exponents are known to the adversary. Given a constraint system, we can keep track of the exponents leaked, i.e. made available to the adversary, within it:

\begin{definition} 
\label{def:basis-ind}
Let $\Gamma$ be a constraint set. Given a timepoint $i$, we call the set of exponents that are known by the adversary at timepoint $i$ the \emph{adversarial leaked set}: $$e\in L^{\Gamma}_{i} \iff {\Large \substack{ K(e)\in acts(ru) \text{ for some contraint }\\ j:\mathit{ru} \in \Gamma \text{ with }j<i}}$$ 
\end{definition}
These sets correspond to the non-basis and basis sets in Dougherty and Guttman's terminology. 
Note that if $e \in L^{\Gamma}_i$, then $e\in L^{\Gamma}_j$ for all $j>i$. % and if $e\in S_i$ then $e\in S_j$ for all $j<i$. 
The set $L^{\Gamma}_i$ is easily computable from the current constraint system $\Gamma$. In practice we store exponents \emph{and} the timepoint at which they are learned in our constraint system. Each time we add a rule with action $K(e)$ at timepoint $j$, we add $(e,j)$ to the basis set of our constraint system and $L^{\Gamma}_i$ consists of all exponents $(e,j)$ in this set such that $j\leq i$. %When clear from the context, we omit the timepoint argument and write $L$ and $S$ instead of $L_i$ and $S_i. 

As we explain in Section \ref{sec:adv}, to classify the remaining exponents appearing in our target term, we will follow Tamarin's general strategy of using backwards search. 
%To compute the indicator at some timepoint $i$ of a given term we only need to know whether the particular exponents \emph{appearing in that term} can be learned by the adversary. 
%Thus, if we need to compute the indicator of some term, 
%it suffices to determine whether the adversary can learn each of its exponents.
For each exponent variable $e$ appearing in our term, if it is not already in the leaked set $L^{\Gamma}_i$ of the current constraint system $\Gamma$, we introduce the constraints $K(e)@j_e$ and $j_e < i$, and attempt to solve these constraints. 
%Once we have determined the possible leaked sets, we can compute the indicators. 
This works well when the exponent terms appearing in the term are known to be fresh, since in that case the unification sub-queries are extremely simple: fresh terms can only be unified with other fresh terms. In the more general case though, Tamarin may fail to terminate when trying to classify the exponent as secret or leaked. In general, we delay solving the $K(e)@j_e$ constraints when $e$ is not of sort fresh. 
In most protocols that use Diffie-Hellman groups, the exponents used to exchange keys are randomly chosen, hence we believe this limitation does not widely impact the usability of our approach. 

%We now define the knowledge set of honest agents, in particular the knowledge set of an agent executing a given rule. To be consistent with the adversarial case, we call this set the \emph{Leaked} set of a rule. Note, however, that this set does \emph{not} represent terms leaked to the adversary, rather just the terms available to an honest agent. For example, the leaked set of a rule will often contain the secret keys of the agent acting in the rule.
%\begin{definition}
%\label{def:Lrule}
%For each rule \texttt{Rule} instantiation, we define the leaked set $L_{Rule}$ of that rule to be the set containing %all the exponent variables and names appearing in the rule's premises. 
%\end{definition}
%In our example, $L_{BobEncrypts} = \{{\small \sim}y, {\small \sim}m\}$, whereas $L_{AliceReceives}= \{ska, c_1,c_2\}$. 

Given a leaked set $L$ and a term, we now define its `secret' parts, one for each of its root terms. 

\begin{definition}
\label{def:ind}  For a root term $t$, a set of $E$-variables $L$, and corresponding complement $S=\mathcal{V}_E\setminus{L}$, %and its corresponding \basis set $S$, 
we call its \emph{indicator}, denoted by $\Ind_L(t)$, the sub-term of $t$ needed to construct $t$ using only the exponents in $L$. 
\end{definition}
Observe that the indicator is defined only on \emph{root} terms and hence it is unique (for the given set $L$). 
%By the characterization of normal forms, a root term $t$ of sort $G$ is of the form $t = v^{e_S\cdot e_L}$, for a variable or name $v$ of sort $G$, and $E$-monomials $e_S\in \Gen(S)$ and $e_L\in \Gen(L)$. Hence $\Ind_L(t) = v^{e_S}$ and $t= (\Ind_L(t)^{e_L})$. 
For example, if $L=\{y,w\}$, and $t=g^{x*y}$, then $\Ind_L(t)=g^x$. When a term is already deducible from $L$, we define its indicator to be $1$, e.g., for the same $L$, $\Ind_L(g^w) = 1$. 

In general, $Ind_L(t)$ only contains variables that belong to $S$. An exception is when $t$ contains a hashed term that contains elements both from $L$ and $S$. Since $\mu$ models a one-way function, as soon as it contains one secret exponent, we must consider the entire term secret, e.g. $Ind_L(g^{\mu(x*w)}) = g^{\mu(x*w)} $. We give the precise recursive definition of the indicator function in Appendix \ref{app:DHdefs}.

The following theorem states that new indicators cannot be created by function applications. It is a reformulation of Dougherty and Guttman's Indicator Theorem from \cite{dougherty}, rephrased with Tamarin-specific notation. 
\begin{definition}
Let $T$ be a set of terms. We define $\Gen_{DH}(T)$ to be the least set of terms including $T$ that is closed under the application of function symbols in $\Sigma_{\mathit{DH}}$.
\end{definition}
%For example, if $B=\{x,y\}$, $NB=\{z\}$ and $X=g^{xz}$ then $Ind(X) = g^x$.
 
%How do we use indicators to prove security properties? We will rely on the following theorem.%, which can be interpreted as a sort of analogue of the computational Diffie-Hellman theorem:

%Having introduced our building blocks, we now state and prove a more general version of their Theorem \ref{thm:cdh}. We don't consider a fixed equational theory, but we allow Tamarin's built-in theories and arbitrary user-defined ones. The only condition these equational theories must satisfy is that their signature must be disjoint from the $\mathit{DH}$ one introduced in Definition \ref{def:dheq}. 

\begin{theorem}
\label{thm:cdh} Let $T$ be a set of terms. 
Let $L$ be a set of $E$-variables and $S=\{e\ |\ e:E\} \setminus L$. Assume that for each $e\in T$ of sort $E$: $Ind_L(e)=1$. 
Then the following hold.

\begin{enumerate}
\item Every $e\in \Gen_{DH}(T)$ of sort $E$ also satisfies $Ind_L(e)=1$.
\item If $u\in \Gen_{DH}(T)$ is of sort $G$ and $z\in \{ Ind_L(x)\ | \ x\in \roots(u) \}$, then for some $t\in T$ we have that $z\in \{ Ind_L(x)\ | \ x \in \roots(t) \} $. 
\end{enumerate}
\end{theorem}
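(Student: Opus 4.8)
The plan is to prove both statements simultaneously by structural induction on the construction of $\Gen_{DH}(T)$. The base case is immediate: for $t \in T$ of sort $E$ we have $Ind_L(t)=1$ by hypothesis, and for $t \in T$ of sort $G$ each root $z \in \{Ind_L(x) \mid x \in \roots(t)\}$ trivially witnesses itself with the same $t$. For the inductive step, a term $w \in \Gen_{DH}(T)$ is obtained as $f(w_1,\ldots,w_n)$ where each $w_j \in \Gen_{DH}(T)$ already satisfies the appropriate conclusion by the induction hypothesis, and one then reasons by cases on $f \in \Sigma_{DH}$, applying the $\rightarrow_{DH}$-normalisation and the normal form characterisation from the earlier theorem so that the indicator function is being evaluated on genuine normal forms.

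The case analysis splits along the signature. For the ring operators on $E$ (namely $+$, $*$, $-$, $\dhInv$, and the constants $0,1$): here one shows that if all arguments have indicator $1$ then so does the result. Using the recursive definition of $Ind_L$ (Appendix \ref{app:DHdefs}), $Ind_L(e)=1$ should be equivalent to $e$ being "$S$-free" in the sense that, after normalisation, no irreducible monomial of $e$ involves a variable of $S$ outside of a $\mu$-context, and that no $\mu$-subterm of $e$ contains an $S$-variable; the operators $+,*,-,\dhInv$ visibly preserve this property (products and sums of monomials over $L$-variables stay over $L$-variables; $\dhInv$ only permutes variables/signs; $\mu$ is not among the $E\to E$ symbols here so no new $\mu$-subterms arise), which gives statement (1). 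For the group/module operators producing sort $G$ ($\cdot$, $^{-1}$, $e_G$, and $\dhExp$): the key observations are that $\roots(g_1 \cdot g_2) \subseteq \roots(g_1) \cup \roots(g_2)$ up to combining/cancelling roots with a common base (and $Ind_L$ of a combined root is $Ind_L$ of either, so no new indicator value appears), that inversion and $e_G$ do not create new indicators, and that for $\dhExp(g,e)$ with $e$ of indicator $1$, raising a root $v^m$ to $e$ multiplies the monomial $m$ by the $L$-only term $e$, which cannot introduce an $S$-variable into $m$ — hence $Ind_L(v^{m*e}) = Ind_L(v^m)$ — so again every indicator of the result already occurred among the indicators of $g$. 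The one genuinely delicate sub-case is $\mu : G \to E$: here a group element $u$ is coerced into an exponent, and we must show $Ind_L(\mu(u))=1$. This does \emph{not} follow from the indicators of $u$ being small; rather it follows from statement (2) applied to $u$ together with the convention, noted after Definition \ref{def:ind}, that a $\mu$-term containing any secret exponent is treated as wholly secret. The argument is: if $u$ had a root with nontrivial indicator, that indicator would, by (2), already appear among roots of some $t \in T$; but $t$ of sort $G$ is unconstrained, so this alone is not a contradiction — so instead one argues that for $\mu(u)$ to have nontrivial indicator, $u$ must mention an $S$-variable, and one traces this occurrence back through the module structure to a generator that must itself have been supplied, reducing to the hypothesis on the $E$-elements of $T$.

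I expect the $\mu$ case to be the main obstacle, precisely because $\mu$ is the bridge from sort $G$ (where we only control indicators, not full terms) back to sort $E$ (where we need the strong "indicator $=1$" conclusion), so the two parts of the induction are genuinely entangled there and the argument cannot be a clean one-way implication. The cleanest route is likely to strengthen the induction hypothesis slightly: instead of tracking only $Ind_L$, track the full set of $S$-variables occurring in each normal-form term (outside and inside $\mu$-contexts), prove that $\Gen_{DH}$ cannot enlarge this set beyond what is already present in $T$, and then read off both (1) and (2) as corollaries. The remaining cases ($\cdot$, $^{-1}$, $\dhExp$, ring operators) are then routine bookkeeping on normal forms using the earlier normal-form theorem, and mirror the corresponding argument in \cite{dougherty}; I would cite that proof for those and spell out only the $\mu$ case and the reformulation from Dougherty–Guttman's $N$-vector language to the $Ind_L$ notation used here.
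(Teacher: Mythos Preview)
The paper does not supply its own proof of this theorem: it introduces the statement as ``a reformulation of Dougherty and Guttman's Indicator Theorem from \cite{dougherty}'' and leaves the argument entirely to that reference. Your structural induction on $\Gen_{DH}(T)$ is the standard approach and matches the shape of the original proof in \cite{dougherty}, so your plan is aligned with (and considerably more detailed than) what the paper actually provides; your closing suggestion to cite \cite{dougherty} for the routine cases is essentially the paper's entire treatment.

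Your handling of the $\mu$ case, however, has a real gap. The first fix you propose---tracing an $S$-variable occurrence in $\mu(u)$ back ``to the hypothesis on the $E$-elements of $T$''---does not close: a $G$-sorted member of $T$ may perfectly well mention $S$-variables (the hypothesis constrains only the $E$-sorted members of $T$), so a secret variable inside $u$ need not originate from any $E$-term in $T$ at all. The second fix---strengthening the induction hypothesis to track the full set of $S$-variable occurrences---fails for the same reason: since $G$-terms in $T$ may already carry $S$-variables, so will terms in $\Gen_{DH}(T)$, and from that you cannot read off $Ind_L(e)=1$ for every derived $E$-term $e$. The actual resolution is definitional, not a clever strengthening. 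In Dougherty--Guttman's original $N$-vector formulation a $\mu$-term is an opaque atom contributing zero to the $N$-vector of any enclosing monomial, so $\mu(u)$ is automatically $N$-free (equivalently $Ind_L(\mu(u))=1$) regardless of what $u$ contains; under that convention the $\mu$ case of part~(1) is immediate rather than delicate. The paper's recursive definition of $Ind_L$ in Appendix~\ref{app:DHdefs} omits an explicit $\mu$ clause, and the informal remark after Definition~\ref{def:ind} about keeping a whole $\mu$-term when it contains a secret concerns the indicator of a surrounding $G$-root (relevant to part~(2)), not the indicator of the $\mu$-term viewed as an $E$-term in its own right. Once you adopt the opaque-atom reading on the $E$-side, your inductive argument for both parts goes through without obstruction.
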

This theorem is an analogue of the discrete logarithm assumption. 
From Theorem \ref{thm:cdh}, once we have established the leaked set, given a term $t$, it suffices to find \emph{one} indicator of $t$ that cannot be reconstructed by the adversary, or an agent, to conclude that the whole term cannot be known. Indeed, by Theorem \ref{thm:cdh}, all indicators must be known to construct the term. 
If, however, we cannot exclude any indicator, we must determine whether these indicators can be combined correctly to obtain the target term. We will elaborate on how this can be done with algebraic methods in Section \ref{sec:C=}. 

Observe that this strategy does not consider variable substitutions. Since $\Sigma_{DH}$ contains reducible functions, substitutions could cancel out particular root terms, invalidating the above reasoning. To apply Theorem \ref{thm:cdh} in Tamarin's unification-based setting, we need the extra assumption that particular root terms do not cancel each other out. We define the condition $\NoCanc(X,Y)$, which states that the root term $X$ cannot be cancelled out by the root term $Y$. 
\begin{definition}
\label{def:NoCanc}
\begin{align*}
&\NoCanc(X,Y) \Leftrightarrow \\
&\forall \theta \in \mathcal{S}.\ \theta(X)\neq 1 \land \dhInv(\theta(X)) \neq \theta(Y),
\end{align*}
where $\mathcal{S}$ is the set of all substitutions for the $\mathit{DH}$ system. 
\end{definition}

%A trivial case where $\NoCanc(X,Y)$ holds is when either $X=e_G$ or $Y=e_G$. %and the other variable is of sort fresh. 
For example, $\NoCanc(X,Y)$ is satisfied when both $X$ and $Y$ only contain variables of sort $\mathit{FrE}$. 
Another condition for which $\NoCanc(X,Y)$ holds concerns the $\mu$ operator.

\begin{theorem}
\label{thm:NoCanc}
The condition $\NoCanc(X,Y)$ holds for terms $X$ and $Y$ if:
\begin{enumerate}
\item $X =g^{b*\mu(g^t)}$ and $Y=g^{a*t} $, or
\item $X = g^{a*\mu(g^{t_1})}$ and $Y = g^{b*\mu(g^{t_2})}$, 
\end{enumerate}
where $a$ and $b$ are terms that only contain distinct variables of sort $\mathit{FrE}$ and no $\dhInv$ function application.  
\end{theorem}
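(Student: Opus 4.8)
The plan is to fix an arbitrary well-sorted substitution $\theta\in\mathcal{S}$, compute the $\rightarrow_{DH}$-normal forms of $\theta(X)$ and $\theta(Y)$, and read the two required disequalities off their structure. For $G$-terms $X$ and $Y$, as here, I read Definition~\ref{def:NoCanc} as: $\theta(X)\neq e_G$, and $\theta(X)^{-1}\neq\theta(Y)$; since $(G,e_G,\cdot,{}^{-1})$ is abelian, the latter is equivalent to $\theta(X)\cdot\theta(Y)\neq e_G$, i.e., writing $\theta(X)=g^{e_X}$ and $\theta(Y)=g^{e_Y}$ in normal form, $e_X\neq 0$ and $e_X+e_Y\neq 0$ — this is the form in which I carry out the computation. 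I use throughout the normal-form characterization stated earlier: an $E$-term in normal form is a sum of irreducible monomials, and a $G$-term in normal form is a product of root terms of the shape $v$, $v^{-1}$, $v^{m}$, or $v^{-m}$.

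Before the case split I would record two facts. First, since $a$ and $b$ contain only variables of sort $\mathit{FrE}$ and no $\dhInv$, and a well-sorted substitution can map an $\mathit{FrE}$-variable only to an $\mathit{FrE}$-term (a fresh name or fresh variable), the normal forms of $\theta(a)$ and $\theta(b)$ are \emph{positive} monomials all of whose atoms are fresh; in particular they are non-zero and contain neither $\dhInv$ nor any $\mu$-subterm. Second, for every term $s$ the term $\mu(g^{s})$ is an atom: it is $\mu$-headed, $\rightarrow_{DH}$ has no rule rewriting a $\mu$-headed term, and it is distinct both from $0$ and from every fresh atom.

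For case (1), $\theta(X)$ normalizes to $g^{m}$ with $m=\theta(b)*\mu(g^{\theta(t)})$ a single positive monomial that contains the atom $\mu(g^{\theta(t)})$, so $m\neq 0$ and $\theta(X)\neq e_G$; and $\theta(Y)$ normalizes to $g^{m''}$ where $m''$ is the normal form of $\theta(a)*\theta(t)$, a sum of monomials obtained by distributing the monomial $\theta(a)$ over the monomials of the normal form of $\theta(t)$. The crucial sub-claim is that no monomial of $m''$ contains the atom $\mu(g^{\theta(t)})$: $\theta(a)$ contributes only fresh atoms, and by an occurs-check argument on normal forms — $\mu(g^{\theta(t)})$ is, up to AC, strictly larger than the normal form of $\theta(t)$, and term size is AC-invariant — it cannot occur inside the normal form of $\theta(t)$, hence not inside any of its monomials, while distributivity and cancellation never create a new $\mu$-atom. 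Thus the monomial $m$ survives uncancelled in $e_X+e_Y=m+m''$, so $m+m''\neq 0$ and $\theta(X)^{-1}\neq\theta(Y)$. For case (2), $\theta(X)$ and $\theta(Y)$ normalize to $g^{m_1}$ and $g^{m_2}$ with $m_1=\theta(a)*\mu(g^{\theta(t_1)})$ and $m_2=\theta(b)*\mu(g^{\theta(t_2)})$ both single positive monomials; then $m_1\neq 0$ gives $\theta(X)\neq e_G$, and $e_X+e_Y=m_1+m_2$ is a sum of two positive monomials, whose normal form (either $m_1+m_2$ with two summands, or $m_1+m_1$) is never $0$, so again $\theta(X)^{-1}\neq\theta(Y)$.

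The step I expect to be the main obstacle is the sub-claim in case (1) that $\mu(g^{\theta(t)})$ does not occur in the normal form of $\theta(a)*\theta(t)$: making it rigorous needs care about well-sortedness (so that $\theta(a)$ genuinely contributes no $\mu$ or $\dhInv$), about the AC-invariance of term size underlying the occurs-check, and about the fact that the distributivity and cancellation rules of $\rightarrow_{DH}$ only delete or rearrange atoms and never introduce a $\mu$-headed subterm not already present. Everything else is routine inspection of $\rightarrow_{DH}$-normal forms.
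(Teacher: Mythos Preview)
Your proposal is correct and follows essentially the same approach as the paper: both arguments rest on (i) an occurs-check for case~(1) --- the atom $\mu(g^{\theta(t)})$ cannot occur in (the normal form of) $\theta(t)$ itself, so the monomial $\theta(b)*\mu(g^{\theta(t)})$ survives --- and (ii) the observation for case~(2) that $\mu$-headed atoms carry no sign and $\theta(a),\theta(b)$ are positive fresh monomials, so $m_1+m_2$ cannot vanish. Your write-up is considerably more explicit about normal forms and about where exactly the well-sortedness of $\mathit{FrE}$ variables is used, whereas the paper phrases the same ideas more informally (``recursive definition'', ``$\mu$ has no associated equations''); but the substance is the same.
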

We provide the proof in Appendix \ref{app:DHdefs}. Essentially, this follows from the fact that fresh variables cannot be instantiated with function applications. Hence in $1)$ only the $t$ term can be instantiated with an $\dhInv$ term, creating an infinite recursion since $t$ appears in both $X$ and $Y$. In $2)$ the $\dhInv$ symbol can only appear in a $\mu$ term, which however has no associated equations, and hence cannot be `pulled out'.

Note that the definition excludes $X$ from cancelling itself out (by excluding $\theta(X)=1$) and hence is asymmetric. To consider both directions, we define $\NoCanc$ on \emph{all} the root terms of a term simultaneously. 
\begin{definition}
Abusing notation, for a term $t$, the condition $\NoCanc(t)$ holds if $\NoCanc(X,Y)$ holds for all $X\in \roots(t)$ and $Y\in \roots(t)$, with $X\neq Y$.
\end{definition}

As mentioned, often a much weaker requirement suffices. If we know that $X_1$ cannot be cancelled out by the other root terms of a term $t=X_1\cdot ... \cdot X_n$, then it suffices to prove that $X_1$ cannot be deduced to conclude that the entire term $t$ cannot be deduced. In this case, we do not need the non-cancellation property on the \emph{other} root terms. Hence we also define the weaker condition for a root term $X\in \rt(t)$ of a term $t$: 
\begin{definition}
\label{def:NoCanc2}
\begin{align*}
\NoCanc'(X,t) \Leftrightarrow \forall Y\in rt(t)\setminus \{X\}.\ \NoCanc(X,Y).
\end{align*}
\end{definition}

Before applying a constraint solving rule concerning a given term $t$, our algorithm checks if these sufficient conditions on its root terms are satisfied. If they do not hold, we store the (minimal) set of pairs of root terms that need to satisfy the $\NoCanc$ condition to guarantee the correctness of our approach and output these assumptions. 
%Finally, if hashed values are used as exponents, we also are guaranteed that the exponents do not cancel each other out, as shown in Appendix \todo{ADD}. 
%\begin{theorem}
%\label{thm:NoCanc}
%If $X = g^{a\mu(\alpha)}$ and $Y = y_1\cdot ... \cdot y_n$ and $\forall i =1,\ldots,n:$ $y_i=\alpha^{c_i} \lor y_i = g^{b_i \mu(\alpha)}$, where $g,a,b_i,$ and $c_i$ are distinct constants that do not use the function symbol $\mu$, then $\NoCanc(X,Y)$ holds.
%\end{theorem}
%Given a protocol, we say that the action fact $ActionFact(...)$ satisfies the non-cancellation property in position $n$, if for every occurrence  in the protocol rules of the action fact $ActionFact(..,m_n,..)$, $\NoCanc(m_n)$ holds.
In general, our approach is guaranteed to work for every protocol and security property such that every $\mathit{DH}$ term $t$ appearing as an argument of an action fact in the property statement either already satisfies $\NoCanc(t)$ or the arguments of the corresponding action facts in the protocol satisfy the $\NoCanc$ property. %

 \section{Alternative strategy for DH terms}
 \label{sec:newrules}
%As mentioned, to determine the existence of a $\mathit{DH}$-graph satisfying a given formula, Tamarin extracts a set of constraints from the targeted formula. 
We extend \emph{all} of Tamarin's current constraint solving rules to be applicable to all constraints, except for the rule $S_{=}$, and the rules that introduce equality constraints. Recall that we cannot apply the rule $S_{=}$ to constraints containing $\mathit{DH}$ terms as it requires unification in the $\mathit{DH}$ rewrite system. The rules introducing equality constraints are: (1) $S_@$, which tries to match an action fact coming from a security property (lemma) with an action fact appearing in a rule label and (2) $S_{Prem}$, which tries to match a rule's premise fact with another rule's conclusion fact.  %which applies only to Tamarin's current $\mathit{Msg}$ sort.

%However we still want to ensure that Tamarin's constraint solving algorithm remains sound and complete with respect to these extended dependency graphs. 
We therefore introduce new constraint solving rules that complement these three rules for constraints containing Diffie-Hellman terms. Our new rules incorporate \emph{algebraic} methods and replace unification by interpreting our rewrite theory as a traditional algebraic structure (i.e. a real field extension) and leveraging mathematical methods such as Gaussian elimination or Buchberger's algorithm.
 
%Finally, we extend Tamarin's constraint solving relation, described in Section \ref{sec:tamarin-int}, to also consider our new rules. 

%We show that this extended constraint solving relation is sound (according to the previous definition) respect to $\mathit{DH}$-extended dependency graphs, and \emph{under some assumptions} (see Section \ref{sec:assumptions}) also complete. This guarantees the correctness of Tamarin's verdicts and hence whether a property $\exists x. \phi$ or $\forall x. \phi$ is verified or not will still be determined by whether the constraint systems $\{\phi \}$ or $\{\neg \phi \}$ can be solved. Recall that a constraint system is solved when we can directly extract a model from it. Solvable systems are characterised in \cite{Tamarin} and we recall this in Appendix \ref{sec:appendixdefs}. 

To produce mathematical equations, %we introduce a final sort for our symbolic terms, representing the equations' unknowns. This sort is internal to Tamarin, and is inaccessible to the user. 
%We also
define a function that generalizes substitutions by variables of sort $E$. 
%\begin{definition}
%We define the sort $\mathit{varE}$, representing an unknown of sort $E$. We call a variable of sort $\mathit{varE}$ an \emph{unknown} variable.
%\end{definition}
%In the following sections, we show how we use these variables to define systems of equations from our terms, and solve these linear systems according to these unknowns. 

\begin{definition}
\label{def:gen}
Let $\sigma$ be a substitution whose domain contains variables of sort $E$ or its subsorts. We define the substitution $gen(\sigma)$:
$$gen(\sigma) (v) = \begin{cases}
\sigma(v)  &  \text{ if }v:\mathit{FrE} \\
\sigma(v)+ Y_v  &  \text{ if }v:E \land \sigma(v)\neq v \\
Y_v  &  \text{ if }v:E \land \sigma(v)= v% \text{ or }v:\mathit{E} 
\end{cases}$$
where $Y_v$ is a new variable of sort $\mathit{E}$.
\end{definition}
Finally, we define a simplified equational theory for which a unification algorithm is known. 
\begin{definition}
We set $\varepsilon_{simp}$ to be the equational theory obtained from $\Sigma_{DH}$ by deleting any equation containing the operators $+$, $-$, and $0$.
\end{definition}

We now show our approach for action facts $Fact(\cdot)$ that are not the adversarial $K$ fact, covering honest protocol execution.
 We will introduce constraint solving rules for adversarial $K$ facts in Section \ref{sec:adv}. 
%We replace the rules that solve a premise or action constraint containing a fact $Fact(t)$ with the following rules. 

Assume that $\NoCanc(t)$ holds for a certain term $t$. As a starting point, we also search for conclusions/actions of the form $Fact(h)$ in the protocol rules. Let $L$ be the leaked set of the rule to which $Fact(h)$ belongs. To see if $t$ and $h$ are unifiable, we check if each root term of $t$ can be unified in $\varepsilon_{simp}$ with a root term of $h$. Note that Tamarin currently supports unification in  $\varepsilon_{simp}$, returning a complete set of most general unifiers per unification query. Since we generalize the returned substitution, in most cases (see Appendix \ref{app:newrules}), we do not need the complete set and instead consider one unifier per query. Note that for unification queries in $\varepsilon_{simp}$, a complete set of unifiers can consist of more than 40 unifiers. 

In particular, let $t_1,\ldots,t_n$ be the root terms of $t$, and $h_1,\ldots,h_m$ the root terms of $h$. Consider all possible $n$ (possibly repeated, by introducing a $+$ symbol) combinations of the subterms $h_i$. 
We solve the equations 
$$\{t_i =_{\varepsilon_{simp}} h_{j_i}\}_{i=1}^{n}$$
 for $i=1,\ldots,n$, $j_i \in \{1,\ldots,m\}$. %where each $z_i$ is a fresh $E$-variable that can only be instantiated with elements in $Gen(L)$. The $z_i$'s are introduced only to allow $t_i$ to be \emph{contained} but not equal to $h_{j_i}$. Since the $z_i$ variables are instantiated with known elements in $Gen(L)$, we then discard them. We explain this further in Appendix \ref{app:newrules}.
If there is an $i$, for which the above equation has no unifier for some $j_i$, we conclude that the terms are not unifiable. 
%\begin{enumerate}
%\item We find a unifier $\sigma'^{1}_j$ to the equation $t_1 =_{\varepsilon_{simp}} s_j$ for $j = 1,\ldots,m$. Set $\sigma^{(1)}$ to be $\{gen(\sigma'^{1}_j)\}_{j}$.
%\item For each $i=2,\ldots,n$ solve $\sigma^{i-1}(t_i) = \sigma^{i-1}(s_j)$ for all $j \in \{1,\ldots,m\}$, for all $%\sigma^{i-1}\in \sigma^{(i-1)}$. Let $\sigma'$ be such a unifier and define let $\sigma'^{(i)} = \{gen(\sigma')%\circ \sigma^{i-1}\}$ for all solutions.  
%\end{enumerate}
%If at some point in this algorithm, some $t_i$ does not admit an unifier with a root term of $s_i$, we conclude the constraint is not solvable.
If however there is a unifier $\sigma$ that unifies each root terms of $t$ with some root term of $h$ %(where $\sigma$ discards the $z_i$ values), 
we then solve the algebraic equation $$gen(\sigma)(t) \dheq gen(\sigma)(h)$$ for the unknowns $Y_i$ introduced by the function $gen$. %We will explain this in more detail in the following Section. 
Note that we will henceforth use the symbol $\dheq$ to denote equality constraints between terms containing unknown variables. % and we omit the subscript $L$ when it is clear from context. 

In Appendix \ref{app:newrules}, we define the constraint solving rules that perform these steps. We name our new rules with the letter $C$ to distinguish them from existing constraint solving rules (named with the letter $S$):

\begin{enumerate}[itemsep=0pt, topsep=1pt, partopsep=0pt]
\item[(i)] $C_{@}$ replaces $S_{@}$ for $\mathit{DH}$ terms and performs a case split, generating a $gen(\sigma)(t) \dheq gen(\sigma)(h)$ constraint for each $\sigma$ obtained in the procedure above;
\item[(ii)] $C_{Prem}$ replaces $S_{Prem}$ for $\mathit{DH}$ terms and is defined in the same way as $C_{@}$, but with premises and conclusions instead of action facts;
\item[(iii)] $C_{=}$ replacing $S_{=}$ for $\mathit{DH}$ terms, which solves the $gen(\sigma)(t) \dheq gen(\sigma)(h)$ constraints (see Section \ref{sec:C=}). 
\item[(iv)] $C_{var}$ replaces $S_{@}$ and $S_{Prem}$ if $t$ (or $h$) is a variable. It directly instantiates the variable $t$ with the term $h$ (or respectively $h$ with $t$), skipping the above rules. 
\end{enumerate}
Observe that these rules replace their $S$ counterparts \emph{only} for $\mathit{DH}$-sorted terms. For constraints that contain terms of sort $\mathit{Msg}$, we apply the usual $S$ rules.

In Appendix \ref{app:newrules}, we also show that this extended constraint solving relation is sound (according to the previous definition) with respect to $\mathit{DH}$-extended dependency graphs, and \emph{under some assumptions} (i.e. non-cancellation of subterms of $t$) it is also complete. This guarantees the correctness of Tamarin's verdicts and hence whether a property $\exists x. \phi$ or $\forall x. \phi$ is verified will be determined by whether the constraint systems $\{\phi \}$ or $\{\neg \phi \}$ can be solved. %Recall that a constraint system is solved when we can directly extract a model from it. Solvable systems are characterised in \cite{Tamarin} and we recall this in Appendix \ref{sec:appendixdefs}. 

%\todo{state here the main theorem that allows you to show completeness and soundness (i.e. that allows you to reduce to the problem of unifying indicators). Non trivial theorem!}

\textbf{Examples. } 
Returning to Example 1 from in the previous section, we were trying to solve the constraint set:
\begin{exbox1}
\label{ex:exec}
$$\{\mathit{BSent}(\mathit{msg})@i,\ \mathit{AReceived}(\mathit{msg})@j\}.$$
\end{exbox1}
The only rule that contains a \textit{BSent} action fact is \texttt{BobEncrypts}. Since $\mathit{msg}$ is a variable, we can apply $C_{var}$ to replace $\mathit{msg}$ with $g^{\sim m}$ and obtain:
\begin{exbox1}
\label{ex:exec}
\vspace{-2mm}
\begin{align*}
& \{BSent(g^{\sim m})@i, i: BobEncrypts\\
 & AReceived(g^{\sim m})@j\}.
 \end{align*}
\end{exbox1}
The term $t = g^{\sim m}$ is not a variable, so we must apply $C_{@}$ to solve the constraint $\mathit{AReceived}(g^{\sim m})@j$. Observe that $\NoCanc(g^{\sim m})$ trivially holds. The only rule with the action fact \textit{AReceived} is \texttt{AliceReceives}, with argument $h = (g^{c_1})^{-ska}\cdot g^{c_2}$. %By Definition \ref{def:Lrule}, the leaked set of that rule consists of all the $E$-variables appearing in its premises, i.e. $L= \{ska, c_1, c_2\}$. We compute the indicator $Ind_L(g^{\sim m})=g^{\sim m}$ and t
We try to unify the root term $g^{\sim m}$, in the simplified equational theory $\varepsilon_{simp}$, with a root term of $h=(g^{c_1})^{-ska}\cdot g^{c_2}$:
\begin{align*}
&g^{\sim m} =(g^{c_1})^{-ska} \text { or }\\
&g^{\sim m} =g^{c_2}. 
\end{align*} 

Choosing the second equation and unifying in $\mathcal{E}_{Simp}$ gives $\sigma= \{ c_2\mapsto \sim m\}$. The substitution $\sigma$ leaves the other variables of our system unchanged. We generalize (see Definition \ref{def:gen}) to $gen(\sigma)= \{ c_2 \mapsto  (\sim m) +Y_1, c_1\mapsto Y_2 \}$, producing the constraint system:

\begin{exbox1}
\label{ex:exec}
\vspace{-1mm}
\begin{align*}
& \{BSent(g^{\sim m})@i, i: BobEncrypts \\
 & AReceived(g^{\sim m})@j,  j: AliceReceives \\
 & g^{Y_2(-ska)}\cdot g^{(\sim m)+Y_1} \dheq \sim g^{\sim m} \}.
 \end{align*}
\end{exbox1}

%Observe that the first equality could have also been a possible choice, with unifier $\sigma= \{ z \mapsto -ska, c_2\mapsto \sim m\}$. However, this would be excluded in the next phase, as the adversary cannot produce the term $g^{m*ska}$ \todo{unclear how the adversary comes in}.

\subsection{Algebraic Methods}
\label{sec:C=}
%In the previous section, when solving a $Fact(t)@i$ constraint or a $Fact(t)$ premise, we generated a $ContainsIndicators(h, t, S)$ constraint representing that the term $h$ contains all indicators of the term $t$. The rule introducing this constraint also introduces abstract variables $Z_i$ and $W_i$ for each root term $t_i \in rt(t)_{i=1,\ldots, n}$ of $t$. 

In this section, we describe the rule $C_=$, which excludes or finds a substitution for all the $Y$ variables %$\{Z_i, W_i\}^n_{i=1}$ 
that makes the terms $t$ and $h$ equal. %By abuse of notation we henceforth assume term $t$ to be of the form $g^t$, term $h$ to be $\prod g^{h_i}$, where each $g^{h_i}$ corresponds to a root term of $h$.  
 
 For honest executions, all variables appearing in $\sigma(t)$ and $\sigma(h)$ can be manipulated. 
 However, for adversarial behaviour, some exponents are generally unknown to the adversary (and hence cannot be inverted, etc.). To distinguish between the two, for the rest of the section, we will work with a fixed leaked set $L$ and its complementary set, the secret set $S$.  We denote them by $L= \{l_1,\ldots,l_r\}$ and $S=\{s_1,\ldots,s_l\}$ respectively. For honest executions, however, we set $S=\emptyset$.
 
%For the rest of the section, we will work with a fixed leaked set $L$ and the corresponding secret set $S$. We denote them by $L= \{l_1,\ldots,l_r\}$ and $S=\{s_1,\ldots,s_l\}$ respectively. 
Recall that we assume all terms to be of the form $g^e$, for a fixed group generator $g$. 
 Hence we can focus just on the expressions of the exponents of terms $t=g^{e_t}$ and $h=g^{e_h}$. By abuse of notation, we let $t=e_t$ and $h=e_h$ denote the terms in the exponents. 

Since exponents in $S=\{s_1,\ldots,s_l\}$ are secret, we treat them as abstract variables: for example, we cannot compute their inverses, nor multiply them with other terms. We cannot instantiate our unknowns  $Y_i$ with secret values. In contrast, we can fully manipulate elements in $L= \{l_1,\ldots,l_r\}$. 
We can hence interpret $t$ and $h$ as a polynomials with variables in $S=\{s_1,\ldots,s_l\}$ and coefficients in the polynomial field $\mathbb{Q}(l_1,\ldots,l_r)[Y_i]$. This is similar to what is done in \cite{Dougherty14Decidability}. 
%$$\mathbb{Q}(l_1,\ldots,l_n)[Z_i,W_i][s_1,\ldots,s_n].$$

%We then extract mathematical equations matching the coefficients (elements of $\mathbb{Q}(l_1,\ldots,l_n)[Z_i,W_i]$) of these polynomials and solve them for the variables $Z_i$ and $W_i$. If a solution exists, it will give a substitution unifying $t$ and $h$. 

%\subsection{Solving linear systems}
%\label{sec:C=}

The function that interprets symbolic DH terms as elements in the field of rational functions in $r$ variables $\mathbb{Q}(l_1,\ldots,l_r)$ is a straightforward translation, which we define recursively in Appendix \ref{app:algdef}. Clearly, we can define this translation only from the set $\Gen_{DH}({l_1,\ldots,l_r})$, since different variable names do not have counterparts in $\mathbb{Q}(l_1,\ldots,l_r)$. 

\begin{definition}
Given a set of terms $\{s_1,\ldots,s_l\}$, the function $\toPoly(t)$ interprets a term $t_i\in \Gen_{DH}({l_1,\ldots,l_n,s_1,\ldots,s_l,Y_i})$ as a polynomial with coefficients in $\mathbb{Q}(l_1,\ldots,l_n)[Y_i]$ and variables $s_1,\ldots,s_l$, by naturally interpreting each $s_i$ appearing in $t$ as a variable, and the rest as coefficients in $\mathbb{Q}(l_1,\ldots,l_n)[Y_i]$.% via the function $toQ(\cdot)$.
\end{definition}

For example, for $L=\{x,y\}$ and $S =\{u,\dhInv(v),w\}$ we have $$\toPoly(x**\dhInv(y)**w**Y_1 + u**x - u**y**Y_2 + \dhInv(v)**w)$$ $$= \cfrac{x}{y}\cdot Y_1 \cdot S_3 + (x- y\cdot Y_2)\cdot S_1 + S_2\cdot S_3,$$
which is an element of $\mathbb{Q}(x,y)[Y_1,Y_2][S_1,S_2,S_3]$, where we have changed the variable names $u,\dhInv(v),w$ to $S_1,S_2,S_3$ for clarity. Note that since they are just variables, their names are irrelevant.

\iffalse
Finally, we show that, restricted on $\mathit{NZE}$ terms in $Gen_{DH}(l_1,\ldots,l_n)$ these two functions are bijective.

\begin{theorem}
Let $t\in Gen_{DH}(l_1,\ldots,l_n)$ be of sort $\mathit{NZE}$. Then, $$toE(toQ(t)) = t.$$ 
\end{theorem} 

Note that the above doesn't necessarily hold for terms of sort $E$, as the composition $toE \circ toQ$ might cast a term of sort $E$ to sort $\mathit{NZE}$. Hence we should ensure that all non-zero terms we apply the function to are of sort $\mathit{NZE}$. \todo{still need to better understand what could go wrong with this sort casting..}
\fi

Returning to our generated constraints $gen(\sigma)(t)\dheq gen(\sigma)(h)$, we would like to answer the following question:
Does there exist a substitution $\sigma$ instantiating the $\{Y_i\}_{i=1,\ldots,n}$ variables such that $$\toPoly(gen(\sigma)(h)) = \toPoly(gen(\sigma)(t))?$$ 

Defining $H:= \toPoly(gen(\sigma)(h_i))$, and $T := \toPoly(t)$, this corresponds to solving the equation with coefficients in $\mathbb{Q}(l_1,...l_r)$ for the given variables $\{Y_i\}_i $:

\begin{equation}
\label{eq:DHActionFacts}
H(s_1,\ldots,s_r) = T(s_1,\ldots,s_r).
\end{equation}

Matching each coefficient of the polynomials, this gives a system of equations in the variables $\{Y_i\}_i$. If we assume $H$ and $T$ are linear in the $Y_i$, this system of equations can be solved via Gaussian elimination. For the case of non-linear equations, one could use more powerful mathematical tools. For example, Barthe et al., in \cite{barthe18}, show that this can be solved using Buchberger’s algorithm for Gröbner basis computations.
We have currently only implemented the Gauss elimination strategy in the Tamarin prover as our case studies all produced linear equations. 

Assume there is a solution to Equation (\ref{eq:DHActionFacts}), and let $\{(y^{(1)}_1,...y^{(1)}_n)$,\ldots,($y^{(m)}_1,\ldots,y^{(m)}_n)\}$ be a basis %(\emph{over $\mathbb{Q}$ and not $\mathbb{Q}(l_1,\ldots,l_n)$}) 
of the solution space to the Equation (\ref{eq:DHActionFacts}). %In particular, for each free variable $Y_i$, we consider two cases: it is $0$ or it is a new variable $v_{Y_i}$ of sort $E$. 
Let $\sigma^j = \{Y_i\mapsto y^{(j)}_i\}$ be the substitution obtained by interpreting each $y^{(j)}_i$ as an abstract term. 
%\begin{definition}
%Let $\{(z^{(1)}_1,w^{(1)}_1,...z^{(1)}_n,w^{(1)}_n),\ldots,(z^{(m)}_1,$ $w^{(m)}_1,...z^{(m)}_n,w^{(m)}_n)\}$ be a basis of the solution space to the Equation (\ref{eq:DHActionFacts}).
%Interpret each $z^{(j)}_i$ and $z^{(j)}_i$ as abstract terms (via the function $toE$). 
%We define $\sigma^j$ to be the corresponding substitution $$\sigma^j = \{Z_1 \mapsto z^{(j)}_1, W_1\mapsto w^{(j)}_1,\ldots, Z_n\mapsto z^{(j)}_n, W_n\mapsto w^{(j)}_n\}.$$
%\end{definition}
The rule that solves the DH-equality constraints is then as follows:
\begin{rul}[$C_{=}$]
\namedthmlabel{rul:C=}
If Equation (\ref{eq:DHActionFacts}) obtained from the constraint $t \dheq h$ admits a solution, let $\{\sigma^j\}$ be the set of substitutions corresponding to a basis of the solution space. % (obtained by instantiating each free variable with $0$ or a new $E$-variable). 
$$\crule{t\dheq h,\ \Gamma}{ \{ \sigma^j(\Gamma) \}_j  }{C_{=}}.$$
\end{rul} 

We show in Appendix \ref{sec:proofsrules} that Tamarin's contraint solving relation remains sound when adding this rule and it is also complete under the non-cancellation assumption on $t$. 

The constraint system from Example 1 $$g^{(c_1 + Y_2)* (-ska)}\cdot g^{(\sim m)+Y_1}\dheq g^{(\sim m)}$$ produces the equation:
$$Y_2* (-ska) + Y_1 + (\sim m)= (\sim m).$$

Since $S=\emptyset$, this is the only equation we need to solve, with solution: 
$Y_1 = ska*Y_2.$ Introducing a new fresh variable $v_{Y_2}:E$, the basis of the solution space is $\langle ska*v_{Y_2}, v_{Y_2}\rangle $ corresponds to the substitution

%Since $X\in \mathbb{Q}(c_1,ska, m)$, a basis of the solutions over $\mathbb{Q}$ is given by $\{1,c_1,ska, \sim m\}$. \todo{fix basis, remark about it being rather big:(, also mention that for now we go over all the basis, but for efficiency reason one should keep the new variable "X", and only go over the basis when looking at solving adversarial facts, as the adversarial leaked set is generally much smaller. }
%We need to consider the three cases: (i) $Y_1=0,Y_2\neq 0$; (ii) $Y_2=0,Y_1\neq 0$, or (iii) $Y_1\neq 0, Y_2\neq 0$. The case where $Y_2=0$ 
%The choice $X=1$ produces the substitution:
$\sigma = \{Y_2\mapsto v_{Y_2}, Y_1\mapsto ska*v_{Y_2}\},$
and thus the constraint system
\begin{exbox1}
\label{ex:exec}
\begin{align*}
& \{BobSent(g^{\sim m})@i, i: BobEncrypts \\
 & AliceReceived(g^{\sim m})@j,  \sigma(j: AliceReceives) \}.
 \end{align*}
\end{exbox1}

The next steps in solving this system would be to solve the premises of the rule $\sigma(j: AliceReceives)$, given by $In(\langle g^{v_{Y_2}},g^{ska*v_{Y_2}}\cdot g^{\sim m}\rangle)$. We illustrate in the following section the optimizations we make for $In/Out$ terms and our approach for combined terms. 

\begin{remark} 
In the above we have omitted two major points. First, by interpreting symbolic variables as elements in $\mathbb{Q}$, we are assuming that we can invert them. However, we should not be inverting terms that are $0$. So far our implementation only covers the case where the returned substitution only inverts expressions that cannot reduce to $0$ (such as $\mu$ terms or terms containing $\mathit{FrE}$ variables) - as was the case for our examples. As future work, when an element needs to be inverted, we would need to consider two cases: (i) interpreting the element as a non-zero, we can proceed as normal, (ii) considering the case where the element is $0$, and recompute the equation to solve with this term instantiated to $0$ and start again. 

Second, our translation from symbolic terms to $\mathbb{Q}(L)$ needs to account for terms that contain the $\mu$ operator. Therefore, following the approach in \cite{Dougherty14Decidability}, we introduce abstract variables replacing $\mu$ terms and add them to $L$. We then split over all possible cases of whether any of these $\mu$-terms are equal. In particular, for each case where we assume $\mu(t_1)=\mu(t_2)$ for some terms, we first solve the equation $t_1=t_2$.   
\end{remark}

%Using the same approach, we can unify $g^{c_1}$ with $g^{\sim y}$ and $g^{ska*c_1}\cdot g^{\sim m}$ with $g^{ska*(\sim y)}\cdot g^{\sim m}$, coming from the $Out$ facts of the $BobEncrypts$ rule, producing the desired trace. 

%Observe that if the other choices $W\in \{1, c_1,ska, \sim m \}$ had been picked first, they would have been excluded in the second step when trying to solve the $In(\cdot)$ premise facts. Our heuristics always try to first choose the substitutions that do not involve terms in $L$.  

\subsection{Adversarial premise facts}
\label{sec:adv}
As mentioned in Section \ref{sec:tamarinintro}, Tamarin has a default Dolev-Yao adversary that can read everything on the network and send messages on the network. It can use the given equational theories, and hence its knowledge set is always determined by the set of all terms sent in the network, all the terms of public sorts and all terms deducible from these terms by applying the equational theories' function symbols. Tamarin has a smart way to avoid infinite loops, e.g. re-encrypting and decrypting continuously, which we do not describe here, but see \cite{Tamarin}. 

For any rule \textit{i:ru} containing a premise fact $K(\cdot)$, we perform an optimization in our implementation, bypassing some applications of the Rule \ref{rul:CPrem}. Recall that for each operator $f$ of arity $n$ in our $\mathit{DH}$ theory, we introduced a message deduction rule \texttt{f:[K(x\_1),\ldots,K(x\_n)] --> K(f(x\_1,\ldots,x\_n))}. Each such rule produces a conclusion $K(\cdot)$ and hence applying the Rule \ref{rul:CPrem} repeatedly without a suitable heuristic leads to an infinite search space. To address this issue, our implementation does \emph{not} include the message deduction rules corresponding to the DH operators among our protocol rules. Instead we represent the adversary's ability to apply these operators directly when passing to the algebraic setting. 

In particular, we introduce the new constraint solving rules \ref{rul:CPremK} %\ref{rul:CeqSimpK} 
and \ref{rul:C=K} that replace the Rules \ref{rul:CPrem} and \ref{rul:C=} in the case where the premise we want to solve is $K(t)@i$ for a term $t$ for which $\NoCanc(t)$ holds. 

\subsubsection{General terms}
We first assume that the term $t$ is \emph{not} a variable. 

As a first step, we must determine whether the adversary can deduce all the indicators respect to the adversarial leaked set of each root term of $t$. Recall that the indicator of a root term represents the minimal information required by the adversary to construct that term from its current knowledge. If the adversary cannot deduce all such indicators, we can immediately conclude that it cannot deduce the term $t$ itself.

Since the adversary can manipulate and combine different terms he knows, we do not unify the indicators of the root terms of $t$ with the root terms of a single term in an $Out$ facts. Rather, we unify the indicators of $t$ with the indicator of root terms of possibly different $Out$ fact, as long as they are of the correct sort. In particular, if $t$ has $n$ root terms, we consider all possible $Out(h_1),\ldots,Out(h_n)$ that appear as the conclusions of protocol rules and try to unify each indicator of $t$ with the indicator of a root term of each $h_i$:
 $$\{Ind_L(t_i)=Ind_L(h_i)\}_{t_i\in rt(t), Ind_L(t_i)\neq e_G}.$$

For this, it is necessary to establish the adversary's knowledge set $L$ at timepoint $i$. In particular, for each choice of $h_1,\ldots,h_n$, we only need determine whether the adversary could have learnt the exponents appearing in $t$, $h_1$,\ldots, $h_n$ previously.  To do so, we consider the current leaked set $L^{\Gamma}_i$ corresponding to the current constraint system $\Gamma$, and consider all possible ways of expanding it to a larger set $L$: for each exponent variable $e$ that appears in $t$, $h_1$,\ldots, $h_n$, if it is not already in the current leaked set, we case split on whether the adversary could learn $e$ or not. In the first case, we add the constraint $K(e)@j_e, j_e < i$ to $\Gamma$ and add $e$ to $L$. We will explain how we solve these variable constraints in the following subsection. In the second case, we consider $e$ to be secret and we raise a contradiction if $e$ is ever learned by the adversary.  
 %For variables that are not of sort fresh, we leave these goals for last.

For each option of $L$, let $\sigma$ be the generalized (via the $gen$ function) unifier of the indicators, with respect to $L$, of $t$ with root terms of terms $h_1,\ldots,h_n$. Then we introduce additional variables $X_1,\ldots,X_n, X_{n+1}$ of sort $\mathit{varE}$, introduce the constraints $K(X_i)@j_i, j_i < i$, and consider the linear equation 
$$\toPoly(X_1\sigma(h_1) + \ldots + X_n\sigma(h_n) +X_{n+1})= \toPoly(\sigma(t)).$$
The variables $X_k$ represent the adversary exponentiating $h_k$ with some terms he can derive and extra variable $X_{n+1}$ represents the adversary multiplying the $h_i$'s with another derivable term.
We match the coefficients of the above polynomials in the variables $S=\{s_1,\ldots,s_l\}$, and we obtain a system of linear equations to solve. Crucially, for every solution, $X_i\in \mathbb{Q}(l_1,\ldots,l_n)$. So by the definition of $L$, $X_i$ is indeed deducible by the adversary, since the adversary can apply all $\mathit{DH}$ operators. The solution hence represents a valid attack.

In general, we try to solve the variable constraints with fresh variables as soon as they are added to the system. As a result, in practice most of the possible $L$'s are directly excluded before proceeding to solve the corresponding linear equations.

\subsubsection{Variable terms}

We now turn to goals of the form $K(v)@i$, where $v$ is a variable of sort $FrE$. For example, the goals we have introduced above are of this form. If $v$ is already in the leaked set $L^{\Gamma}_i$ of the constraint system, we mark this goal as solved. Otherwise, we consider it to be secret, and we try solve the goal with the strategy above. That is, we try to unify this variable in the simplified $\mathit{DH}$ theory with root terms $h_i$ of $Out$ facts containing an argument of sort $E$. If there are other variables appearing in the $h_i$ terms that are not $v$ and that are not in the leaked set $L^{\Gamma}_i$, we case split on them being known or not. Thus the only difference from the previous case is that we do not make a case split on the variable $v$ itself.

For variables of sort $E$, we delay solving $K^{\uparrow}(\mathit{v:E})$ goals until we have solved all other goals and hence collected all the necessary restrictions on the variable $\mathit{v:E}$. Solving other goals often involves a substitution that maps $\mathit{v:E}$ to a different composite term. 
%Producing variable constraints when solving variable goals leads to an infinite search space. We tackle this problem by appropriately ordering the goals. Namely, we prioritize $K(\mathit{v:FrE})$ goals, when $v$ is a fresh variable. Unifying a fresh exponent and the root terms of $Out(\mathit{h:E})$ terms is easier, as $\mathit{v:FrE}$ can only be unified with another variable of sort $\mathit{v:FrE}$. In most cases, unification fails and hence generally this does not create further goals. For general $E$-variables, we delay solving $K(\mathit{v:E})$ goals until we have solved all other goals and hence collected all the necessary restrictions on the variable $\mathit{v:E}$. Solving other goals often involves a substitution that maps $\mathit{v:E}$ to a different composite term. Otherwise, we define a system to be solved if the only constraints left are of the form $K(\mathit{v:E})@i$ and in that case, we do not have to solve the constraints further. 

We provide the full details of the constraint solving rules corresponding to adversarial action facts in Appendix \ref{sec:KINpremises}, see Rules \ref{rul:CPremK} and \ref{rul:CvarK}.

%\subsection{Soundness and completeness}
%\label{sec:soundcomplete}
%\todo{formalize assumptions, proof of completeness for those cases, etc. all this part is still missing!}

\iffalse
Let us return to our example.  
%\tcbox[boxrule=0pt, center, tikznode, top=1mm,bottom=1mm]{
%\footnotesize \texttt{Pk(A2,}$g^{h_2}$ \texttt{), }$Fr_E(y_2), Fr_{E}(m_2), $ \texttt{Sk(A, x), }\\
%\footnotesize \texttt{In(<}$g^{c_1},g^{z_2}\cdot g^{m_2*z}$ \texttt{>), \NoCanc(}$g^{z_2},g^{m_2*z}$ \texttt{),}\\
%\footnotesize \texttt{ContainsIndicator(}$(g^{z_2}\cdot g^{m_2*z})\cdot g^{c_1}, g^{m_2}$ \texttt{)}
%}
\begin{exbox1}
\vspace{-3mm}
\begin{align*}
ContainsIndicator((g^{-c_1*ska} \cdot (g^Z\cdot g^{\sim m'*W}), g^{\sim m'} 
\end{align*}
\end{exbox1}

Recall the leaked set we were working with is the one from the rule \texttt{AliceReceives}, which in this case with all instantiated variables is $L_{AliceReceives}=\{ska, c_1, c_1, m\}$. We now need to solve the equation:
$$-c_1*ska + Z + W*m = m $$
which admits as solution $\sigma=\{Z \mapsto (1-W)*m + c_1*ska\}$. 

Observe that the case $W=1$ is the execution we would expect, but executions with adversarially chosen $Z$ are also indeed possible!

Finally, applying the substitution $\sigma$ would instantiate $AliceReceived(g^{-c_1*ska}\cdot g^{(1-W)*m}\cdot g^{c_1*ska} \cdot g^{W*m})$ which in normal form reduces to
$AliceReceived(g^m)$, giving us an execution of the protocol.   
\fi

\section{Combination with other equational theories. }
\label{sec:combination}
To profit from Tamarin's full expressive power, our approach supports combining the full Diffie-Hellman group structure with other equational theories, both Tamarin's built-in ones (e.g. pairs, symmetric/asymmetric encryption etc.) and user defined ones. In particular, since $\mathit{DH}$ is a subsort of $\mathit{Msg}$, Diffie-Hellman terms are valid arguments of other equational theories' operators. However, as is also natural, we cannot apply $\mathit{DH}$ operators to terms that are not of sort $\mathit{DH}$.

%So far, we have only introduced rules that consider facts that contain terms that are strictly of sort $\mathit{DH}$. 
To handle facts that contain combined terms (i.e. terms that contain both $\mathit{DH}$ and non-$\mathit{DH}$ variables), we follow Baader and Schulz's approach \cite{baader-schulz} of combining unification procedures. In particular, we successively replace $\mathit{DH}$ 
sub-terms by new $\mathit{DH}$ variables, effectively `hiding' the DH term. We call this process \emph{cleaning}. Note that we can unify cleaned terms with the user-defined equational theory $\EqUsr$, excluding any $\mathit{DH}$ equation. The solution to this unification problem might return substitutions that unify two different $\mathit{DH}$ variables. If we keep track of what each of these variables replaces, this corresponds to a unification constraint on two now purely-DH terms.  We can then use the approach described in the previous section for these constraints. 

More formally, in Appendix \ref{app:DHdefs}, %denoting by $\{x_1\mapsto t_1,\ldots, x_n \mapsto t_n\}$ the substitution that replaces each variable $x_i$ with the term $t_i$,  
we define a function $\cl$ that \emph{cleans} combined terms by replacing $\mathit{DH}$-terms by fresh variables and storing this information. 
For example, when trying to match the $In(\langle g^{c_1}, g^{c_2} \rangle)$ fact with an $Out(\langle g^y, g^m\cdot pka^y \rangle)$ fact in Example 1, Tamarin cleans the terms by replacing the $\mathit{DH}$ terms with fresh variables: $u_1\mapsto g^{c_1}, u_2\mapsto g^{c_2}, v_1\mapsto g^y,$ and $v_2\mapsto g^m\cdot pka^y$. It then produces the unification query $\unify(\langle u_1,u_2\rangle, \langle v_1, v_2 \rangle)$. This can be solved with Tamarin's current variant-based unification algorithm, returning a most general unifier, $\sigma: \{u_1\mapsto v_1, u_2\mapsto v_2\}$. Tamarin will then use the stored $\mathit{DH}$ terms to transform this unifier into new constraints
$$g^{c_1}=_{DH} g^y\quad\text{ and }\quad g^{c_2}=_{DH} g^m\cdot pka^y.$$
These can be solved using the rules from Section \ref{sec:newrules}. Appendix \ref{app:combrule} defines the corresponding constraint solving rule, and shows that the constraint solving algorithm remains sound and complete.

\section{Implementation and case studies}

We implemented the above strategy in the Tamarin prover, providing a new \texttt{DH-multiplication} built-in theory that enables the use of all operators from  $\Sigma_{DH}$. Our goal was to make the integration as complete as possible, keeping all of the tool's features. This required not only modifying the constraint solving rules, but also all other features that rely on unification. For example, Tamarin allows for \emph{restrictions} that exclude certain protocol executions, and such restrictions may contain equality constraints. We adapted all such equality checks to use our approach for $\mathit{DH}$ terms.  Other features that are less commonly used (e.g. Tamarin's \emph{diff-mode} for verifying observational equivalence properties \cite{obseq}) are left for future work. We have only implemented the Gaussian elimination algorithm within Tamarin, so our implementation only covers protocols that produce linear equations. This excludes for example, the ElGamal signature scheme. Adding support for other algebraic decision procedures (e.g. Buchberger's algorithm) that solve the nonlinear equations generated by such protocols is an implementation task that is left for future work. 

We first evaluate our implementation on existing models from the Tamarin repository \cite{tamarin-repo} that use the limited Diffie-Hellman built-in equational theory. 
Afterwards, to illustrate the effectiveness of our approach on protocols using the group multiplication, we modelled the El Gamal encryption scheme \cite{ElGamal} and the MQV key exchange \cite{mqv2}. %, mqv2} 
All the above case studies were performed on a laptop with an Intel i7 6-core processor and 16GB memory.

\subsection{Evaluation}

We evaluate our implementation's correctness using the following models taken from the Tamarin repository: 
\begin{enumerate} 
\item toy protocols using basic adversary capabilities regarding exponentiation;
\item 12 real world protocols that use the basic Diffie-Hellman key exchange in different variants, from 
a benchmark of 20 case studies from \cite{tableDH}. These protocols use different built-in equational theories (symmetric-encryption schemes, hash functions, signature schemes) and different user-defined functions. Each protocol model verifies one or two properties.
\item The Tamarin model of the Wireguard protocol from \cite{wireguardTamarin}, a large case study containing multiple nested hashes, different user-defined functions, and reuse lemmas. This model contains 8 lemmas in total: 2 executability lemmas, 2 agreement properties, 2 secrecy properties, and 2 reuse lemmas to help prove the agreement properties. 
\end{enumerate}
As mentioned, these models do not make explicit use of the group multiplication in the protocol rules. 

While we do allow the combination of different built-in theories, we do not allow different built-in function symbols or user-defined function symbols to be used as \emph{arguments} of our Diffie-Hellman operators. For example, a term of the form $g^{h(e)}$, where $h:\mathit{Msg} \rightarrow Msg$ is a user-defined function symbol, is \emph{not} allowed, since our Diffie-Hellman operator take as argument terms of sort $G$ or $E$.
 
Out of the 20 case studies in the benchmark in 2), two studies used user-defined function symbols as exponents and were hence excluded. Furthermore, we also excluded some models that did not use the Diffie-Hellman built-in theory at all, one model where the proof-search was reported to not terminate, and another model whose source file was missing from the Tamarin repository. 

Enabling users to define function symbols that produce outputs of sort $\mathit{DH}$ is left for future work. In cases where such functions have no associated equations, their integration is expected to be relatively straightforward. However, when these function symbols are subject to user-defined equations, the situation becomes more complex. It would be necessary to ensure that augmenting the current rewrite system $\rightarrow_{\mathit{DH}}$ with these additional equations preserves termination and confluence. Moreover, how to incorporate such equations within the algebraic framework is an open question. 

We rewrote each of these models in our extension's new syntax, replacing the Diffie-Hellman function symbols with the symbols of our new \texttt{DH-multiplication} built-in theory. We did not change anything else, including how the protocol rules are defined, or how the properties are formulated. 
While the protocol itself is unchanged, our extension models a more powerful adversary, capable of using additional operators, namely addition and subtraction. % not available in the restricted version of Tamarin used to verify these protocols. 
Nevertheless, we argue that this enhanced adversary cannot derive new attacks if the protocol rules themselves do not make use of these additional operators.

To give an intuitive argument, observe that any lemma in the protocol necessarily refers only to terms appearing in the protocol’s rules, and thus these terms do not contain the new operators. Consider an an attack trace in which the adversary uses the new operators. Since the attack trace's final goal is to contradict the property, we consider the last adversary rule in the trace that applies one of these new operators. By construction, this rule must produce a term free of these operators in order to match the property. Now consider the structure of such a rule: it must eliminate new operators. For example, one such rule is when the adversary applies the minus operator: 
$$\verb![K(a+b), K(b)] --> [K(a)].!$$ 
In this case, the adversary can only construct 
 $a+b$ if they already know both $a$ and $b$ since the protocol participants themselves cannot have produced $a+b$. 
Thus, the application of the operator-based rule is not essential: if the adversary already knew $a$ then the use of the new operator in this step can be bypassed entirely. %One can proceed case-by-case for all rewrite rules in $\rightarrow_{DH}$ that eliminate the new operators. By inductively applying this reasoning to all uses of the new operators in the trace, we can construct an equivalent attack trace that does not rely on them. 
One can apply such an argument to all such rules in the trace. Therefore, for every attack trace that uses the new operators, there exists a corresponding trace that does not use them. 

Indeed, for every model and its properties, we find the same results as with the restricted Diffie-Hellman theory. Table \ref{tab:results} reports the time it took to prove or disprove all of the properties of each given model. All properties of all models are proved automatically with the restricted Diffie-Hellman theory. Using our extension, for the WireGuard model we manually guided the proof search tree to reduce runtimes. In particular, for WireGuard we manually proved 6 out of the 8 lemmas: we proved the 2 executability lemmas, the 2 reuse lemmas and the 2 secrecy lemmas manually. For this model, the time in Table \ref{tab:results} indicates the seconds or minutes it took to prove all the lemmas with the search strategy stored. 

\begin{table}[ht]
\centering
\begin{tabular}{
    >{\centering\arraybackslash}m{0.1cm} % Index
    >{\arraybackslash}m{1.8cm}            % Protocol
    >{\arraybackslash}m{1.4cm}            % Result
    >{\centering\arraybackslash}m{0.75cm}  % Time
    >{\centering\arraybackslash}m{1.2cm}  % Time Ext
    >{\centering\arraybackslash}m{0.75cm}  % Auto prove
}
\toprule
\textbf{\#} & \textbf{Protocol} & \textbf{Result} & \textbf{Time} & \textbf{Time Extension} & \textbf{Auto Prove} \\
\midrule\midrule
\multirow{2}{*}{1} & Toy1 & Proof & 0.1\,s & 0.1\,s & \ding{51} \\
                   & Toy2 & Proof+Attack & 0.1\,s & 0.2\,s & \ding{51} \\
\addlinespace
\multirow{13}{*}{2} %& DH2 & Proof & 20\,s & 7\,s & \ding{55} \\
                   & KEA+ & Proof & 1\,s & 5\,s & \ding{51} \\
                   & KEA+ (wPFS) & Attack & 1\,s & 20\,s & \ding{51} \\
                   & SIG-DH & Proof & 0.4\,s & 2\,s & \ding{51} \\
                   & STS-MAC & Attack & 3\,s & 60\,s & \ding{51} \\
                   & STS-MAC-fix1 & Proof & 9\,s & 4\,m & \ding{51} \\
                   & STS-MAC-fix2 & Proof & 2\,s & 30\,s & \ding{51} \\
                   & TS1-2004 & Attack & 0.2\,s & 1\,s & \ding{51} \\
                   & TS1-2008 & Proof & 0.2\,s & 1\,s & \ding{51} \\
                   & TS2-2004 & Attack & 0.2\,s & 5\,s & \ding{51} \\
                   & TS2-2008 & Proof & 0.4\,s & 5\,s & \ding{51} \\                   
                   & UM (PFS) & Attack & 0.4\,s & 3\,s & \ding{51} \\
                   & UM (wPFS) & Proof & 0.7\,s & 5\,s & \ding{51} \\
\addlinespace
\multirow{1}{*}{3} & WireGuard & Proof & 2\,m & 10\,m & \ding{55} \\
\bottomrule
\end{tabular}
\vspace{0.2cm}
\caption{Comparison of Runtimes}
\label{tab:results}
\end{table}

Our extension results in slower performance. This is to be expected since with new operators, there are also more potential attack traces that Tamarin must consider and exclude.  In particular, verifying the WireGuard protocol takes around 5 times as long as when verified without the Diffie-Hellman extension, even when the proof is manually guided.
We believe that the primary cause of this major slowdown is not precomputing sources for each premise fact, as is instead done in Tamarin without the extension. As a result,  in the current implementation, these sources must be recomputed for every occurrence of a premise. 

%Furthermore, the extension's code is not yet optimized. For example, the implementation does not store and re-use solved subtrees. Optimizing the code for performance is likely to decrease the extension's runtimes. %We note that in one case (DH2), our extension outperforms Tamarin's current Diffie-Hellman theory. This is most likely due to the fact that the 7 seconds refer to a manually guided proof, while the 20 seconds are for an automated proof. %Nonetheless, the model verified through manual interaction exhibits comparable (even faster) runtime performance. This suggests that, with the developments of specialized heuristics tailored to the extension, it may be possible to achieve similar efficiency in analysis time. 
More generally, however, the use of the extension is not needed for protocols that do not explicitly rely on group multiplication. In the next sections, we therefore look at protocols that do use this operation. 

\subsection{ElGamal}
In Section \ref{sec:overview}, we provided a Tamarin model of the ElGamal encryption scheme. We defined an executability lemma and a lemma about the secrecy of the encrypted message from Bob's perspective. We also add a lemma about the secrecy of the received message from Alice's perspective:
\begin{lstlisting}
lemma secrecyA:
"All msg #i A. SecretA(A, msg)@i 
              & not (Ex #l. Compromised(A)@l) ==> 
       not (Ex #j. K(msg)@j )"
\end{lstlisting}
 
Tamarin automatically proves the executability lemma, providing a valid trace of the protocol, which takes three minutes. With manual guidance in its search tree, Tamarin proves the lemma in a few seconds. Tamarin also proves secrecy of the encrypted message from Bob's perspective, automatically, also in a few seconds. 

For the lemma \texttt{secrecyA}, since the non-cancellation property on $c_1^{-ska}\cdot c_2$ does not hold, a priori there are no security guarantees if no counterexample is found. However, in this case, Tamarin automatically disproves the property, finding an attack in a few seconds. This is not surprising, since Bob is not authenticated and an adversary can send Alice any message of his choice. Thus the approach can be useful even if the non-cancellation property fails, as the counterexamples found are always valid attacks. 

ElGamal's security has been extensively studied and it is widely regarded as secure, based on the hardness assumptions of the discrete logarithm problem. While proving its security in the symbolic model may not seem to be groundbreaking, we believe it represents a significant advance: in spite of ElGamal's widespread use, this is the first time it is automatically verified, in the symbolic model. %Furthermore, this development enables the verification of the security of protocols employing ElGamal as a subprotocol. 

\subsection{MQV Key Exchange}
\label{sec:exmqv}
We now turn to a more complex example. MQV \cite{mqv, mqv2} is an authenticated key exchange protocol where the key is computed by combining static and ephemeral key pairs. The shared key should not agree if the other party’s public key is not employed, thus giving implicit authentication of the correct parties. MQV's main attractiveness is its computational efficiency; it claims to achieve authentication in two message exchanges and a few group operations. In \cite{mqv-attack}, Kaliski presented an unknown key-share attack on this protocol, where Eve can trick Bob into thinking he is sharing Alice's key with Eve instead of with Alice.  

The original protocol is as follows. Alice (in the \emph{initiator} role) and Bob (in the \emph{receiver} role) both have a long-term private/public key pair $\langle a,pk_A=g^a\rangle$ and $\langle b,pk_B=g^b \rangle$ and they each compute ephemeral key pairs $\langle x,X=g^x\rangle$ and $\langle y,Y=g^y\rangle$.  
Alice and Bob send each other their ephemeral public keys $X$ and $Y$ and it is assumed they know each other's public key. Alice and Bob compute

\begin{equation}
K_{AB} = (Y\cdot pk_B^{\mu(Y)}) ^{x+\mu(X)a}, 
\label{eq:KAB}
\end{equation}
\begin{equation}
K_{BA} = (X\cdot pk_A^{\mu(X)}) ^{y+\mu(Y)b}.
\label{eq:KBA}
\end{equation}
By the algebraic properties of the operators, $K_{AB}=K_{BA}$. 

Kaliski's unknown key-share attack \cite{mqv-attack} is as follows. Alice initiates a protocol run with Bob. 
\begin{enumerate} 
\item[i)] Eve first intercepts Alice's ephemeral key $g^x$. She selects a random value $r_E$ and computes the ephemeral public key $$Z =g^x\cdot (g^{a})^{\mu(g^x)}\cdot g^{r_E}.$$ Observe that Eve does not know the secret key corresponding to this public key. 
\item[ii)] Eve computes the following private, public key pair: 
$$ \langle e = -\frac{r_E}{\mu(Z)}, pk_E = g^{e} \rangle.$$
\item[iii)] Eve initiates a protocol run with Bob as herself, i.e. with public key $pk_E$ from (ii) and using the ephemeral key $Z$ computed in (i).
\item[(iv)] Eve receives Bob's key and forwards it to Alice.  
\end{enumerate}
After these steps, Alice will compute the shared key $K_{AB}$ as before, while Bob will also compute the same key $K_{BE}$ but with \emph{Eve}:
\begin{align*}
K_{BE}  &= (Z \cdot pk_E^{\mu(Z)}) ^{y+\mu(Y)b} \\
 & = (g^x\cdot g^{a \mu(g^x)} \cdot g^{r_E}\cdot g^{-r_E}) ^{y+\mu(Y)b} \\
 & =  (X\cdot pk_A^{\mu(X)}) ^{y+\mu(Y)b} = K_{AB}.
\end{align*}
This is the same key obtained in an honest key-share with party $A$, see (\ref{eq:KBA}).
Observe that Eve does, in fact, not know this key. However there is still a mismatch between the identities who share the key. Kaliski names such an attack an \emph{unknown key-share} attack. 
Observe that this attack is only possible on the responder: Eve needs the initiator's ephemeral key to generate her malicious ephemeral key, making it impossible to also trick the initiator. 

Modelling the MQV protocol in Tamarin equipped with the new $\mathit{DH}$ equational theory is straightforward. We provide the Tamarin model in Appendix \ref{sec:mqvmodel}. We consider an adversary that can obtain certificates for arbitrarily chosen private/public key pairs. Observe that, without a subsequent message exchange using the shared key, there can be no agreement on the parties' identities: Alice and Bob just compute a shared key with the (possibly adversarial) ephemeral public key they receive. The authentication is implicitly obtained when decrypting/encrypting messages with a key that matches the one they have computed. Indeed, our extension of Tamarin quickly (in less than a second) and automatically falsifies the agreement properties. %\todo{should I show/mention the attacks somewhere?}. 

We thus expand the protocol model to include two messages where Bob and Alice exchange fresh nonces encrypted with the shared key (see Appendix \ref{sec:mqvmodel}).  In particular we use Tamarin's built-in symmetric encryption primitive.  

Below, we state two basic security properties regarding the secrecy of the key and agreement on the key and parties sharing the key. %Note that the keys $K_{AB}$ and $K_{BA}$ satisfy the $\NoCanc$ condition (see Theorem \ref{thm:NoCanc}) and hence we can apply our approach. 
%lemma executable: exists-trace
%"Ex I R key #i #j. AgreeKeyI(I, R,key)@i & 
%AgreeKeyR(R,I,key)@j & not (Ex #k C. Compromised(C)@k)"
\begin{lstlisting}
lemma secrecyI:
 "All I R key #i. AgreeKeyI(I, R, key)@i & (not (Ex #k. Compromised(I)@k))& 
 (not (Ex #k. Compromised(R)@k))
  ==> not (Ex #i. K(key)@i)"
  
lemma agreementI:
"All I R key #i. AgreeKeyI(I, R, key)@i & (not (Ex #k. Compromised(I)@k))& 
(not (Ex #k. Compromised(R)@k)) 
 ==> Ex #j. RunningR(R, I, key)@j"
\end{lstlisting}

%lemma agreementR:
%"All R I Gk #i. AgreeKeyR(R,I,Gk)@i  & (not (Ex #k. Compromised(I)@k))&
 %(not (Ex #k. Compromised(R)@k)) 
 %==> Ex #j. RunningI(I, R, Gk)@j "

%lemma secrecyR:
 %"All R I Gk #i. AgreeKeyR(R, I, Gk)@i  & (not (Ex #k. Compromised(I)@k))& 
 %(not (Ex #k. Compromised(R)@k)) 
 %==> not (Ex #i. K(Gk)@i)"

The action fact $AgreeKeyI$ is declared in the rule where Alice correctly decrypts a message from Bob using the shared key she computed. $RunningI$ is declared in the first rule where Alice can compute the shared key. We also declare $AgreeKeyR$ and $RunningR$ action facts in Bob's respective rules and we define similar \texttt{secrecyR} and \texttt{agreementR} properties from Bob's perspective. Finally, we also add an exists-trace lemma \texttt{executable} that verifies if Bob and Alice can actually complete all their steps, exchanging a key. Note that the keys satisfy the non-cancellation property by Theorem \ref{thm:NoCanc} of Appendix \ref{sec:appendixdefs}.

Observe that Kaliski's unknown key-share attack is a counterexample to \texttt{agreementI}, since in that scenario Alice agrees to share her key with Bob, while Bob does not think he is sharing his key with her. %We expect at least that property to be falsified.

We manually guide Tamarin's proof for the executability lemma to find the expected execution. % on laptop with an Intel i7 6-core processor and 16GB memory, 
%and returns the expected protocol execution. 

We first consider the \texttt{secrecyI} lemma and illustrate how Tamarin handles the $\NoCanc$ property. In particular, this lemma   refers to the key that Alice computes: 
\begin{align*}
\mathit{key} & = (g^y \cdot g^{b\mu(g^y)})^{x + a\mu(g^x)} \\
& = g^{y x}\cdot g^{ya\mu(g^x)}g^{bx\mu(g^y)}g^{ba\mu(g^y)\mu(g^x)}.
\end{align*}
Recall that this key corresponds to Alice's view, so only the variables $a$ and $x$ are of sort $\mathit{FrE}$: $a$ is Alice's secret key and hence stored as a fresh variable in her state facts, and $x$ is freshly generated by herself. The variables $y$ and $b$, corresponding to Bob's ephemeral and private key, are instead $E$ variables, since they are possibly adversarial. Hence, at this point, $\NoCanc(\mathit{key})$ does \emph{not} hold. However, our lemma excludes the case where Bob is compromised since trivially the key is \emph{not} secret if we share it with the adversary. Hence, Tamarin can conclude that $b$ comes from an honest private/public key generation fact, meaning that $b$ must be of sort \textit{FrNZE}. At this point $\NoCanc(\mathit{key})$ still does not hold, since $y$ can be adversarially chosen. However, by Theorem \ref{thm:NoCanc},
 $$\NoCanc'(g^{ba\mu(g^y)\mu(g^x)}, \mathit{key})$$ does hold.
 When trying to unify indicators, Tamarin will prove that the indicator of $g^{ba\mu(g^y)\mu(g^x)} = g^{ba}$ cannot be found in any root term of any $Out$ fact. Since $g^{ba\mu(g^y)\mu(g^x)}$ does satisfy $\NoCanc$ with the other terms, this is enough to soundly conclude that the entire term cannot be constructed. We explain the constraint solving process for this lemma in Appendix \ref{sec:mqvmodel}. 
%We prove the secrecy lemmas manually, thereby avoiding solving some goals that entailed unnecessary computations. Re-running the proofs with the manual goal solving strategy stored takes less than 2 minutes for each lemma. 
The secrecy lemmas are proven automatically in a few seconds, since the infeasibility of reaching the necessary indicator terms is established quickly. 

%The long runtime is due to the fact that Tamarin must exclude all possible combinations of different terms from $Out$ facts, solving a linear system of equations for each possible combination. As the code was not written with optimization in mind, we believe this runtime can be improved. For example the code doesn't check if the current state of the constraint system is the same as the one from a different case split, so it currently recomputes all possible options for similar cases. 
 
As for the property \texttt{agreementI}, Tamarin returns different attacks. The attacks are valid unconditionally of the $\NoCanc$ property, which in fact does \emph{not} generally hold on the root terms of the keys shared with compromised parties. One attack is the trace illustrated in Figure \ref{fig:mqvattack1}, where a third party (E) can perform an unknown key-share attack using a trivial public key.
\begin{figure}[h]
\includegraphics[width=9cm]{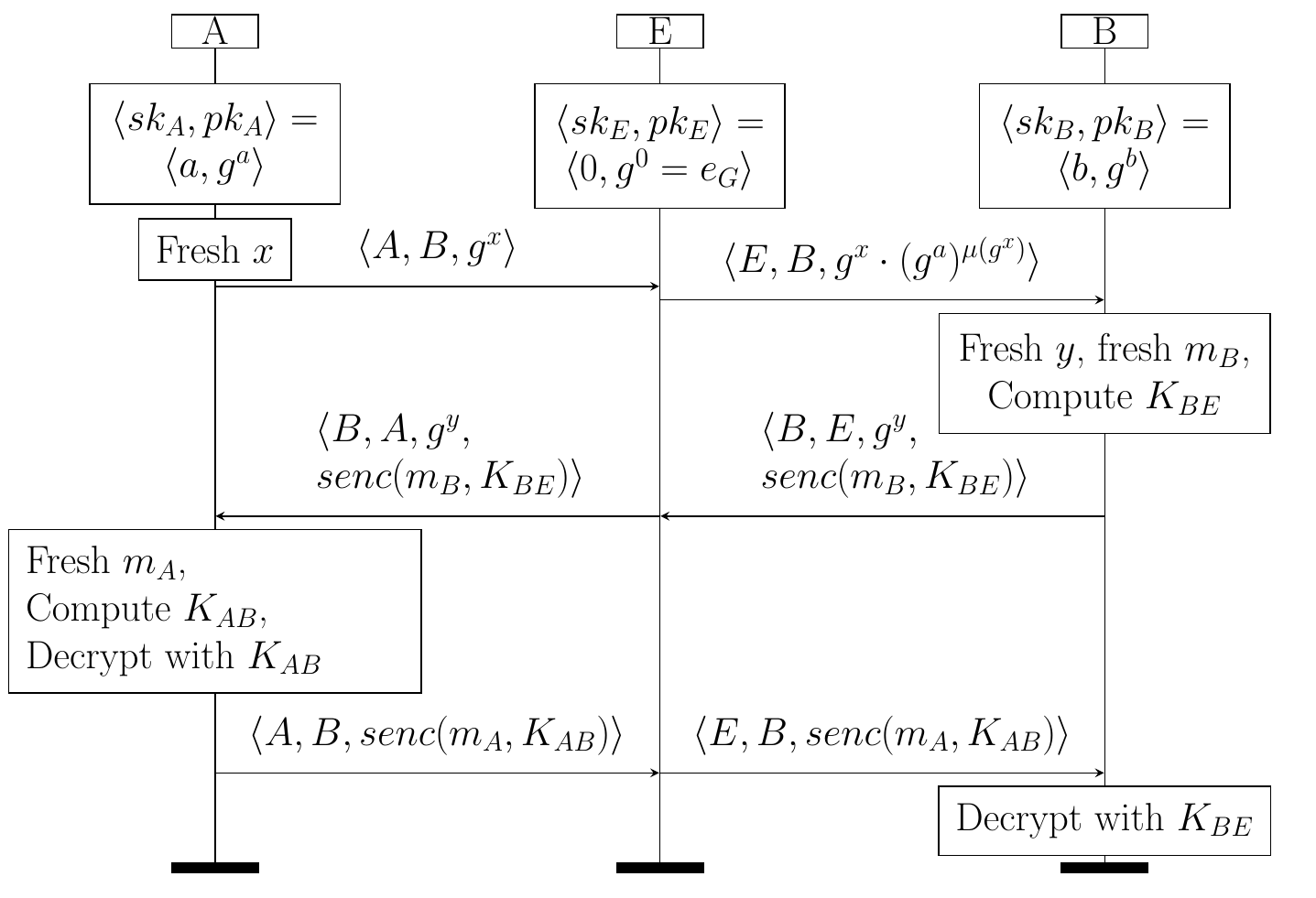}
\caption{Attack trace returned by Tamarin.}
\label{fig:mqvattack1}
\end{figure} 
%Let $W = g^x\cdot A^{\mu(g^x)}$ be Eve's ephemeral key. MQV defines the shared key as
%\begin{align*}
%K_{BE} & = (W\cdot pk_E^{\mu(W)})^{y+\mu(Y)b} \\
%& = (g^x\cdot pk_A^{\mu(X)} \cdot e_G)^{y+\mu(Y)b} \\
%& = (g^x\cdot pk_A^{\mu(X)})^{y+\mu(Y)b},
%\end{align*}
 %which is the same key obtained in an honest key-share with party $A$, see (\ref{eq:KBA}). 
This attack is essentially an instantiation of Kaliski's unknown key-share attack, where Eve's random value is replaced with $r_E = 0$.
% Hence in this case, $K_{BE}=K_{AB}$, but 
While we again do \emph{not} have agreement on the parties' identities, Eve still does not actually learn the key since Alice's and Bob's secret keys ($x$ and $y$) are not compromised.
Even if this attack is unrealistic in practice (assuming participants check that public keys are not trivial), the chosen ephemeral public key Eve must use, $Y\cdot A^{\mu(Y)}$, involves group multiplication and we believe this attack is not straightforward to find by hand. %For example, we had not thought of this attack before verifying the protocol with Tamarin. 

If we exclude ephemeral and public keys from being trivial group elements, Tamarin also finds the attack proposed by Kaliski.  %\todo{should I add here a screenshot of Tamarin's attack?}  

A crucial step in Tamarin’s constraint-solving algorithm for finding the attack is the unification of the session keys in Alice’s and Bob’s rules. Alice’s key is $$g^{y_Ex}\cdot g^{b\mu(g^{y_E})x}\cdot g^{y_E\mu(g^x)a} \cdot g^{b\mu(g^{y_E})\mu(g^x)a},$$ while Bob’s key is $$g^{x_E y}\cdot g^{e_E \mu(g^{x_E})y} \cdot g^{x_E\mu(g^{y})b}\cdot g^{e_E\mu(g^{x_E})\mu(g^{y})b}$$
Here, $g^a$ and $g^x$ are Alice’s long-term and ephemeral keys, and $g^b$ and $g^y$ are Bob’s corresponding keys. The terms $g^{e_E}$, $g^{x_E}$, and $g^{y_E}$ contain unbound variables of sort $E$.

Tamarin proceeds by unifying the root terms of Alice’s key with (possibly repeated) root terms of Bob’s key. Kaliski’s attack arises when the root terms $g^{y_E x}$ and $g^{y_E\mu(g^x)a}$ are both matched to $g^{x_E y}$ by splitting $x_E$ into a sum, and the root terms $g^{b\mu(g^{y_E})x}$ and $g^{b\mu(g^{y_E})\mu(g^x)a}$ are both matched to $g^{x_E\mu(g^{y})b}$. The variables introduced by the generalized unifier are then instantiated with algebraic methods to make the other root terms of Bob's key cancel out.
Appendix \ref{sec:mqvmodel} explains the rules used to find the attack in more detail.

Finding this attack required manual interaction. Re-running Tamarin with the solving strategy stored, it takes around half a minute for Tamarin to return the attack. 

Tamarin instead excludes attacks from Bob's (the responder) perspective, proving \texttt{agreementR}. As for proving the secrecy lemmas, proving this lemma also required manual interaction. When re-running the proof with the solving strategy stored, it takes Tamarin again around a half a minute to verify \texttt{agreementR}. Writing specific heuristics for DH extended theories (deciding which constraint solving rules to apply first) would probably be enough for Tamarin to find the proofs and attacks automatically in a reasonable amount of time. 
 
We therefore believe our approach scales to fairly complex protocols (e.g. using different equational theories and Tamarin features) albeit with some user interaction. Other protocols, such as HMQV (the improved version of MQV) use different function symbols as exponents. In particular, HMQV uses a hash function of arity 2 as an exponent. We hence plan to integrate arbitrary user-defined function symbols \emph{with no associated equations} that produce outputs of sort $\mathit{DH}$ to perform further large case studies. 

\section{Conclusion}

We have presented a method for the symbolic analysis of protocols utilizing all Diffie-Hellman group operators, including addition in the exponents. We have successfully integrated this method into the Tamarin prover, providing the first symbolic verifier that supports such protocols. Our integration maintains compatibility with Tamarin’s basic features, built-in equational theories and user-defined ones.
We have verified the effectiveness of our approach by analyzing the ElGamal and MQV protocol. %In particular, for the latter, we make use of restrictions and Tamarin's symmetric-encryption theory.  

To further improve our approach, we plan to write heuristics to make Tamarin's search more automatic, to enable user-defined $\mathit{DH}$ function symbols and to perform case studies on other large protocols such as HMQV \cite{HMQV}.

\bibliographystyle{IEEEtran}
\bibliography{draft}

\appendices

\section{Definitions omitted in main text}
\label{sec:appendixdefs}
In Sections \ref{sec:notation} and \ref{sec:depgraphs}, we recall the formal definition of dependency graphs used by the Tamarin prover, and how we modify them slightly to incorporate our equational theory. In particular, we show that we can use Tamarin's constraint solving approach to find counterexamples to security properties. 
\subsection{Notational Preliminaries}
\label{sec:notation}
We present here the notation used in \cite{Tamarin} and \cite{Benedikt}. 

For a sequence $s$, we write $|s|$ for the length of $s$ and $id_x(s)=\{1,\ldots,|s|\}$ for the set of indices of $s$. $S^{\#}$ denotes the set of finite multisets with elements from $S$. 

For an equational theory $\mathcal{E}$, we define the $\E$-instances of a term $t$ as $inst_\varepsilon=\{t' | \exists \sigma. t\sigma =_\varepsilon t'\}$, and we call the ground $\E$-instances $ginsts_\E(t)$.

Recall that we consider a user defined signature $\Sigma_{Usr}$, that we assume disjoint from $\Sigma_{DH}$, with an equational theory $\mathcal{E}_{Usr}$. We assume this theory can be decomposed (by orienting equations) in a rewrite system $R_{Usr}$. Finally, if $\Sigma_{DH}$ contains $AC$ operators, it is assumed that $(\Sigma_{Usr}, R_{Usr}, AC)$ has the finite variant property. 

\begin{definition}
Let $\mathcal{E}_{DH}$ be the set of unoriented equations over $\Sigma_{DH}$ from Definition \ref{def:dheq}.
\end{definition}

We have defined $\mathcal{T}'$ to be the term algebra over $\Sigma$, $\mathit{PN}$, $\mathit{FN}$, $FN_G$, $\mathit{PN}_G$, $\mathit{FN}_{\mathit{E}}$, $PN_{\mathit{E}}$, $\mathcal{V}$, $\mathcal{V}_G$, and $\mathcal{V}_E$, where
 \begin{align*}
\Sigma & :=\Sigma_{Usr}\cup \Sigma_{DH},\\
\mathcal{E} &:=  \mathcal{E}_{Usr} \cup \mathcal{E}_{DH}.
\end{align*}

Finally, we also assume an unsorted signature $\Sigma_F$ partitioned into linear and persistent fact symbols. %We define the set of facts as the set Fconsisting of all facts F(t1, ,tk) such that ti∈T and F∈ΣFk. We denote the set of groundacts by G. 
We say that a fact $F(t_1, ,t_k)$ is linear if $F$ is linear and persistent if $F$ is
persistent. Linear facts model resources that can only be consumed once, whereas persistent
facts model resources that can be consumed arbitrarily often.

A protocol is modelled as labelled transition systems: protocol rules and adversarial capabilities are represented as multiset rewrite rules. Given a set $R$ of multiset rewrite rules, $ginsts(R)$ denotes the ground instances of $R$.

Recall that for a rule \texttt{ru = l-[a]->r}, we define the premises as $prems(ru) = l$, the actions as $acts(ru) = a$, and the conclusions as $concs(ru) = r$. 

Finally, as mentioned in Section \ref{sec:newsorts}, we define following set of adversarial message deduction rules.
\begin{definition}
We define $\mathit{MD}_{\mathit{DH}}$ to be the set of the following message deduction rules:
\begin{lstlisting}
rule pubG: []--[K(g:PubG)]-> [K(g:PubG)]
rule FrE: [Fr(x:FrE)]--[K(x:FrE)]->[K(x:FrE)]
rule zero: []--[K(0)]->[K(0)]
rule one: []--[K(1)]->[K(1)]
rule eG: []--[K(eG)]->[K(eG)]
rule mu: [K(t)] --[K(mu(t))] -> [K(mu(t))]
\end{lstlisting}

Furthermore, for each operator $f\in \Sigma_{DH}$ of arity $n$, we add a rule allowing the adversary to apply it
\begin{lstlisting}
rule f: [K(x1),...,K(xn)]
        --[K(f(x1,...,xn))]->[K(f(x1,\ldots,xn))] 
\end{lstlisting}
\end{definition}

In particular, we distinguish these rules from Tamarin's built-in adverarial rules $MD$. %, and we consider the $\mathit{DH}$ rules $MD_{DH}$ as normal protocol rules. %We hence apply Tamarin's current strategy to the adversary rules in $MD$. 

\subsection{Dependency graphs}
\label{sec:depgraphs}
We modify slightly Definition 3.9 of \cite{Benedikt}, to consider equality modulo $\mathcal{E}_{DH}$ instead of syntactic equality. 
\begin{definition}
\label{def:dgbasic}
Let $P$ be a set of labeled multiset rewriting rules, with terms from $\mathcal{T}'$.
A $\mathit{DH}$-extended dependency graph modulo $\E_{Usr}$ for $P$ is a pair $dg=(I,D)$ where $I\in ginsts_{\E_{Usr}}(P)$, $D\in \mathbb{N}^2 \times \mathbb{N}^2$ and $dg$ satisfies the conditions below. 

The nodes of $dg$ are defined as $idx(I)$ and $D$ are the edges of $dg$. $((i,u), (j,v))$ is abbreviated by $(i,u)\rightarrow (j,v)$.
A conclusion of $dg$ is a pair $(i, u)$ such that $i$ is a node of $dg$ and $u\in idx(concs(I_i))$. The corresponding conclusion fact is $(concs(I_i))_u$. A premise of $dg$ is a pair $(i,u)$ such that $i$ is a node of $dg$ and $u\in $idx(prems(I$i))$. The corresponding premise fact is $(prems(I_i))_u$. A conclusion or premise is linear if its fact is linear.

\begin{enumerate}
\item For every edge $(i,u)\mapsto(j,v)\in D$, it holds that $i<j$ and the conclusion fact of $(i,u)$ is equal to the premise fact of $(j,v)$, modulo $\mathcal{E}_{DH}$.
\item Every premise of $dg$ has exactly one incoming edge.
\item Every linear conclusion of $dg$ has at most one outgoing edge.
\item The instances of rules creating $\mathit{Fresh}$ or $\mathit{FrE}$ variables are unique.
\end{enumerate}
For a protocol $P$ we denote the set of all possible such graphs by $dgraphs_{DH}^{\E_{Usr}}(P).$ 
\end{definition}

Let $I= \{l_1- [a_1]\rightarrow r_1,\ldots, l_k - [a_k]\rightarrow r_k\}$, we define the trace of $dg$ as $trace(dg) = [a_1,\ldots,a_k]$

For all protocols (i.e. set of multiset rewrite rules) $P$ over $\mathcal{T}'$ and message deduction rules $D$, it follows that:
$$trace(execs(P\cup D)) = trace(dgraphs_{DH}^{\E_{Usr}}(P\cup D))$$
This is equivalent to Lemma 3.10 in \cite{Benedikt}, for $\mathit{DH}$-extended dependency graphs.

As done in \cite{Benedikt}, we extend the previous definition to $\mathit{DH}$-extended dependency graphs modulo associativity and
commutativity (AC).

\begin{definition}
Let $R_{DH}$ be the set of $\mathit{DH}$ rewrite rules from Definition \ref{def:dheq}, which are coeherent and convergent. We define $\mathcal{R}= R_{DH} \cup R_{Usr}$. 
\end{definition}

Since the two rewrite systems have disjoint signatures, we can consider the rewrite systems individually. They are both convergent and coherent, thus there is a unique normal form with respect to $\R,AC$ rewriting, denoted by $t\downarrow_\R$.

Given a term $t$, denote by $\lceil t \rceil ^{var}$ the complete set of $R_{Usr},AC$-variants. Denote by $\lceil t \rceil ^{R_{Usr}}_{insts}$ the set $$\{\tau(t)\downarrow_R| \tau\in \lceil t \rceil ^{var} \}.$$ Finally, a term $t$ is $\downarrow_\R$-normal, if $t=_{AC}t\downarrow_{\R}$. We extend this to dependency graphs, $dg=(I,D)$ is $\downarrow_\R$-normal is all rule instances in $I$ are. 
Similarly to Lemma 3.11 of \cite{Benedikt} it follows that:
\begin{lemma}
For all set of multiset rewrite rules $R$,
\begin{align*}
&dgraphs_{DH}^{\E_{Usr}}(R)\downarrow_\R = \\
&\{ dg\ |\ dg\in dgraphs_{DH}^{AC}(\lceil t \rceil ^{R_{Usr}}_{insts}), dg \downarrow_\R normal \}.
\end{align*}

\end{lemma}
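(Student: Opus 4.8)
The plan is to follow the structure of the proof of Lemma 3.11 in \cite{Benedikt}, proving the two set inclusions separately and in each direction relating a $\mathit{DH}$-extended dependency graph modulo $\E_{Usr}$ to its $\downarrow_\R$-normalization. Throughout I would use that $\R = R_{DH}\cup R_{Usr}$ is convergent and coherent modulo AC and that $R_{DH}$ and $R_{Usr}$ have disjoint signatures, so $t\downarrow_\R$ is obtained by interleaving $\downarrow_{R_{DH}}$- and $\downarrow_{R_{Usr}}$-reductions and is unique up to AC; I would also use the convention that $\mathit{DH}$ terms are kept in $R_{DH}$-normal form, so that equality modulo $\mathcal{E}_{DH}$ of such terms coincides with AC-equality.

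For the inclusion ``$\subseteq$'', take $dg=(I,D)\in dgraphs_{DH}^{\E_{Usr}}(R)$ and set $dg\downarrow_\R := (\{I_i\downarrow_\R\}_i, D)$. I would verify that $dg\downarrow_\R$ satisfies every clause of Definition \ref{def:dgbasic} over the rule set $\lceil t\rceil^{R_{Usr}}_{insts}$: (i) each $I_i$ is a ground $\E_{Usr}$-instance $\tau(ru)$ of some $ru\in R$, so by the finite variant property of $(\Sigma_{Usr},R_{Usr},AC)$ its normal form is AC-equal to $\tau'(ru')$ for a variant $ru'\in\lceil ru\rceil^{var}$ and a $\downarrow_\R$-normal ground $\tau'$, hence $I_i\downarrow_\R$ is a ground AC-instance of a rule in $\lceil t\rceil^{R_{Usr}}_{insts}$; (ii) the edge condition is preserved, since if a conclusion fact and a premise fact are equal modulo $\mathcal{E}_{DH}$ (resp.\ modulo $\E_{Usr}$ when the match is computed before grounding), then their unique $\downarrow_\R$-normal forms are AC-equal; (iii) the structural conditions---each premise has exactly one incoming edge, each linear conclusion at most one outgoing edge, and the $\mathit{Fresh}/\mathit{FrE}$-creating instances are unique---depend only on $D$ and on which facts are linear or persistent, which normalization does not affect. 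Thus $dg\downarrow_\R$ is a $\downarrow_\R$-normal AC dependency graph over $\lceil t\rceil^{R_{Usr}}_{insts}$.

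For the inclusion ``$\supseteq$'', take a $\downarrow_\R$-normal $dg=(I,D)\in dgraphs_{DH}^{AC}(\lceil t\rceil^{R_{Usr}}_{insts})$. Each $I_i$ is a ground AC-instance of a normalized variant of some original rule $ru\in R$; unwinding the definition of variants, there is a ground $\E_{Usr}$-instance $I_i'$ of $ru$ with $I_i'\downarrow_\R =_{AC} I_i$. Reassembling the $I_i'$ with the same edge multiset $D$ gives a candidate $dg'=(I',D)$: its edge conditions follow from the AC-equalities already holding in $dg$ together with $I_i' =_\E I_i'\downarrow_\R$, and its structural conditions carry over as before, so $dg'\in dgraphs_{DH}^{\E_{Usr}}(R)$. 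By construction and uniqueness of normal forms, $dg'\downarrow_\R =_{AC} dg$, so $dg\in dgraphs_{DH}^{\E_{Usr}}(R)\downarrow_\R$. Combining the two inclusions gives the lemma.

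The main obstacle I anticipate is the bookkeeping around the two rewrite systems: variants are computed only with respect to $R_{Usr}$, while $R_{DH}$-normalization is applied eagerly and treated separately, so one must argue that normalizing a ground instance with respect to the union $\R$ agrees, up to AC, with first taking $R_{Usr}$-variants and then normalizing the $\mathit{DH}$ part---this is exactly where disjointness of the two signatures and the coherence/convergence of each subsystem modulo AC are needed. A secondary subtlety is reconciling the edge condition of Definition \ref{def:dgbasic} (phrased modulo $\mathcal{E}_{DH}$, with the $\E_{Usr}$ part absorbed into the choice of ground instances) with the purely AC edge condition on normalized graphs, which again reduces to uniqueness of $\downarrow_\R$-normal forms modulo AC.
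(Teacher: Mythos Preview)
Your proposal is correct and takes essentially the same approach as the paper, which does not give a proof at all but merely states that the result ``follows similarly to Lemma~3.11 of \cite{Benedikt}.'' You have simply unfolded that reference: the two-inclusion argument, the use of the finite variant property of $R_{Usr}$ to rewrite ground $\E_{Usr}$-instances as AC-instances of variants, and the appeal to convergence/coherence of $\R$ modulo AC (with disjoint signatures) to reconcile $\mathcal{E}_{DH}$-equality of normalized terms with AC-equality---all of this is exactly the intended adaptation of Schmidt's proof to the $\mathit{DH}$-extended setting.
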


Finally \cite{Benedikt} extends the notion of dependency graphs to consider normalized instances of Tamarin's messaged deduction rules $MD$. In particular, the Schmidt define normal message deduction rules $ND$, and defines normal-form conditions for dependency graphs where the rules $MD$ are replaced by $ND$. 
Considering our $\mathit{DH}$-message deduction rules $MD_{DH}$ as normal protocol rules, we can use the same normal-form conditions. In particular, we consider the graphs in
$$dgraphs_{DH}(\lceil P\cup MD_{DH}\rceil^{(R_{Usr})}_{insts} \cup ND )$$ 

and impose the same normal-form conditions as Definition 3.18 of \cite{Benedikt}. We call the set of all these graphs for $P$ as $ndgraphs_{DH}(P)$. 
Finally, as in Lemma 3.19 of \cite{Benedikt}, we obtain that for all protocols $P$, 
\begin{align*}
&trace(execs(P\cup MD_{DH} \cup MD)) =_{AC} \\
&trace(ndgraphs(P))
\end{align*}

Hence when searching for counterexamples to security properties, we can still search for dependency graphs instead of protocol executions. 

\subsection{Constraint solving}

We use the definition of trace properties from \cite{Benedikt}. We add to their definition of constraints and constraint solving rules, some new constraint definitions and new constraint solving rules (see Appendices \ref{app:newrules} and \ref{sec:KINpremises}). 
As done in \cite{Benedikt}, we define a structure to be a tuple $(dg,\theta)$ where $dg$ is a $\mathit{DH}$-extended dependency graph and $\theta$ a valuation. 
We will use some of the constraints introduced in \cite{Benedikt} in the following sections. 
These constraints and their satisfaction relation $\vdash$ with structures are as follows:
\begin{itemize} 
\item \emph{A node constraint $i:ri$}. $(dg,\theta)\vdash i:ri$ if $\theta(i)\in idx(I)$ and $\theta(ri) = I_{\theta(i)}$ 
\item \emph{An edge constraint $(i,u)\rightarrow(j,v)$}. $(dg,\theta)\vdash (i,u)\rightarrow(j,v)$ if $(\theta(i), u)\rightarrow (\theta(j),v)\in D$
\item \emph{A formula constraint $Fact(t)@i$.} $(dg,\theta)\vdash Fact(t)@i$ if $\theta(i)\in idx(I)$ and $Fact(\theta(t))\in acts(I_{\theta(i)})$
\end{itemize}

For a protocol $P$ and message deduction rules $MD_{DH}\cup ND$, Definition 3.35 of \cite{Benedikt} describes a class of constraint systems that admit a $P$ model.  Essentially, the definition requires that for every edge constraint $(i,u)\rightarrow (j,v)$ there are corresponding node constraint $i:ri$ and $j:rj$ such that premise and conclusion fact matched. Essentially, this definition ensures that all premises are matched by a conclusion, so that a model extraction is straightforward.  %The definition of well-formed constraint system however does allow for unmatched premises $K(x)$, for $x$ being a variable.  
We extend the same definition to including our $MD_{DH}$ rules. 

Theorem 3.36 of \cite{Benedikt} states that every well-formed constraint system $\Gamma$ for $P$ that is solvable with respect to $\rightarrow_P$ has at least one $P$-model and we can extract such a model. Hence the goal of our constraint solving algorithm is still that of constructing a well-formed constraint system. 

\subsection{Definitions regarding $\mathit{DH}$ terms}
\label{app:DHdefs}
\begin{definition}
Let $L$ be a set of exponent variables. We define the indicator $Ind_L(t)$ of a term $t$ as follows: %the function $\Ind$ that computes the \indicator of a given \emph{root} term. %It is parametrized by the current basis (and not-basis) sets $B$ and $NB$. 
\begin{align*}
&Ind_L(X)=\\
&\begin{cases}
exp(\Ind_L(Z),\Ind_L(Y)) & X = exp(Z,Y)\\
\Ind_L(Z) & X = Z^{-1}\\
\Ind_L(Z)& X= -Z \\
\Ind_L(Z)& X= \dhInv(Z)\\
\Ind_L(Z)*\Ind_L(Y) & X=Z *Y \\
%\bot_s & X = \mu(Z), Ind_L(Z) = \bot_s\\
%X & X = \mu(Z), Ind_L(Z) \neq \bot_s\\
1 & X : E \wedge X \in L \\
X  & X : E \wedge X \notin L\\
X & X : G \wedge X \text{ is ground} \\
\end{cases}
\end{align*}
\end{definition}

\begin{definition} We define the function $cl$ that cleans combined terms from their $\mathit{DH}$ parts. 
$$\begin{cases}
\cl(t) = \langle v, \{v\mapsto t\} \rangle,  \ v\in \mathcal{V}_G, \text{  if }t\text{ of sort }G\\
\cl(t) = \langle v, \{v\mapsto t\} \rangle , \ v\in \mathcal{V}_E, \text{  if }t\text{ of sort }E\\
\cl(f(t_1,\ldots,t_n)) = \langle f(\cl_1(t_1),\ldots, \cl_1(t_n)),  \\ 
\quad\quad\quad\quad\quad\quad\quad\ \ \cl_2(t_1)\cup...\cup \cl_2(t_n)\rangle \\
\quad\quad\quad\quad\quad\quad\quad\quad\quad  \text{  if }f(t_1,\ldots,t_n)\text{ not of sort }DH \\
\cl(t)= t \quad \text{otherwise}.
\end{cases}$$
\end{definition}

\noindent In the above $cl_1$ and $cl_2$ indicate respectively the first and second element of the tuple returned by the function $cl$. We assume that the variables introduced in the previous definition are always newly chosen variables. 
For example, $$\cl(\langle g^y, g^m\cdot pka^y \rangle)= \langle \langle v_1, v_2\rangle,\{v_1\mapsto g^y, v_2\mapsto g^m\cdot pka^y \} \rangle.$$

\begin{theorem*}[4 (from Section 4)]
\label{thm:NoCanc}
The condition $\NoCanc_G(X,Y)$ holds for terms $X$ and $Y$ if:
\begin{enumerate}
\item $X =g^{b*\mu(g^t)}$ and $Y=g^{a*t} $, or
\item $X = g^{a*\mu(g^{t_1})}$ and $Y = g^{b*\mu(g^{t_2})}$, 
\end{enumerate}
where $a$ and $b$ are terms that only contain distinct variables of sort $\mathit{FrE}$ and no $\dhInv$ function application.  
\end{theorem*}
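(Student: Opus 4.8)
The plan is to establish, for every $\theta\in\mathcal{S}$, both conjuncts of $\NoCanc_G(X,Y)$ — that $\theta(X)\neq_{DH}e_G$ and that $(\theta(X))^{-1}\neq_{DH}\theta(Y)$ — by inspecting the normal forms granted by the normal‑form theorem. Since $\NoCanc$ is stated modulo $\mathcal{E}_{DH}$, I may replace $\theta$ by the substitution that post‑composes it with $\rightarrow_{DH}$‑normalization, i.e.\ assume $\theta$ maps variables to $\rightarrow_{DH}$‑normal forms. Two structural facts drive everything. First, a variable of sort $\mathit{FrE}$ can only be instantiated by a fresh name or fresh variable, so, because $a$ and $b$ are products of pairwise‑distinct $\mathit{FrE}$‑variables containing no $\dhInv$, each of $\theta(a)$ and $\theta(b)$ normalizes to a single nonzero monomial whose factors are all fresh atoms — in particular free of any $\mu$‑subterm and of any $\dhInv$‑factor, and (because no subtraction is involved) of positive sign. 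Second, $\mu$ carries no equations, so for any $G$‑term $s$ the term $\mu(s)$ normalizes to an irreducible atom of sort $E$ which strictly contains the normal form of $s$ as a subterm, and this $\mu$‑atom survives further normalization as an atomic factor. I would also record that, $g$ being an atomic generator, the normal‑form theorem gives $g^{e_1}=_{DH}g^{e_2}\iff e_1=_{DH}e_2$ for nonzero normal exponents $e_1,e_2$.

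For the first conjunct: in case~(1) the exponent of $\theta(X)$ is $\theta(b)\cdot\mu(g^{\theta(t)})$, which by the two facts normalizes to the single nonzero monomial $m_b\cdot\mu(g')$, where $g'$ is the normal form of $g^{\theta(t)}$; a $G$‑term whose exponent is a nonzero monomial is a root term, hence not $e_G$, so $\theta(X)\neq_{DH}e_G$. Case~(2) is verbatim the same with $\mu(g^{\theta(t_1)})$ in place of $\mu(g^{\theta(t)})$.

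For the second conjunct I would argue by contradiction, assuming $(\theta(X))^{-1}=_{DH}\theta(Y)$ and comparing normal‑form exponents (using $(g^e)^{-1}=_{DH}g^{-e}$ and the injectivity noted above). In case~(1) the left exponent normalizes to $-(m_b\cdot\mu(g'))$, with $g'$ the normal form of $g^{\theta(t)}$, a monomial carrying the atom $\mu(g')$, whereas the right exponent is $\theta(a)\cdot\theta(t)$, whose monomials are products of fresh atoms with the monomials of $\theta(t)$. Equating the two monomial multisets, some monomial of $\theta(t)$ must carry $\mu(g')$ as a factor — but $g'$ is assembled, via the module law $g^{x+y}=g^{x}\cdot g^{y}$, exactly from the monomials of $\theta(t)$, so that monomial would contain $\mu(g')$, which contains $g'$, which contains the monomial itself, which is impossible. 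This is the ``infinite recursion'' flagged informally after the statement, and it is the step I expect to require the most care: it rests on pinning down precisely how $\mu$‑atoms survive normalization and how the subterm order interacts with $g^{x+y}=g^{x}\cdot g^{y}$. In case~(2) the left and right exponents normalize to $-(m_a\cdot\mu(g_1'))$ and $m_b\cdot\mu(g_2')$, with $g_i'$ the normal form of $g^{\theta(t_i)}$; since $\mu(g_1')$ is the only non‑fresh factor on the left it must match one on the right, forcing $\mu(g_1')=\mu(g_2')$, after which the equation collapses to $-m_a=_{DH}m_b$ — impossible, since $m_b$ is a positive monomial over fresh atoms while $-m_a$ is negative, and nothing in the DH theory can supply the sign flip (the only $\dhInv$/$\mu$ material is inert inside $\mu$). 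Assembling the two conjuncts in each case yields $\NoCanc_G(X,Y)$.
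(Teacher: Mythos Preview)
Your proposal is correct and follows essentially the same approach as the paper: for case~(1) you both use the ``infinite recursion'' / well‑founded‑subterm argument (a monomial of $\theta(t)$ would have to contain $\mu(g^{\theta(t)})$, which contains $\theta(t)$ itself), and for case~(2) you both argue that the required minus sign cannot be produced because $\mathit{FrE}$ variables cannot be instantiated with $-$ and $\mu$ has no equations. Your write‑up is in fact more careful than the paper's rather terse proof—you explicitly invoke the normal‑form theorem and spell out why $\theta(a),\theta(b)$ are positive $\mu$‑free monomials—whereas the paper simply asserts that ``$\theta(t)$ cannot contain the additive inverse of $\mu(g^{\theta(t)})$, as this is a recursive definition'' and that ``a $\mu()$ term can never be matched to a $-$ term.''
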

Essentially, this follows from the fact that fresh variables cannot be instantiated with function applications. Hence in $1)$ only the $t$ term can be instantiated with an  term containing the $-$ operator, creating an infinite recursion since $t$ appears in both $X$ and $Y$. In $2)$ the $-$ symbol can only appear in a $\mu$ term, which however has no associated equations, and hence cannot be `pulled out'.

\begin{proof}
\begin{enumerate}
\item Let $\theta$ be an arbitrary substitution. Then $\theta(t)$ cannot contain the assitive inverse of $\mu(g^{\theta(t)})$, as this is a recursive definition. Since $a$, $b$ are fresh terms, they cannot be instantiated with the $-$ function either.
Hence $$\theta(a*t)\neq -(\theta({b*\mu(g^t)})),$$ 
which also implies 
$$\theta(g^{a*t})\neq (\theta(g^{b*\mu(g^t)}))^{-1}$$
for all $\theta$ and viceversa.  Finally $\mu(v)\neq 0$ for any term $v:G$, since $\mu$ does not have any associated equations, implying that $g^{b*\mu(g^t)} \neq e_G$ for all $\theta$.
\item Again, let $\theta$ be an arbitrary substitution. By the same reasoning as above,  $g^{a*\mu(g^{t_1})} \neq e_G$ for all $\theta$. 
Next, nor $a$ nor $b$ can be instantiated with a function operator, since they are variables of sort $\mathit{FrE}$. Thus for the terms two cancel out $\mu(t_1)$ and $\mu(t_2)$ must be their additive inverses. However, a $\mu()$ term can never be matched to a $-$ term, since $\mu$ has no associated equations.  Hence again $\theta(a*\mu(g^{t_1}))\neq -(\theta({b*\mu(g^{t_2})}))$ for all $\theta$ and viceversa, implying that for all $\theta$
$$\theta(g^{a*{t_1}})\neq (\theta(g^{b*\mu(g^{t_2})}))^{-1}.$$
\end{enumerate}
\end{proof}

\subsection{Algebraic setting}
\label{app:algdef}
\begin{definition}

We define the function $toQ(\cdot)$ that, given $t:E \in Gen_{DH}({l_1,\ldots,l_n})$, returns an element in $\mathbb{Q}(l_1,\ldots,l_n)$. 
$$toQ(t) = \begin{cases}
toQ(t1) + toQ(t2) & \text{ if } t = t1 + t2 \\
- toQ(t1) & \text{ if } t = - t1 \\
toQ(t1) \cdot toR(t2) & \text{ if } t =t1 * t2 \\
1/toQ(t1) & \text{ if } t = \dhInv(t1) \\
1 & \text{ if } t = 1 \\
0 & \text{ if } t = 0 \\
l_i & \text{ if } t = l_i \\
\end{cases}
$$
\end{definition}
\iffalse
Analogously, we define a function $toE(\cdot)$ that given an element in $\mathbb{Q}[l_1,\ldots,l_n]$, returns its symbolic counterpart. 
\begin{definition}
$$
toE(n) = \begin{cases}
toE(p)**(\dhInv(toE(q))) & \text{ if } n = p/q \\
- toE(p) & \text{ if } n = -p \\
toE(p)+toE(q) & \text{ if } n = p + q \\
toE(p)**toE(q) & \text{ if } n = p \cdot q \\
1+...+1 & \text{ if } n \in \mathbb{N}\setminus\{1,0\} \\
l_i:\mathit{NZE} & \text{ if } n = l_i \\
0:E & \text{ if } n = 0 \\
1:\mathit{NZE} & \text{ if } n = 1
\end{cases}
$$
\end{definition}
\fi

\begin{definition}
Given terms $\{s_1,\ldots,s_l\}$, the function $\toPoly(t)$ interprets a term $t_i\in \Gen_{DH}({l_1,\ldots,l_n,s_1,\ldots,s_l,Y_i})$ as a polynomial with coefficients in $\mathbb{Q}(l_1,\ldots,l_n)[Y_i]$ and variables $s_1,\ldots,s_l$, by naturally interpreting each $s_i$ appearing in $t$ as a variable, and the rest as coefficients in $\mathbb{Q}(l_1,\ldots,l_n)[Y_i]$ via the function $toQ(\cdot)$.
\end{definition}

\section{The new constraint solving rules}
\label{app:newrules}

\subsection{Defining new constraints}
\label{sec:semanticdefs}
We extend Schmidt's original definition of a constraint system \cite{Benedikt}, by introducing new constraints. 
We also define their satisfaction relation with dependency graphs and valuations. 
\begin{itemize}
%\item $EqInd(\alpha, X, S)$, intuitively represents that each indicator, respect to the set $S$ of each root term of $X$ is contained in a distinct root term of $\alpha$. More precisely, 
%$$(dg,\theta) \vDash EqInd(\alpha, X, S) \quad \text{ if and only if }$$
%\begin{align*}
%&\forall t_i\in  rt(\theta(X)): &\exists \alpha_{t_i}\in rt(\theta(\alpha)),\exists z\in L.\\
%& &\alpha_i =_{\varepsilon_{DH}} Ind(\theta(X))^z.
%\end{align*} 
%\noindent and $\alpha_{t_i}\neq \alpha_{t_j}$ if $t_i\neq t_j$.
%Observe that equality is defined modulo the full Diffie-Hellman theory!
%\item $ContainsIndicator(h, t, S)$. Its semantics are the same as $EqInd(h, t, S)$.
\item For terms $t$, $h$ and $L$, we define $Eq(t, h)$ to be a particular equality constraint between two terms $t$ and $h$ of sort $\mathit{DH}$. 
$(dg,\theta) \vDash Eq(t, h)$  if and only if $$\theta(t) =_{\varepsilon_{DH}} \theta(h).$$ 
Observe that equality is defined modulo the full Diffie-Hellman theory.
\item We have already introduced the $t\dheq h$ constraints in Section \ref{sec:newrules}. 
$(dg,\theta) \vDash t\dheq h$ if and only if $$\theta(h) =_{\varepsilon_{DH}} \theta(t).$$
\item Given a term $t$ such that $\roots(t)=\{t_1,\ldots,t_n\}$, we introduce the constraint $EqK(t, \langle h _1,\ldots,h_n\rangle,L)$, where $L$ is a set of exponents.
$(dg,\theta) \vDash EqK(t, \langle h _1,\ldots,h_n\rangle, L)$ if and only if for all $i$, $Ind_L(\theta(t_i))=Ind_L(h^{(j)}_i)$, where $h^{(j)}_i$ is a root term of $\theta(h_i)$.
%there exists a term $h\in \Gen_{DH}(\{\theta(m_1),\ldots,\theta(m_n)\}\cup \theta(L))$ such that: $$h =_{\varepsilon_{DH}} \theta(t)$$ 
%\item We also define a $ContainsIndicatorK$ constraint:
%$(dg,\theta) \vDash ContainsIndicatorK(t, \langle m_1,\ldots,m_n\rangle, L)$ if and only if  there exists a term $h\in \Gen_{DH}(\{\theta(m_1),\ldots,\theta(m_n)\}\cup \theta(L))$ such that: $$h =_{\varepsilon_{DH}} \theta(t).$$
\end{itemize}

We introduce two rules, corresponding to $S_@$ and $S_{Prem}$ respectively. Instead of unifying $Fact(t)=Fact(h)$, (where $Fact$ is not the adversarial $K$ fact), the rules unify indicators of root terms of $t$ and root terms of $h$. 
%In the following, let an $adversarial$ fact, denote either the $K(\cdot)$ or the $In(\cdot)$ built-in fact. 
 
 \begin{rul}[$C_{@}$]
\namedthmlabel{rul:C@} Assuming $\NoCanc(t)$,
$$\crule{Fact(t)@i,\ \Gamma}{\Large\substack{\{Eq(t,h),\ Fact(t)@i,\\ \quad i:ri,\ \Gamma\} _{ri\in \mathcal{P}, Fact(h) \in acts(ri)} }}{C_{@}}.$$
\end{rul}
  %\begin{rul}[$C_{@}$]
%\namedthmlabel{rul:C@}
%$$\crule{Fact(X\cdot Y)@i,\ |rt(X)|=1,\ \NoCanc(X,Y),\ \Gamma}{\Large\substack{\{Eq(m,Ind_{S}(X), X\cdot Y),\ Fact(X\cdot Y)@i,\\ i:ri,\ \Gamma\} _{ri\in \mathcal{P}, Fact(m) \in acts(ri)} }}{C_{@}},$$
%\end{rul}
%where $S$ is either the adversarial secret set (if $Fact$ is an adversarial fact) or the secret set of rule $ri$. 
 
Observe that instead of a standard equality constraint $t=h$, we have introduced a new constraint $Eq(t,h)$. 
We define a similar rule that matches premises and conclusions of rules.

\begin{rul}[$C_{Prem}$]
\namedthmlabel{rul:CPrem} Assuming $\NoCanc(t)$,
$$\crule{ \Large{i:ri,\ Fact(t)\in prems(ri),\ \Gamma }}{\Large\substack{\{ i:ri,\ j:ru,\ Eq(t, h),\\ \qquad \ j<i,\ \Gamma\} _{ru\in \mathcal{P}, Fact(h) \in concs(ru)} }}{C_{Prem}}.$$
\end{rul}
%Finally, we add these rules to Tamarin's contraint solving relation and we extend Theorem \ref{thm:sound} to also consider the new rules. Namely, in Appendix \ref{sec:proofsrules}, we show that the constraint solving relation remains sound and complete. This relies on Theorem \ref{thm:cdh}.

%Crucially, these rules allow us to focus only on the indicators of a given term, searching for it in other facts. 
%As we will illustrate further below, this can be done by unifying the indicators with root terms in the simplified equational theory that does \emph{not} include multiplication, for which the unification problem is decidable and hence we can leverage existing unification algorithms.

Recall that we denote by $\mathcal{E}_{simp}$ the simplified equational theory that excludes multiplication, addition, and its associated laws directly. Using this theory, we want to check the root terms of $t$ can be unified with root terms of $h$. 

Let $t_1,\ldots,t_n$ be the root terms of $t$ and let $\roots(h)=\{h_1,\ldots,h_m\}$ be the root terms of $h$. 
We construct unification queries by considering combinations of root terms from $\roots(h)$, allowing repetition. Concretely, we select a sequence of $s$ (possibly repeated) elements $\roots(h)$:
$$h_{j_1},\ldots,h_{j_n}.$$ 

The intuition is that if distinct root terms of $t$ (e.g., $t_1$ and $t_2$) originate from the same root term of $h$, then that root term must internally contain an addition. Since the simplified theory $\mathcal{E}_{\mathit{simp}}$ does not introduce additions on its own, we explicitly account for this possibility by introducint substitutions with the $+$ operator.

Formally, suppose that some $h_{j_i}$ appears $p$ times in the chosen sequence, with $2 \leq p \leq n$.  If $h_{j_i}$ does not contain any exponent terms of sort $E$, we leave the sequence unchanged. Otherwise, let $e$ be an exponent variable of sort $E$ occurring in $h_{j_i}$. We introduce fresh variables $f_1,\ldots,f_p$ of sort $E$ and define the substitutions $$\sigma:\{e\mapsto f_1+..+f_P\}, \qquad \{\sigma_i:e\mapsto f_i\}_i.$$

We then construct modified root terms as follows: for each occurrence of $(h_{j_i})_k$, $k=1,...,p$ in the sequence, we replace it by $\sigma_k(h_{j_i})$, and for all other root terms $h_{j_k}$ we apply $\sigma$, obtaining $\sigma(h_{j_k})$.
This yields a new sequence
$$h'_{j_1},\ldots,h'_{j_n}.$$

Let $\mathcal{H}$ denote the set of all sequences of length $n$ obtained by applying the above transformation to all possible choices of repeated root terms. Using these sequences, we define the set of unification problems

\begin{align*}
eqs(t,\roots(h)) = 
\bigl\{
\langle t_1 = h'_{j_1},\ldots, t_n = h'_{j_n} \rangle
\;\big|\;
h'_{j_k} \in \mathcal{H}
\bigr\}.
\end{align*}

Intuitively, each element of $eqs(t,\roots(h))$ attempts to unify every root term of $t$ with a (possibly repeated) root term of $h$. 
or each such equation system $eqs$, Tamarin uses variant-based unification algorithm $\unify_{\Eqsimp}(eqs)$, which return a complete set of most general unifiers. 
In practice, it is often unnecessary to consider all most general unifiers. 
\begin{itemize}
\item If Tamarin returns a substitution $\sigma\in\unify_{\Eqsimp}(eqs)$ such that $\sigma(v_i)\neq\sigma(v_j)$ for any $\mathit{FrE}$-sorted variables $v_i$ and $v_j$, we set $\unify_{\Eqsimp}(eqs)=\{\sigma\}$.
\item Otherwise, we let $\unify_{\Eqsimp}(eqs)$ be the complete set of most general unifiers.
\end{itemize}

The rule that performs the above steps is as follows:
\begin{rul}[$C_{EqSimp}$]
\namedthmlabel{rul:CeqSimp} Assuming $\NoCanc(t)$,
$$\crule{\Large\substack{Eq(t, h),\ \Gamma}}{\{\Gamma' \}_{\sigma\in \{\unify_{\Eqsimp}(eqs) | eqs \in eqs(t, \roots(h))\}}}{C_{EqSimp}},$$
\end{rul}
\noindent where $\Gamma' = gen(\sigma)\Gamma \cup \{gen({\sigma})t \dheq gen(\sigma)h\}$.  
Recall that the substitution $gen(\sigma)$ generalizes the substitution $\sigma$ by extending every variable $v_i$ in the domain of the substitution from $v_i \mapsto term_i$ to $v_i\mapsto term_i + {Y_i}$, where $Y_i$ will later be interpreted as an unknown in linear equations. 

If $\NoCanc(t)$ instead does not hold, and unifiers for each of the root terms are found, Tamarin will add this as an assumption of the constraint system. 

In the cases where $t$ or $h$ are variables of compatible sorts, we can instead directly apply the following rule:
\begin{rul}[$C_{var}$]
\namedthmlabel{rul:var}
$$\crule{Eq(t,h, L), t \text{ variable}, t\notin Vars(h),\ \Gamma}{\sigma(\Gamma)}{C_{var}},$$
where $\sigma=\{t \mapsto h\}$.
\end{rul}
We also introduce a variant of this rule with the same name, with $t$ and $h$ interchanged. 
%\begin{rul}[$C_{FP}$]
%\namedthmlabel{rul:eqFP}
%$$ \crule{\Large\substack{Fr(m)\lor Pub(m), Eq(m,  X, S),\ \Gamma}}{\sigma(\Gamma) }{C_{FP}},$$
%where $\sigma=\{X \mapsto m, z\mapsto 1 \}$, if $X$ a variable of the right sort.
%\end{rul} Again, we introduce a variant of this rule with $m$ and $X$ interchanged. 
These rules state that when trying to unify a variable with a term, we can directly instantiate that variable with that term.

\subsection{Proofs of soundness and completeness}
\label{sec:proofsrules}

\subsubsection{Proof for rule \ref{rul:C@}} 
% \begin{rul}[$C_{@}$]
%$$\crule{Fact(X)@i,\ \NoCanc(rt(X)),\ \Gamma}{\Large\substack{\{Eq(m,X, S),\ Fact(X)@i,\\ i:ri,\ \Gamma\} _{ri\in \mathcal{P}, Fact(m) \in acts(ri)} }}{C_{@}},$$
%\end{rul}
%\noindent where $S$ is secret set of rule $ri$.

\begin{itemize}
\item \textbf{Soundness.} Soundness follows since the rule only adds constraints. Any model to the system with added constraints is also a model of the system without the additional constraints.
%\item \textbf{Completeness.} Suppose there is a model before application of the rule. Then there exists a structure $(dg,\theta)$ that satisfied the constraint $K(X\cdot Y)@i$. By the satisfaction relations, this means that in the dependency graph $dg$ there is a node instance given by the rule $ru:i$ whose label is $K(X\cdot Y)$. The only such rule (restricted by the sort $X\cdot Y:G$) is the rule $$[K(X\cdot Y)] --[K(X\cdot Y)] -> [In(X\cdot Y)].$$ 

%Hence by Lemma \ref{lemma:Kchain}, there is a rule $j:ru$ with a $Out(\alpha)$ conclusion such that there is a root term $\alpha_i\in rt(\alpha)$ s.th. $Ind(\alpha_i) = Ind(X)$, which shows the satisfiability of the constraint system after rule application. 
\item \textbf{Completeness. } Suppose there is a model before application of the rule. Then there exists a structure $(dg,\theta)$ that satisfies the constraint $Fact(t)@i$. By the satisfaction relations, this means that in the dependency graph $dg$ there is a node instance given by the rule $ru:i$ whose label is $Fact(h)$ such that $\theta(t) =_{\varepsilon_{DH}} \theta(h)$. Hence by definition the constraint $Eq(t, h)$ is also satisfiable. 
\end{itemize}

\subsubsection{Proof for rule \ref{rul:CPrem}} 
%The rule we introduced for solving a $\mathit{DH}$ premise fact was:  
%\begin{rul}[$C_{Prem}$]
%$$\crule{ \Large{i:ri,\ Fact(X)\in prems(ri), \ \NoCanc(rt(X)),\ \Gamma }}{\Large\substack{\{ i:ri,\ j:ru,\ Eq(m, X, S),\\ j<i,\ \Gamma\} _{ru\in \mathcal{P}, Fact(m) \in concs(ru)} }}{C_{Prem}},$$
%\end{rul}
%\noindent where $S$ is the secret set of rule $ri$. 
\begin{itemize}
\item \textbf{Soundness. } Since we are only adding more constraints, this rule is sound. 
\item \textbf{Completeness. } Suppose there is a model before application of the rule. Then there exists a structure $(dg,\theta)$ that satisfies the constraint system with the constraint $i:ri$. By the satisfaction relations, this means that in the dependency graph $dg$ there is a node instance given by the rule $i:ri$ whose premises are connected to other nodes. In particular there is an outgoing edge to a node $j:rj$ with a conclusion $Fact(h)$ such that $\theta(t) =_{\varepsilon_{DH}} \theta(h)$. Hence as in the previous proof, by definition the constraint $Eq(t,h)$ is also satisfiable. 
%Let $(dg,\theta)$ be a model of the constraint system before rule application, we need to show the system after application of the rule still has a model. By definition of dependency graphs, the premises of node $i:ri$ must be matched via an edge in $dg$ to the conclusions of another previous node $k:rk$, and in particular also the premise $Adv(X\cdot Y)$.
%\begin{itemize}
%\item [--] Case of $Adv=In$. Then the only rule creating a $In$ conclusion is the rule $K(\alpha)-->In(\alpha)$, hence $k:rk$ must be an instance of this rule.  Then by Lemma \ref{lemma:Kchain}, $(dg,\theta)$ is also a model of the constraint system after rule application. 
%\item [--] Case of $Adv=K$. Then we can directly apply Lemma \ref{lemma:Kchain} to conclude that $(dg,\theta)$ is also a model of the constraint system after rule application. 
%\end{itemize}
\end{itemize}

\subsubsection{Proof for rule \ref{rul:CeqSimp}. }

\begin{itemize}
\item \textbf{Soundness. } Let $(dg, \theta)$ be a solution of the constraint system after the application of the rule. Then $\theta \circ (gen(\sigma))$ is a solution of the system before the rule application. Indeed $Eq(\cdot)$ and $\dheq$ have the same semantics. 

\item \textbf{Completeness.}
Suppose that $(dg,\theta)$ is a solution of the constraint system before the
rule application. Thus,
\[
\theta(t) =_{DH} \theta(h)
\quad\text{and}\quad
\theta(\Gamma)\ \text{is satisfiable}.
\]
We show that there exists a unifier $\sigma$ produced by the rule and a
substitution $\tau$ such that
\[
\tau \circ gen(\sigma) = \theta,
\]
which implies that $(dg,\tau)$ is a solution after the rule application.

We distinguish two cases.

\paragraph{Case~1.}
Assume there exists a unifier $\sigma$ such that
\[
\sigma(v_i) \neq \sigma(v_j)
\quad\text{for all distinct variables } v_i,v_j \text{ of sort } \mathit{FrE}.
\]

For each variable $v$ of sort $\mathit{FrE}$, let $v_\sigma = \sigma(v)$.
Since $\theta$ must also map $\mathit{FrE}$ variables to $\mathit{FrE}$
variables, define $\tau$ on $\mathit{FrE}$-variables by
\[
\tau(v_\sigma) := \theta(v).
\]
This is well defined by the assumption on $\sigma$.

For variables $v$ of sort $E$, let $Y_v$ be the fresh variable introduced by
$gen$. Define
\[
\tau(Y_v) := \theta(v) - \sigma(v).
\]

Then by construction,
\[
\tau \circ gen(\sigma) = \theta,
\]
and therefore $(dg,\tau)$ satisfies $\Gamma'$.

\paragraph{Case~2.}
Assume that no unifier satisfies the condition of Case~1, and consider the full
set of most general unifiers returned by
$\unify_{\mathcal{E}_{\mathit{simp}}}$.

Let $t_1,\ldots,t_n$ be the root terms of $t$. By the linearity assumption,
no root term contains a product of $E$-variables.

We first consider the case where $t$ contains no $E$-variables.
Then $\theta(t_i)$ must be a root term for each $t_i$, since $\mathit{FrE}$
variables cannot be instantiated with sums.

Since $\theta(t) =_{DH} \theta(h)$, for each $t_i$ we have
\[
\theta(t_i)
=
\theta(h_{j_1})_{(k_1)}+\cdots +\theta(h_{j_s})_{(k_s)},
\]
where $\theta(h_{j_\ell})_{(k_\ell)}$ denotes the $k_\ell$-th root term of
$\theta(h_{j_\ell})$.

By $\NoCanc(t)$, and because $t$ does not contain $E$ variables, all but one of these factors must cancel, so there exists a
unique $\theta(h_{j})_{(k)}$ such that
\[
\theta(t_i) = \theta(h_j)_{(k)}.
\]

We now extract a unification query for each $t_i$:
\begin{itemize}
\item If $\theta(h_j)$ has exactly one root term, consider the equation
$t_i = h_j$, which is unified by $\theta$.
\item If $\theta(h_j)$ has multiple root terms, then (by linearity) $h_j$
contains exactly one $E$-variable $e$, and
\[
\theta(e) = f_1 + \cdots + f_m
\]
for irreducible monomials $f_\ell$. Then $\theta(h_j)_{(k)}$ contains one of these monomials, say $f_k$.
Define $\theta'$ to agree with $\theta$ except that
$\theta'(e) := f_k$, i.e. we ignore the other summands.
Then $\theta'$ unifies $t_i = h_j$.
\end{itemize}

Applying this construction to obtain unification queries for all $t_i$ yields a set of equations belonging to
$eqs(t,\roots(h))$, unified by some substitution $\sigma$. Since $\sigma$ is most general,
there exists a substitution $\tau$ such that $\tau \circ \sigma (t_i) = \theta(t_i)$ for root terms in the first case, or $\tau\circ\sigma(t_i) = \theta'(t_i)$ for root terms in the second case.

Recall that $gen(\sigma)$ maps each exponent variable $e$ to $\sigma(e) + Y_e$, where
$Y_e$ is a fresh variable. We therefore define a substitution $\rho = \tau' \circ \tau$,
where $\tau'$ maps each such $Y_e$ to $f_1 + \cdots + f_m - f_k$ in the second case.
This definition is well formed, since all variables introduced by $gen$ are fresh.

By construction, we obtain $\rho \circ gen(\sigma) = \theta$, and hence $(dg,\rho)$
satisfies the constraint system $\Gamma'$.

We now consider the case where a root term $t_i$ contains an $E$-variable $v$, then by linearity
\[
t_i = v \cdot m,
\]
where $m$ is an irreducible monomial containing only $\mathit{FrE}$ variables.
If $\theta(t_i)$ is a root term, the above argument applies directly.
Otherwise, $\theta(v)$ must be a sum, and by selecting a single summand we
again extract a unification query as above and define $\rho$ exactly with the same procedure. 
\end{itemize}

\subsubsection{Proof for rules \ref{rul:var}} % and \ref{rul:eqFP}}
\label{sec:basecases}
%Recall the following rule that act as `base case' of the previous rule. 
%\begin{rul}[$C_{var}$]
%$$\crule{Eq(m,X, L), m \text{ variable}, m\notin Vars(X),\ \Gamma}{\sigma(\Gamma)}{C_{var}},$$
%where $\sigma=\{m \mapsto X\}$.
%\end{rul}
\begin{itemize}
\item \textbf{Soundness. } If $(dg, \rho)$ is a model of the system after applying this rule, then $(dg, \rho \circ \sigma )$ is a model of the original system. 
\item \textbf{Completeness. } Suppose $(dg, \rho)$ is a model of the system before the rule application. Then since $\rho(v)$ is equal to $\rho(X)$ and $v$ is a variable, we can split $\rho=\theta \cup \{v\mapsto \rho(X)\}$. Then $(dg, \theta)$ is a solution to our system after rule application. 
\end{itemize}

\iffalse
\begin{rul}[$C_{FP}$]
$$ \crule{\Large\substack{Fr(m)\lor Pub(m), Eq(m,  X, L),\ \Gamma}}{\sigma(\Gamma) }{C_{FP}},$$
where $\sigma=\{X \mapsto m, z\mapsto 1 \}$, if $X$ a variable of the right sort.
\end{rul}

\begin{itemize}
\item \textbf{Soundness. } Suppose $(dg, \rho)$ is a model of the system after the rule. Then trivially $(dg, \rho \circ \sigma )$ is a model before the rule. 
\item \textbf{Completeness. } Suppose $(dg, \rho)$ is a model of the system before the rule application. Then since $\rho(\alpha)$ is equal to $\rho(X)$ and $\alpha$ is a ground fresh or public name, we can split $\rho=\theta \cup \{X\mapsto \alpha\}$. Then $(dg, \theta)$ is a solution to our system after rule application. 
\end{itemize}
\fi

\subsubsection{Proof for rule \ref{rul:C=}} 
\label{sec:proofC=}
Recall the Rule $C_=$ defined in Section \ref{sec:C=}. 
\begin{rul}[$C_{=}$]
If Equation (\ref{eq:DHActionFacts}) obtained from the constraint $t \dheq h$ admits a solution, let $\{\sigma^j\}$ be the set of substitutions corresponding to a basis of the solution space.
$$\crule{t\dheq h,\ \Gamma}{ \{ \sigma^j(\Gamma) \}_j  }{C_{=}}.$$
\end{rul} 

\begin{itemize}
\item \textbf{Soundness}: We need to show that if there is a model of the constraint system after application of the rule, then there is a model \emph{before} application of the rule. 
Let $\sigma=\sigma^{j}$ be the substitution (the rule \ref{rul:C=} case splits over the different substitutions consisting of the basis of the solution space) for which the system after rule application admits a model $(dg , \theta)$.

Observe that $dom(\sigma) \subset \{Y_i\}$, where $\{Y_i\}$ are the freshly introduced variables of sort $E$. %corresponding to each $h_i$ term. 
Define $\theta' = \theta \circ \sigma$ and $(dg, \theta')$ is a model of the system \emph{before} application of the rule. Indeed, $\theta'(\Gamma) = \theta(\sigma(\Gamma))$, hence all the constraint in $\Gamma$ are satisfied by $(dg, \theta')$. Similarly $\theta'(h)\dheq \theta'(t)$ is also satisfied, since $\sigma(t) = \sigma(h)$ (and hence also $\theta'(t) = \theta'(h)$), by definition of $\sigma$.

\item \textbf{Completeness}. We need to show that if there is a model $(dg,\theta)$ of the constraint system before application of the rule, then there is one also after. 
Recall that $(dg,\theta) \vDash t\dheq h$ if and only if $\theta(t) =_{DH} \theta(h).$ 

We split $\theta=\tau\theta'$, where $\theta'=\theta\restrict{\{Y_i\}}$. Then, $\theta'$ corresponds to a solution of the system $\tau(h)=\tau(t)$. 
In particular there exists a instantiation $\rho$ of the free $E$ variables in a particular $\sigma^j$ such that $\theta'=\rho\sigma^j$. 

Hence $(dg,\tau\rho)$ is a model of the constraint system after rule application for the case split corresponding to $\sigma^j$, since $\theta=\tau\theta'=\tau\rho\sigma^{j}$.
\end{itemize}

\subsection{Combining $\mathit{DH}$ and $\mathit{Msg}$ terms}
\label{app:combrule}

In this section we give the constraint solving rule that allows us to solve equality constraints containing terms of both sort $\mathit{DH}$ and of sort $\mathit{Msg}$. These equality constraints are introduced by Tamarin when unifying $Fact(A)$ with $Fact(B)$ via the rules $S_@$ or $S_{Prem}$. In the first case $Fact$ is an action fact, while in the second it is a premise fact.

\begin{rul}[$C_{comb}$]
\namedthmlabel{rul:Ccomb}
$$\crule{A=B, \ \Gamma}{\{\beta\circ \sigma\restrict{Msg}(\Gamma), \beta(\ex(\sigma\restrict{DH}))\}_{\sigma \in \unify_{\EqUsr}(\cl_1(A), \cl_1(B))}}{C_{comb}},$$
\end{rul}

\noindent where $\beta=\alpha_1 ... \alpha_n$ for each $\alpha_i \in \cl_2(A)\cup \cl_2(B)$ and we define the function $\ex$ below. 

Intuitively, the rule separates unification into a $\mathit{Msg}$ part and a
$\mathit{DH}$ part. The $\mathit{Msg}$ components are unified using
$\EqUsr$-unification on cleaned terms, while the $\mathit{DH}$ components are
reintroduced as explicit equality constraints to be solved using the dedicated
$\mathit{DH}$ rules from the previous sections.

\begin{definition}
The function $\ex$ transforms a substitution into a corresponding $\mathit{DH}$ equality constraint, i.e., $$\ex(\{\alpha\mapsto \beta \}):= Eq(\alpha, \beta). $$ 
When we have a substitution mapping multiple variables, we extend $\ex$ to multiple equality constraints. 
\end{definition}

In the following let $M$ denote all variables of sort $\mathit{Msg}$ but not $\mathit{DH}$. 
\begin{itemize}
\item \textbf{Soundnesss}: the rule does not introduce false solutions. %The proof proceeds similarly as before, whereby we must show that if we have a solution \emph{after} application of the rule (for one of the systems the rule creates), there was a solution also before. 

Let $\sigma \in \unify_{\EqUsr}(\cl_1(A),\cl_1(B))$ be a unifier such that there exists a model $(dg, \theta)$ satisfying both
$\beta\circ\sigma\restrict{Msg}(\Gamma)$ and $\beta(\ex(\sigma\restrict{DH}))$.

Since $\theta$ satisfies $\ex(\sigma\restrict{DH})$, the substitution
$\theta\restrict{DH}$ equalizes the $\mathit{DH}$ subterms abstracted during
cleaning. Likewise, $\beta\sigma\restrict{Msg}$ equalizes the $\mathit{Msg}$ parts of
$A$ and $B$, and thus
$$ \theta \circ \beta \sigma\restrict{Msg}(A)=
\theta \circ\beta\sigma\restrict{Msg}(B).
$$
By assumption, $(dg, \theta \circ \sigma\restrict{Msg})$ also satisfies $\Gamma$.
Hence $(dg,\theta \circ \sigma\restrict{Msg})$ is a solution of the original constraint
system before application of the rule.

\item \textbf{Completeness. }We show that the rule does not remove any solutions. 
The idea is that any unifier of $A$ and $B$ is composed of unifiers of $cl(A)$ and $cl(B)$ since $\Sigma_{DH}$ and $\Sigma_{Usr}$ are disjoint and hence we can separate the substitutions of $Msg$ and $\mathit{DH}$ variables.%, and we can factor the $\mathit{DH}$ variable through the clean terms. 

More precisely, let $(dg,\theta)$ be a model of the constraint system $\Gamma$ such that $\theta(A)=\theta(B)$. 
We decompose $\theta$ into $\theta\restrict{DH}$ and $\theta\restrict{M}$ according to the sort of variables. For any substitution in $\theta\restrict{M}$ that maps a $Msg$ variable to a $DH$ term $t$, we replace the term by a fresh variable $v$. This gives a new substitution $\theta'$ which is a unifier of $cl_1(A)$ and $cl_2(B)$ and the $v\mapsto t$ mapping corresponds to a subset of the $\beta$ function, i.e. $\beta\theta'=\theta\restrict{M}$.  By definition of $MGU$, there exists a $\phi$ such that: 
$$\phi\sigma=\theta'$$
Also, $\theta\restrict{DH}$ is a solution to $\beta(\ex(\sigma\restrict{DH}))$ since $\theta(A)=\theta(B)$. 

Finally, we obtain that $(dg,\phi\restrict{M}\circ \theta\restrict{DH})$ is a solution to the constraint system after rule application. Indeed 
 \begin{align*}
 & \phi\restrict{M} \circ \theta\restrict{DH} \circ \beta\circ\sigma\restrict{M}= \\
 & \theta\restrict{DH} \circ \phi\restrict{M} \beta\circ\sigma\restrict{M} \\
& \theta\restrict{DH} \circ \beta (\phi\restrict{M} \sigma\restrict{M}) \\
& \theta\restrict{DH} \circ \beta (\theta'\restrict{M}) =\\
& \theta\restrict{DH}\circ \theta\restrict{M} = \theta
\end{align*}
\end{itemize}

Finally, we can solve the constraints of the form $Eq(\alpha, \beta)$ by unifying indicators and producing $\dheq$ constraints as described in Section \ref{sec:newrules}.

\section{Optimization for $K(\cdot)$ premises.}
\label{sec:KINpremises}
We use a specialized rule to solve the premise $K(\cdot)$, representing the adversary's capabilities to perform $\mathit{DH}$ operations when solving algebraic equations, effectively replacing also Rule \ref{rul:C=}.

%Thus, for $K$ premises, we instead use the following set of rules.

First, observe that for terms of the form $\mu(t)$, there are two options for the adversary to deduce this term. Either (i) the adversary deduces $t$ and applies the $\mu$ operator, or (ii) the adversary learns $\mu(t)$ directly.
We introduce a rule that performs this simple case split and hence preserves solutions.
\begin{rul}[$C_{\mu}$]
\namedthmlabel{rul:Cmu}
$$\crule{ i:ri,\ K(\mu(t))\in prems(ri), \Gamma}{
\Large\substack{ \{j: [K(t)]-> [K(\mu(t)], j<i, \Gamma\} \text{ or } \\
\{  j:ru, i:ri, j<i, t=_{DH}h,\ \Gamma \}_{\small\substack{ru\in \mathcal{P},\\ Out(\mu(h)) \in concs(ru)}} } }{C_{PremK}},$$
\end{rul}

We now consider the more general approach for general terms that are not $\mu$ terms.  Given such a term $t$ and constraint system, we first define all possible ways to categorize the exponents of $t$ that are compatible with the current leaked set.
\begin{definition}
Let $m_1,\ldots,m_n$ be terms and $L^{\Gamma}_i$ the leaked set corresponding to a constraint system $\Gamma$ at some timepoint $i$. We define $\mathit{VE}_i(m_1,\ldots,m_n)$ to be the set of all $E$ variables appearing in $m_1,\ldots,m_n$ that are \emph{not} in $L_i$:
$$\mathit{VE}_i(m_1,\ldots,m_n) = \{v\in \bigcup_j \mathit{vars(m_j)} \ |\ v:E, v\notin L_i \}$$
\end{definition}
\begin{definition}
Let $\Gamma$ be a given constraint system, let $i$ be a given timepoint with corresponding leaked set $L^{\Gamma}_i$, and let $m_1,\ldots,m_n$ be a set of terms.  We define 
$$\mathcal{L}^{\Gamma}(m_1,\ldots,m_n) = \{L_i\cup P \}_{P\in \mathcal{P}(\mathit{VE}_i(m_1,\ldots,m_n))}$$
where $\mathcal{P}(\cdot)$ denotes the power set operator. Hence $\mathcal{L}^{\Gamma}(m_1,\ldots,m_n)$ considers all possible ways of extending $L^{\Gamma}_i$ with elements from $\mathit{VE}_i(m_1,\ldots,m_n)$.
\end{definition}
Now let $t$ be a term such that $\roots(t) = \{t_1,\ldots,t_n\}$, and $t$ is not an $E$-variable. 
\begin{rul}[$C_{PremK}$]
\namedthmlabel{rul:CPremK} Assuming $\NoCanc(t)$,
$$\crule{ \Large{i:ri,\ K(t)\in prems(ri), \Gamma }}{\Large\substack{\{  i:ri,\ K(t)\in prems(ri),\ \Gamma_{L}\\  j_1:ru_1,\ ...,\ j_n:ru_n, \ j_k<i,\\  EqK(t, \langle h_1,\ldots, h_n\rangle, L) \} _{\substack{ru_1,\ldots,ru_n\in \mathcal{P},\\ Out(h_k) \in concs(ru_k),\\ L\in \mathcal{L}^{\Gamma}(t, h_1,\ldots,h_n)}}} }{C_{PremK}},$$
\end{rul}
\noindent where $\Gamma_{L}=\Gamma \cup \{K(e)@j_e, j_e<i\}_{e\in L\setminus{L^{\Gamma}_i}}$. The constraints $EqK(t, \langle h_1,\ldots, h_n\rangle, L)$ are defined in Section \ref{sec:semanticdefs}, and they pair each root term $t_k$ of $t$ with a term $h_k$ containing its indicator.  

We now define the respective rule for variables. 
\begin{rul}[$C_{varK}$]
\namedthmlabel{rul:CvarK} Let $v$ be an $E$-var,
$$\crule{ \Large{i:ri,\ K(v)\in prems(ri), \Gamma }}{\Large\substack{\{i:ri,\ K(v)\in prems(ri), \\ j:ru,\ j<i,\ EqK(v, h,L),\ \\  \ \Gamma_{L} \} _{ru \in \mathcal{P}, Out(h) \in concs(ru), L\in \mathcal{L}^{\Gamma}(h)} }}{C_{PremK}},$$
\end{rul}
Where again $\Gamma_{L}=\Gamma \cup \{K(e)@j_e, j_e<i\}_{e\in L\setminus{L^{\Gamma}_i}}$ and this time we don't consider $v\in L$ and we only case split over the exponents $e\neq v$ appearing in $h$.

The proofs of soundness and completeness for Rule~\ref{rul:CvarK} are identical to those for Rule~\ref{rul:CPremK}. We therefore present only a single proof.

\begin{itemize}
\item \textbf{Soundness.} The rule only adds constraints. 
\item \textbf{Completeness.} If there is a dependency graph satisfying the system before rule application, by Lemma \ref{lemma:Kchain} (see below), each root term $t_i$ of $t$ is connected by a $K$-chain to a rule containing an $Out(h_i)$ term such that $Ind(\theta(t_i))=Ind(\theta(h_j))$, and $h_j\in \roots(\theta(h_i))$. %Recall that by definition of L, $K(z)$ is solvable. 
%Then following the $K$ chain describing $\mathit{DH}$ operators, we obtain a term $h$ such that $h=_{DH}\theta(t)$. Hence by definition $(dg, \theta)$ also satisfies the constraint system after rule application. 
\end{itemize}

\begin{lemma}
\label{lemma:Kchain}
Let $t$ be a root term such that $\NoCanc(t_i,t_j)$ holds between its root terms. 
Suppose that a rule $ru$ has as its premise a $K(t)$ fact. Let $dg$ be an arbitrary dependency graph that has as its root the node containing an instance of this rule. Then for each root term $t_i$ of $t$ there is a path within $dg$ from $ru$ to a previous node $rj$ that has an $Out(\alpha)$ fact as conclusion of $rj$, and $\roots(\alpha)$ contains a term $\alpha_i$ such that $Ind(\alpha_i) = Ind(t_i)$. All other nodes of the path are instances of the rule $K(t_1),\ldots,K(t_n) --> K(f(t_1,\ldots,t_n))$.
\end{lemma}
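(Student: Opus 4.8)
The plan is to prove this by induction, reapplying at each step a slightly more general statement: for any node $m$ of $dg$ with a conclusion $K(s)$ and any root term $w$ of $s$ with non-trivial indicator, there is a backward path from $m$ to a node $rj$ with an $\mathit{Out}(\alpha)$ conclusion (or one of the generating rules of $\mathit{MD}_{DH}$) such that $\roots(\alpha)$ contains $\alpha_i$ with $\Ind_L(\alpha_i)=\Ind_L(w)$, and every intermediate node is a $\mathit{DH}$ function-application rule; the lemma is the instance in which $m$ feeds the $K(t)$-premise of $ru$ (root terms of $t$ with trivial indicator being reachable in the same way via the \texttt{FrE} and generating rules). The induction parameter is the length of the longest backward path from $m$. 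The first step is to recall, from the normal-form characterisation of $\mathit{DH}$-extended dependency graphs in Appendix~\ref{sec:depgraphs}, that $K(s)$ is produced either (a) by a rule making $K(\alpha)$ available from an $\mathit{Out}(\alpha)$ fact with $\alpha =_{AC} s$ in normal form, or by a generating rule (in which case $s$ is its own root term); or (b) by a $\mathit{DH}$ function-application rule $[K(s_1),\ldots,K(s_k)]\to[K(f(s_1,\ldots,s_k))]$ with $f\in\Sigma_{DH}$ whose normal form is $s$. A $G$-sorted output of a user-defined deconstruction is peeled off by the cleaning mechanism of Section~\ref{sec:combination} before this lemma applies, so I would confine attention to deductions built from the rules in (a) and (b).

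Case~(a) is the base case: put $\alpha_i:=w$; then $\roots(\alpha)=\roots(s)$ since $\alpha =_{AC} s$ in normal form, $\Ind_L(\alpha_i)=\Ind_L(w)$, and the path is the single edge from $m$. In case~(b), $f$ must be a $G$-producing operator, hence $f\in\{\,\cdot\,,{}^{-1},\dhExp\,\}$, and each premise $K(s_\ell)$ of $m$ is supplied by a unique incoming edge from a node on a strictly shorter backward path. The key claim I would prove is that there are an index $\ell$ and a root term $y$ of $s_\ell$ (which is therefore $G$-sorted) with $\Ind_L(y)=\Ind_L(w)$; given this, the induction hypothesis applied to the node feeding $K(s_\ell)$ and the root term $y$ yields a path to a node $rj$ with an $\mathit{Out}(\alpha)$ conclusion and $\alpha_i\in\roots(\alpha)$ with $\Ind_L(\alpha_i)=\Ind_L(y)=\Ind_L(w)$ whose interior is function-application rules, and prepending $m$ and the edge leaving $K(s)$ completes the path.

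For the key claim I would invoke Theorem~\ref{thm:cdh}(2) with $T=\{s_1,\ldots,s_k\}$: since $s$ is the normal form of $f(s_1,\ldots,s_k)$ we have $s\in\Gen_{DH}(T)$ (modulo $=_{DH}$, which is the level at which $\Ind_L$ is defined), so $\Ind_L(w)\in\{\Ind_L(x):x\in\roots(s_\ell)\}$ for some $\ell$, provided the theorem's hypothesis holds, i.e.\ $\Ind_L(s_\ell)=1$ for every $E$-sorted premise $s_\ell$. Establishing this, which I expect to be the main obstacle, requires a secondary appeal to Theorem~\ref{thm:cdh}(1) along the adversary's derivation of such an $s_\ell$ — whose leaves are the fresh exponents it generates (which carry a $K(\cdot)$ action and hence lie in $L$, with indicator $1$), the constants $0$ and $1$, and received $E$-terms — and a careful alignment of the leaked-set bookkeeping of Definition~\ref{def:basis-ind} with the $K(\cdot)$-provenance inside $dg$, together with control of how $\dhExp$ can split a single root term of its $G$-argument into several root terms of $s$ (through distributivity of exponentiation over $\cdot$ and over $+$) while preserving their indicators. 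The hypothesis $\NoCanc(t)$ enters to keep the correspondence $w\mapsto\alpha_i$ simultaneously well-defined on all root terms of $t$: it forbids a root term from reducing to $e_G$ or from being cancelled by another, so that each root term genuinely arises from a deduction step and is therefore reachable backwards, which is exactly what the completeness proof of Rule~\ref{rul:CPremK} invoking this lemma requires. The remaining points — existence of the path and that its interior is function-application rules — are then immediate from the normal-form conditions and the inductive construction.
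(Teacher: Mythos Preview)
Your approach is correct but structurally different from the paper's. The paper does not proceed by induction on path length applying Theorem~\ref{thm:cdh} locally at each step. Instead, it first observes that the only rules producing a $K(\cdot)$ conclusion with a non-trivial indicator are \texttt{recv} and the function-application rules \texttt{f}; it then takes the entire subtree of $dg$ rooted at $ru$ obtained by following $\texttt{f}$-edges backwards until hitting \texttt{recv} leaves, collects the set $T=\{\alpha_1,\ldots,\alpha_n\}$ of terms at those leaves, notes that $t\in\Gen_{DH}(T)$, and applies Theorem~\ref{thm:cdh} \emph{once} to conclude that every indicator of a root term of $t$ already occurs among the indicators of the root terms of some $\alpha_i$. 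The path is then simply the tree path from $ru$ to the corresponding leaf.

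The practical difference is exactly the obstacle you flagged: in your step-by-step induction you must, at every \texttt{f}-node, verify that each $E$-sorted premise $s_\ell$ has $\Ind_L(s_\ell)=1$ before Theorem~\ref{thm:cdh}(2) applies, forcing the secondary inductive appeal to part~(1) you sketch. The paper's global argument sidesteps this entirely: the hypothesis of Theorem~\ref{thm:cdh} need only be checked on the leaf set $T$, and the leaves are precisely the terms received from $\mathit{Out}$ facts (whose exponents are in $L$ by the definition of the leaked set) or the adversary's generating rules (whose outputs have trivial indicator). Your route works, but the paper's decomposition is cleaner and shorter; conversely, your version makes the path construction and the role of the $E$-premise hypothesis more explicit.
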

\begin{proof}
The only rules that contain a $K(t)$ fact as conclusions where $t$ does not have a trivial indicator are the following
\begin{lstlisting}
recv: [Out(t)] -> [K(t)] 
f: [K(t1),...,K(tn)] -> [K(f(t1,...,tn))]
\end{lstlisting}
Now consider the subgraph of $dg$ obtained by starting from the root $ru$ and taking all nodes that are instances of the rule \texttt{f} and stopping when hitting a \texttt{recv} node instance. In particular this will create a tree where the leaves are \texttt{recv} node instances and all intermediary nodes are \texttt{f} instances. Let $T=\{\alpha_1,\ldots,\alpha_n\}$ be the terms of the leaves in the \texttt{recv} conclusions. Since they come from $Out$ facts, by the definition of our leaked set, the exponents in $T$ belong in $L$. Applying \texttt{f} rules corresponds to computing terms in $\Gen(T)$. Hence $t\in \Gen(T)$, so by Theorem \ref{thm:cdh}, any indicator of $t$ is also an indicator of one of the terms $\alpha_i$. 
\end{proof}

In a second step, when unifying these root/indicator terms in the simplified Diffie-Hellman theory, additionally to introducing variables directly within our terms, we also allow the adversary to manipulate and combine terms arbitrarily, representing that he can apply arbitrary DH operators. 

We mentioned that to do so, we will pass through algebraic systems. In particular, we will create equations over the leaked and the secret variables.

Again, suppose $\NoCanc(t)$ holds for a given term $t$, for which we want to solve $EqK(t, \langle h_1,\ldots, h_n\rangle, L)$, where $\mathcal{H}=\{h_1,...,h_n\}$ is a set of root terms coming from different $Out$ facts, and $L$ an exponent set.  Let $eqs_L(t, \mathcal{H})$ be the set of combinations of root elements of $h_1,\ldots,h_n$ and the associated set of equations obtained following the algorithm described for Rule \ref{rul:CeqSimp}, but instead of unifying $t_i$ with $h_i'$, we unify $Ind_L(t_i)$ with $Ind_L(h_i')$.

Consider a $\sigma\in\unify_{\varepsilon_{simp}}(eqs_L(t, \mathcal{H}))$. %Then the set of terms $\{Ind_L\sigma(h_k)\}_{h_k\in \roots(h)}$ contain all the indicators of the root terms of $t$. 
Similarly to Rule \ref{rul:CeqSimp}, we again consider a generalization of the substitution $\sigma$.

We will consider a function that differs slightly from the generalization process from Definition \ref{def:gen}. There, we generalized substitutions arising from the unification of entire root terms, whereas here we unify only their secret subcomponents. 
Furthermore, for the $h_i$ terms, which do not necessarily satisfy the $\NoCanc$ property, we must explicitly allow for monomials containing secret variables that may cancel each other out.
\begin{definition}
\label{def:genH}
Define $S_{\mathcal{H}} = \{ e_i\}_{i=1,...,p}$ to be the set of monomials in $\mathcal{H}=\{\sigma(h_1),...,\sigma(h_n)\}$ containing variables in the secret set $S$ that do not appear in $\sigma(t)$. 
\begin{align*}
&gen_{\mathcal{H}}(\sigma) (v) = \\
& \begin{cases}
\sigma(v) & \text{ if }v:\mathit{FrE} \\
gen(\sigma)(v) +\sum_{e_i\in S_{\mathcal{H}}} Y_i e_i & \text{ if }v:E
\end{cases}
\end{align*}
where $Y_1,...,Y_{|S_{h}|}$  are fresh $E$ variables.
\end{definition}

To take into account the adversary capabilities, we must check if the $gen_{\mathcal{H}}(h_i)$ can actually be combined, possibly with other adversarially known terms to construct $gen_{\mathcal{H}}(t)$. 

More precisely, we further introduce new variables $X_k$ for each term $h_k$ that represent the adversary exponentiating $h_k$ with some terms. We also introduce an extra variable $X_{n+1}$ representing the adversary multiplying the $h_i$'s with a term of his knowledge. We hence also introduce the corresponding constraints $K(X_l)@j_l$ with $j_l<i$ and $K(X_{k+1})@j_{k+1}$ with $j_{k+1}<i$, where $i$ is the timepoint of the rule whose premise $K(t)$ we are solving. 

Again, as the $X_i$ are only solved in $\mathbb{Q}(L)$, we must further explicitly allow the adversary to multiply with the other secret monomials in $S_{\mathcal{H}}$.
We introduce $p$ more variables $W_p$, add the constraints $K(g^{\sum_{e_i\in S_{\mathcal{H}} }W_ie_i})@j_w$ and $j_w<i$, and solve the following system of equations: 

\begin{align*}
& \operatorname{eq}_{t,h_1,..,h_n} := \\
& \quad gen_{\mathcal{H}}(\sigma)(t) \dheq \\
& \quad \sum_{i=1}^{n}X_igen_{\mathcal{H}}(\sigma)(h_i) +X_{n+1} + \sum_{e_i\in S_{\mathcal{H}} }W_igen_{\mathcal{H}}(e_i) . 
\end{align*}

Observe that the equation above is not necessarily a linear equation, since it may involve products of variables of the form $X*Y$.  
Since the current implementation supports only linear equation-solving algorithms, we can address this limitation by replacing each product $X*Y$ with a fresh variable $Z$. We then solve the system of equations using the Gauss elimination algorithm, obtaining a solution $Z=t$. Provided that the resulting solution term does not contain the $+$ function symbol, we subsequently apply variant-based unification to compute a complete set of solutions to the equation $X*Y=t$. Supporting the more general case in which this assumption is not satisfied would require implementing more advanced algebraic techniques, such as those based on Gröbner bases (e.g.\ Buchberger's algorithm). However, such cases have not arisen in our examples. 

\begin{rul}[$C_{EqSimpK}$]
\namedthmlabel{rul:CeqSimpK} Assuming $\NoCanc(t)$,
$$\crule{\Large\substack{%i:ri,K(t)\in prems(ri), 
Eq_K(t, \langle h_1,\ldots,h_n\rangle, L),\ \Gamma}}{\{\Gamma_{\sigma} \}_{\sigma\in \unify_{\Eqsimp}( eqs_L(t,\mathcal{H}) )}}{C_{EqSimp}},$$
\end{rul}
\noindent where $\{ i:ri,\ K(t)\in prems(ri), j_1:ru_1,\ldots, j_n:ru_n, j<i,\}\subset \Gamma$ and 
\begin{align*}
\Gamma_{\sigma} =& \sigma(\Gamma) \cup \{K(g^{X_{n+1}})@j_{n+1}\} \cup \{K(X_i)@j_i, j_i<i\}_{i=1}^{n} \\
& \cup  \{K(g^{\sum W_igen_{\mathcal{H}}(\sigma)(e_i)})@l_i, l_i<i \}_{i=1}^{p}  \cup\\
& \operatorname{eq}_{t,h_1,..,h_n}. 
\end{align*}
Again, we distinguish two cases for $\unify_{\Eqsimp}$. 
\begin{itemize}
\item If Tamarin returns a substitution $\sigma\in\unify_{\Eqsimp}(eqs_L(t,\mathcal{H}))$ such that $\sigma(v_i)\neq\sigma(v_j)$ for any $\mathit{FrE}$-sorted variables $v_i$ and $v_j$, we set $\unify_{\Eqsimp}(eqs_L(t,\mathcal{H}))=\{\sigma\}$. %\todo{check if for $K$ facts this should also not verify that all $\sigma(v_i)$ are sent to elements in $NB$, otherwise we also need to go through a complete set of MGUs.}
\item Otherwise, we let $\unify_{\Eqsimp}(eqs_L(t,\mathcal{H}))$ be the complete set of most general unifiers.
\end{itemize}

The proof that this rule preserves solutions of the system and doesn't introduce spurious one is similar to the proof concerning Rule \ref{rul:CeqSimp}. 
\begin{itemize}
\item \textbf{Soundness} Let $(dg,\theta)$ be a model of the constraint system after rule application, $(dg,\theta\sigma)$ is a model of the constraint system before rule application. 

\item \textbf{Completeness} Let $(dg,\theta)$ be a model of the constraint system before rule application. Analogously to the proof concerning Rule \ref{rul:CeqSimp}, we can show there exists $\tau$ such that $\tau\sigma = \theta$ on the variables that are not the $W_i$ or $X_i$ variables. Since $(dg, \theta)$ satisfied the constraints $K(t)$ and $Out(h_i)\in concs(ru_j)$, in the dependency graph $dg$, there is a construction of $\theta(t)$ from the $\theta(h_i)$. 

If in this construction the adversary uses the $exp$ operator, computing $h_i^{e_K}$ for some exponent $e_K$ (that must be of his knowledge, since the rule in $dg$ requires the fact $K(e_K)$ as premise, we set $\tau_X:X_i\mapsto e_K$, and $K(\tau_K(X_i))$ is satisfiable, since $e_K$ must previously by known by the adversary in $(dg,\theta)$. 

If the adversary uses the multiplication operator, there are two cases:

\begin{itemize}
\item The adversary multiplying the terms $h_i$ with some other terms $g_i$ that only contain exponents that are known by the adversary. Then $g_i=g^f_i\in Gen(L)$, we set $\tau_X:X_{k+1} = \sum_i f_i$, and again $K(\tau_X(X_{k+1}))$ is satisfiable, since $K(\sum_f{f_i})$ is also solved in $(dg,\theta)$.

\item If the adversary uses the multiplication operator multiplying the terms $\theta(h_i)$ with some terms $g_i$ that also contain secret exponents, they must be terms used to cancel out other root terms of $h_i$. Indeed, recall that we assume $\NoCanc(t)$ holds and all the indicators of $t$ are already found in the $h_i$. So the secret exponents in the $g_i$ terms that do not appear in $t$ must appear in the $h_i$ terms.  Hence, $g_i$ is of the form $g^{l_is_i}$, where $s_i = \theta(e_i)$ for some $e_i$. 
In this case we can set $\tau_W:W_i\mapsto l_i$, and $K(\tau_W(g^{\sum W_i \sigma(e_i)}))@j$ will also be satisfiable by $(dg, \tau)$ since $(dg,\theta)$ also satisfies the constraint $K(g^{\sum W_i e_i})@j$. 
 \end{itemize}
Hence, we have shown that $(dg, \tau_W\tau_X\tau)$ is a model to the constraint system after rule application. 
\end{itemize}

As in Section \ref{sec:C=}, by abuse of notation, we will assume terms $t$ and the $h_i$ to be of the form $g^t$ and $g^{h_i}$.  
We now proceed to find a solution to the following equation with coefficients in $\mathbb{Q}(l_1,...l_r)$ for the given variables $\{W_i\}\cup \{X_i\}\cup \{Y_i\}$
\begin{align*}
& \operatorname{eq}_{t,h_1,..,h_n} := \\
& \quad gen_{\mathcal{H}}(\sigma)(t) \dheq \\
& \quad \sum_{i=1}^{n}X_igen_{\mathcal{H}}(\sigma)(h_i) +X_{n+1} + \sum_{e_i\in S_{\mathcal{H}} }W_igen_{\mathcal{H}}(e_i). 
\end{align*}

Matching each coefficient of the polynomials, this gives a system of equations in the variables $\{W_i\}\cup \{X_i\}\cup \{Y_i\}$.

\begin{rul}[$C_{=K}$]
\namedthmlabel{rul:C=K}
If Equation $\operatorname{eq}_{t,h_1,..,h_n}$ admits a solution, let $\{\sigma^j\}_j$ be the corresponding solution substitutions. 
$$\crule{{\Large \substack{\operatorname{eq}_{t,h_1,..,h_n},\ K(g^{\sum(W_iE_i)})@j, \\ \ K(g^{X_{k+1}})@j_{k+1},\ K(X_i)@j_i},\ \Gamma} }{ \{ \sigma^j(\Gamma') \}  }{C_{=}}.$$
where $\Gamma' = \Gamma \cup \{K(g^{\sum(W_iE_i)})@j,\ K(g^{X_{k+1}})@j_{k+1}\}\cup \{K(X_i)@j_i\}$
\end{rul}

\begin{itemize}
\item \textbf{Soundness.} We first show that this rule is sound.
Let $(dg,\theta)$ be a solution to the constraint system $\sigma(\Gamma)$. To prove soundness, we use an argument analogous to that in Section \ref{sec:proofC=}. 
If we define $\theta' = \theta \circ \sigma$, then $(dg, \theta')$ is a model of the system \emph{before} application of the rule. Indeed, $\theta'(\Gamma) = \theta(\sigma(\Gamma))$, hence all the constraints in $\Gamma$ are satisfied by $(dg, \theta')$. 

\item \textbf{Completeness.} As for completeness, suppose we have a solution $(dg,\theta)$ of the constraint system before rule application. Because the $K(W_i)$ and $K(X_i)$ premises are satisfied by $dg$, $\theta(X_i), \theta(W_i)\in Gen(\theta(L))$. Similarly, because $\sigma$ is the most general unifier of the secret variables, the remaining $\theta(Y_i)\in Gen(\theta(L))$.
Then $\theta$ must be a solution of the equation system $\operatorname{eq}_{t,h_1,..,h_n}$, and there is an instantiation $\rho$ of the $W_i$, $X_i$ and $Y_i$ variables such that $\rho \sigma^{j} = \theta\restrict{\{W_i, X_i, Y_i\}}$. Hence $(dg,\theta\restrict{\mathcal{V}\setminus \{W_i, X_i, Y_i\}}\circ \rho)$ is a solution to the constraint system corresponding to case $\sigma^{(j)}$. 
\end{itemize}

\section{Tamarin model of MQV}
\label{sec:mqvmodel}

We show our Tamarin model of MQV and explain the reasoning steps the extended Tamarin prover tool is now following. Note that our newly added builtin of extended Diffie-Hellman is called \texttt{DH-multiplication}.

\begin{lstlisting}[]
theory MQV
begin
builtins: DH-multiplication, symmetric-encryption

rule GenKey: 
   [ Fr(sk:FrE) ] 
   --[  ]->
   [ !SKey($A, sk:FrE), !PubKey($A, g^sk:FrE),
     Out(g^sk:FrE) ]

rule GenKeyCompromised:
    [K(sk:E)]
   --[ Compromised($A) ]->
   [ !PubKey($A, dhExp(g, sk:E) ) ,
     Out(g^ sk:E) ]

rule InitiatorRole:
  [!SKey($I, a:FrE),!PubKey($R, g^b:E)), 
   Fr(x:FrE)]
  --[ Neq($I, $R) ]->
  [Out(g^x:FrE), 
   Initiated($I, $R, a:FrE, x:FrE, g^b:E)]

rule ReceiverRole:
let kAB = (g^x:E . g^(a:E* mu(g^x:E)))^
                  (y:FrE + b:FrE*mu(g^y:FrE))
in
  [!SKey($R, b:FrE), !PubKey($I, g^a:E), 
  In(g^x:E), Fr(y:FrE), Fr(m:FrE)]
  --[ Neq($I, $R), Neq(a:E, b:FrE), 
      Neq(x:E, y:FrE), RunningR(R,I, kAB)]->
  [ Out(g^y:FrE), Out(senc(g^m:FrE, kAB)), 
    ReceiverSend($R,$I,kAB,m:FrE) ]

rule InitiatorRole2:
let kAB = (g^y:E . g^(b:E*mu(g^y:E)))^ 
	      (x:FrE + a:FrE*mu(g^x:FrE))
in
  [ In(g^y:E)), Fr(m:FrE), In(senc(g^mr:E, kAB))
  Initiated($I, $R, a:FrE, x:FrE, g^b:E), ]
  --[Neq(y:E, x:FrE), Neq(a:FrE, b:E), 
     AgreeKeyI($I, $R, kAB)]->
  [Out(senc(g^m:FrE, kAB)) ]

rule ReceiverRole2:
  [ In(senc(g^mi:E, kAB)), 
    ReceiverSend($R, $I, kAB, m:FrE)]
  --[Neq(m:FrE,mi:E), AgreeKeyR($R, $I, kAB)]->
  [ ]
  
  end
  \end{lstlisting}

\subsection{Agreement property}
We first explain how Tamarin disproves the property
\begin{lstlisting}
lemma agreementI:
"All I R K #i. AgreeKeyI(I, R, K)@i & (not (Ex #k. Compromised(I)@k))& 
(not (Ex #k. Compromised(R)@k)) 
 ==> Ex #j. Running(R, I, K)@j"
 \end{lstlisting}
In particular, this will reconstruct the attack by Kaliski. 
Tamarin searches for a counterexample, in a backward manner, starting from the action fact $AgreeKeyI(I,R,K)$ trying to build a trace that does \emph{not} contain the action fact $RunningR(R,I,K)$.

The action fact $AgreeKeyI$ only appears in the rule \texttt{InitiatorRole2}, implying that we must have:

\begin{align*}
K & = (g^{y_E}\cdot g^{b\cdot \mu(g^{y_E})})^{x+\mu(g^x)a}\\
& = g^{y_Ex}\cdot g^{b\mu(g^{y_E})x}\cdot g^{y_E\mu(g^x)a} \cdot g^{b\mu(g^{y_E})\mu(g^x)a},
\end{align*}
where $g^b$ is Bob's (authentic) public key and $g^{y_E}$ is the ephemeral key received. 

 Since Alice checks the validity of the encrypted message, the received message must be encrypted with that key. Hence the message either comes from the rule \texttt{ReceiverRole} using the same key, or the adversary knows the key and has encrypted the message himself (Tamarin proved the key secret in the lemma \texttt{secrecyI} from Section \ref{sec:exmqv}, so can easily exclude the latter case). The key in \texttt{ReceiverRole} is
 \begin{align*}
 K_B & = (g^{x_E}\cdot g^{e_E \mu(g^{x_E})})^{y+\mu(g^{y})b} \\
 & = g^{x_E y}\cdot g^{e_E \mu(g^{x_E})y} \cdot g^{x_E\mu(g^{y})b}\cdot g^{e_E\mu(g^{x_E})\mu(g^{y})b},
 \end{align*}
\noindent where $g^{x_E}$ is the ephemeral key received, and $g^{e_E}$ is the (possibly malicious) public key of the agent $E$ Bob is talking to. In particular, note that we cannot have that $E=I$, otherwise this trace would not be a counterexample (recall that our trace cannot contain the action fact $RunningR(R,I,K)$). 

As mentioned, since Alice checks these two keys match, Tamarin inserts the constraint

$$Eq(K, K_B).$$

The root terms of $K$ are $$g^{x},\ g^{b\mu(g^{y_E})x},\ g^{y_E\mu(g^x)a} ,\ g^{b\mu(g^{y_E})\mu(g^x)a}$$

We then consider all permutations of root terms of $K_B$ and unify root terms. We consider here only a permutation that leads to an attack. In particular, we try to unify the
%indicators 
root terms with only the subterms $g^{x_Ey}$ and  $g^{x_E\mu(g^y)b}$. % (the same would work for $g^{x_E\mu(g^y)b}$). For this, we must generalize the term. 
The only variable of sort $E$ in these two terms is $x_E$, so we replace it by $x_E = f_1+f_2$, obtaining the four root terms $$g^{f_1*y}, g^{f_2*y}, g^{f_1*\mu(g^y)*b}, g^{f_2*\mu(g^y)*b}.$$ Consider the permutation corresponding to the unification queries

\begin{align*}
& g^{x*{y_E}} = _{\varepsilon_{simp}} g^{f_1 *y} ,\\
& g^{b*\mu(g^{y_E})*x} = _{\varepsilon_{simp}} g^{f_1*\mu(g^y)*b}\\
& g^{y*\mu(g^x)*a} = _{\varepsilon_{simp}} g^{f_2*y} \\
& g^{b*\mu(g^{y_E})*\mu(g^x)*a} = _{\varepsilon_{simp}} g^{f_2*\mu(g^y)*b}
\end{align*}
Solving these equations gives a possible unifier: $\sigma=\{f_1\mapsto x, f_2\mapsto a\mu(g^x), y_E \mapsto y\}$. Its generalization is given by:
$$\{x_E\mapsto x + a\mu(g^x)+ Y_1, e_E \mapsto Y_2, y_E\mapsto y + Y_3\}.$$

Recall that we abstract $\mu$-terms away, so we introduce new variables $\mu_y,\mu_{y_E},\mu_x$ and $\mu_e$, keeping track of the substitutions:
\begin{align*}
&\mu_y=\mu(g^y),\\
&\mu_{y_E}=\mu(g^{y+Y_3}),\\
&\mu_x = \mu(g^x), \\
&\mu_e = \mu(g^{x + a\mu(g^x) + Y_1}).
\end{align*}

We then consider all possible equalities of $\mu$ variables. In particular, we consider the case where $\mu_y = \mu_{y_E}$ and the others are different, which corresponds to the substitution $Y_3=0$. 

Finally, we get the equation:
\begin{equation}
\label{eq:ex}
gen(\sigma)(K_B) = gen(\sigma)(K).
\end{equation}
which, when instantiating $Y_3=0$, corresponds to the long equation of exponents:
\begin{align*}
&(y + b\mu_y)(x+\mu_x a)=\\
&(x+a\mu_x+Y_1+Y_2\mu_e)(y+\mu_yb).
\end{align*}
%Observe that $Y_1$ appears inside the term that the variable $\mu_e$ replaces. Hence if we were to solve the equation for $Y_1$, it would contain the variable itself.  

Observe that $Y_1$ appears inside the term that the variable $\mu_e$ replaces. Hence if we were to solve the equation for $Y_1$, it would contain the variable itself.   
Hence we instead solve the equation for $Y_2$, yielding $$Y_2 =-\frac{Y_1}{\mu_e}.$$
We have one free variable $Y_1$, so we set $Y_1=v_{Y}$, for a new variable $v_{Y}$ of sort $\mathit{E}$.

We plug this back in our constraint system, and we need to solve for the premises of rule \texttt{ReceiverRole}. 
In particular, we need to solve $$In(g^{x+\mu_xa +v_{Y}}),$$
$$!PubKey(\$E, -\frac{v_Y}{\mu_e}).$$

%The term $K(-\frac{1}{\mu_e}$ is of sort $E$. We will now try to construct this term from possible $Out$ terms of sort $E$. 
%In this case, the adversarial leaked set is $L=\{\mu_x,\mu_y\}$ and there are three rules that have a conclusion $Out(e)$ for a term $e$ of sort $E$:  the rule that allows the adversary to send a fresh variable $f$ of his choice, or the rule that allows the adversary to send the constants $1$ or $0$. 
%The indicator of $(-\frac{1}{\mu_e})$ is the indicator of the argument of $\mu_e$, i.e. $g^{xX_1+Y_1\mu_xa +Y_1}$.

\iffalse
Considering each coefficient, we get the equations, where $Y_2,Y_3\in \mathbb{Q}$:
\begin{align*}
& W_1 = f \\
& (1-X_2)\mu_xa = 0 \\
& -Y_1 = f*Y_2+1*Y_3
\end{align*}

This has as solution $$X_1=1, X_2=1,Y_1 = -fY_2+1*Y_3$$.
\fi
%Replugging this back in the last constraint we have to solve, $In(g^{xX_1+Y_1\mu_xa +Y_1})$ gives:
Consider first the constraint $In(g^{x+\mu_xa +v_Y})$. Using the $C_{var}$ on the rule \texttt{K(x) -> In(x)}, we know this must come from the fact: $$K(g^{x+\mu_xa +v_Y}).$$ The term is a non-variable term of sort $G$, so we apply the Rule \ref{rul:CPremK} to try to construct this term from possible $Out$ terms of sort $G$. In particular, there are three rules that have as conclusion an $Out$ fact: $Out(g^x)$ (Alice's key generation rule), $Out(g^a)$ (rule \texttt{InitiatorRole}), and $Out(g^{f})$ for a fresh adversarial $\mathit{FrE}$ variable $f$. 

So far, there are no rules in our constraint system $\Gamma$ that contain an action $K(e)$ for any exponent variable $e$. This means for any timepoint $i$ our current leaked set is $L^{\Gamma}_i = \emptyset$.  We hence need to consider all possible cases $L\in \mathcal{P}(\{x, \mu_x, a, v_Y,f\})$. For each of the above $E$ variables $e\in \{x, \mu_x, a, v_Y,f\}$ we consider whether $e\in L$ and in that case add the constraint $K(e)@j_e$. For all of the fresh variables we first try to solve these goals. 

Tamarin proves  $K(x)$ and $K(a)$ are not solvable, since both $x$ and $a$ are fresh and hence they are not unifiable (in the simplified $DH$ equational theory) with any root term of an $E$-argument appearing in $Out$ facts. This excludes the cases where $x\in L$ and $a\in L$. 

Tamarin finds a solution for $K(\mu_x)@j_{\mu_x}$. Recall that $\mu_x= \mu(g^x)$. Indeed, Tamarin first applies the rule that case splits whether the adversary learns the entire term, or first the argument and then applies the $\mu$ operator. 
We consider the latter case. This introduces the constraint $K(g^x)@k, k<j_{\mu_x}$. This is solvable by unifying the indicator of $g^x$ (which is again $g^x$, since Tamarin has already shown $x\in S$) with the $Out(g^x)$ conclusion coming from Alice's key generation rule. The equation generated is $X_1x +X_2 = x$, which, since $x$ is secret, has a unique solution $X_1=1$ and $X_2=0$. The constraints $K(1)$ and $K(0)$ are solved using the \texttt{zero} and \texttt{one} rules.

Tamarin also finds a solution for the $K(f)$ constraint (since the rule \texttt{FrNZE} also has an $Out(f)$ fact). Hence for now Tamarin has shown $\mu_x\in L, f \in L, x\notin L, a\notin L$.
 
 The variable $v_Y$ is of sort $E$ so we delay solving the constraint $K(v_Y)$ later. We hence only have 2 cases to consider: $L=\{\mu_x,f, v_Y\}$ or $L=\{\mu_x,f\}$, where the first case also introduces the  $K(v_Y)@j_{v_Y}$ constraint in the system. 

Recall that we have three rules whose conclusions contain $Out$ facts with arguments: $g^x$, $g^a$ and $g^f$.  For both possible $L$'s, we must consider all possibilities of matching the indicators of the term $g^{x+\mu_xa +v_Y}$ with the indicators of these $Out$ terms. We consider the case where $L=\{\mu_x,f, v_Y\}$. In this case $Ind_L(g^{v_Y}) = e_g$, so we don't need to further search for this indicator. Consider the case where we try to unify the remaining indicators $Ind_L(g^x)$ and $Ind_L(g^{\mu_xa})$ with the indicators of the terms $g^x$ and $g^a$. 

\begin{align*}
& g^{x}= g^x, \\
& g^{a} = g^a
%& g^{Z_3} = g^f.
\end{align*}
These terms are already equal, so we just generalize the $E$ variables in our system that are not modified by the substitution, in this case $v_Y\mapsto Y$.% (we do not introduce $Y$ variables). 
There are no secret variables appearing in the RHS terms $g^x$ and $g^a$ that are not already in the LHS terms $g^x, g^a$, and $g^{v_Y}$. We hence only introduce new variables $X_1$, $X_2$, and $X_3$ and we get the equation:
$$x+\mu_xa + Y = xX_1 + aX_2 + X_3.$$

We add the new constraints $K(X_i)@j_i$ for $i=1,\ldots,3$. 
Since $x, a\in S$, this corresponds to the equations
\begin{align*}
& X_1 = 1, \\
& X_2 = \mu_x, \\
& X_3 = Y.
\end{align*}
The solution is $X_3=Y$, where we replace $Y$ again with a free $E$-variable $v_Y$, and the introduced constraints we need to solve become: 
$$K(1)@j_1, K(\mu_x)@j_2, \text{ and } K(v_Y)@j_3.$$
$K(1)@j_1$ is solved using the \texttt{one} rule. 
%$K(v_Y)@j_4$ is the same as the $K(v_Y)@j_{v_Y}$ constraint already present in the system and hence can be dropped. %and  
%This $K(v_Y)@j_{v_Y}$ constraint can be solved by either one of the rules $\texttt{zero}$, $\texttt{one}$, or $\texttt{FrE}$. %, or by the rule \texttt{[K(m)]--[K(m)]->[In(m)]} and solving the premise $K(v_Y)$. 
We always solve goals of the form $K(v)@j$, with $v$ a variable of sort $E$, such as $K(v_Y)@j_3$, last. 

The constraint $K(\mu_x)@j_2$ is solved exactly like the $K(\mu_x)@j_{\mu_x}$ constraint solved above.  
%Recall that $\mu_x= \mu(g^x)$. We apply the rule that case splits whether the adversary learns the entire term, or first the argument and then applies the $\mu$ operator. 
%We consider the latter case. This introduces the constraint $K(g^x)@k, k<j$. This is solvable by unifying the indicator of $g^x$ (which is again $g^x$) with the $Out(g^x)$ conclusion coming from Alice's key generation rule. The equation generated is $X_1x +X_2 = x$, which, since $x$ is secret, has a unique solution $X_1=1$ and $X_2=0$. The constraints $K(1)$ and $K(0)$ are solved as before. 

Finally, let us solve the $!PubKey$ action fact. This cannot come from the honest key generation rule, as $-\frac{v_Y}{\mu_e}$ can never be matched by a fresh variable. 
It must hence come from the dishonest key generation. For that, we must solve that rule's premise
$$K(-\frac{v_Y}{\mu_e}).$$
The indicator of $-\frac{v_Y}{\mu_e}$ is $\mu_e$, and again this can be constructed by first constructing its argument $g^{x+\mu_xa +v_Y}$. Hence this goal is solved in the same manner as the previous one. 

At this point, the only constraint we have left is the constraint $K(v_Y)@j_3$ we have postponed before. By solving this goal, we solve the constraint system. Observe that the choice $v_Y=f$, obtained by solving the goal $K(v_Y)$ with the \texttt{FrNZE} rule, corresponds to the attack described by Kaliski. The choice $v_Y=0$, obtained by solving the goal $K(v_Y)$ with the \texttt{zero} rule, corresponds to the attack using a trivial public key described in the main body of the paper.

\subsection{Secrecy property}

We next illustrate how Tamarin proves the lemma
\begin{lstlisting}
lemma secrecyI:
 "All I R key #i. AgreeKeyI(I, R, key)@i & (not (Ex #k. Compromised(I)@k))& 
 (not (Ex #k. Compromised(R)@k))
  ==> not (Ex #i. K(key)@i)"
\end{lstlisting}
Again, the action fact $AgreeKeyI$ only appears in the rule
\texttt{InitiatorRole2}, hence $C_{\mathit{var}}$ instantiates \texttt{key} with 
 \begin{align*}
 key & = (g^{y}\cdot g^{b \mu(g^{y})})^{x+\mu(g^{x})a} \\
 & = g^{yx}\cdot g^{b\mu(g^{y})x} \cdot g^{y\mu(g^{x})a}\cdot g^{ab\mu(g^{y})\mu(g^{x})},
 \end{align*}

where $g^b$ is Bob’s public key and $g^y$ is the
ephemeral key supposedly sent by Bob (but adversarially controlled). 

Once we have added the \textit{i:InitiatorRole2} to our contraint system, Tamarin will immediately try to solve the premises of that rule (Tamarin's heuristics always prioritize protocol state premise goals). In particular we solve the \textit{Initiated} premise which in turn appears only in the \texttt{InitiatorRole} rule. Again, Tamarin immediately solves the premises of \texttt{InitiatorRole}, namely we choose the \textit{PubKey} premise. 
  
This time, there are two rules that generate a \textit{PubKey} fact: \texttt{GenKey} and \texttt{GenKeyCompromised}. Tamarin excludes the compromised case, as the lemma statement excludes \textit{Compromised(B)} from appearing in the trace. Hence $b$ is instantiated with a fresh variable $sk_B$.

Our key variable thus becomes:
 \begin{align*}
 key & = (g^{y}\cdot g^{sk_B \mu(g^{y})})^{x+\mu(g^{x})a} \\
 & = g^{yx}\cdot g^{sk_B\mu(g^{y})x} \cdot g^{y\mu(g^{x})a}\cdot g^{a*skB\mu(g^{y})\mu(g^{x})},
 \end{align*}

The only rule that contains an action fact $K(t)@i$, for $t$ a $DH$ term is the rule \texttt{K(x) -[K(x)]--> In(x)}.
Using the $C_{var}$, we add this rule to the system and need to solve its premise: $$K(g^{yx}\cdot g^{sk_B\mu(g^{y})x} \cdot g^{y\mu(g^{x})a}\cdot g^{a*sk_B\mu(g^{y})\mu(g^{x})}).$$ The term is a non-variable term of sort $G$, so we apply the Rule \ref{rul:CPremK}. Observe that at this point $\NoCanc(key)$ does \emph{not} hold, as $\mathit{y:E}$ is adversarial and a priori could be chosen to cancel out some of the root terms (consider for example $y \mapsto \dhInv(x)$). 
  
Tamarin tries to apply $\NoCanc$ anyway, keeping track that the only weaker properties that hold are 
\begin{align*}
& \NoCanc(g^{sk_B\mu(g^{y})x}, key) \text{ and,} \\
& \NoCanc(g^{sk_Ba\mu(g^{y})\mu(g^{x})}, key).
\end{align*}
At this point Rule \ref{rul:CPremK} determines all possible leaked sets. The exponents appearing in the term $key$ are: $\{x,y,a,sk_B, \mu(g^y), \mu(g^x)\}$.
As in previous example, Tamarin shows that $\mu(g^y)\in L$ and  $\mu(g^x)\in L$. It also shows that $sk_B\notin L$ and $a\notin L$.

Hence the two cases to consider are $L=\{\mu(g^y), \mu(g^x)\}$ and $L=\{\mu(g^y), \mu(g^x),y\}$, where in the second case we add the constraint $K(y)@j$, with $j<i$.

For both cases:
\begin{align*}
& Ind_L(g^{sk_B\mu(g^{y})x})= g^{sk_B*x} \\
& Ind_L(g^{sk_Ba\mu(g^{y})\mu(g^{x})}) = g^{sk_B*a}
\end{align*}

Tamarin will try to unify these indicators with all possible root terms of $Out$ facts from the protocol rules, in the simplified $DH$-equational theory. However, none of these unification queries will succeed, as neither $g^{sk_B*x}$ nor $g^{sk_B*a}$ is contained in any root term of any message, not even as argument of an encrypted message of which the adversary knows the key. 

This implies the adversary cannot learn these indicators (even just one would have been enough). Since these terms are not cancellable, this shows that the adversary cannot learn the whole term, and Tamarin marks this lemma as solved. 

\end{document}